\documentclass[11pt,a4paper]{article}
\pdfoutput=1
\usepackage{jheppub}
\usepackage[T1]{fontenc}
\usepackage{lmodern,microtype}
\usepackage{amsmath,amssymb,mathtools,amsthm,booktabs}
\newtheorem{theorem}{Theorem}[section]
\newtheorem{proposition}[theorem]{Proposition}
\newtheorem{lemma}[theorem]{Lemma}
\newtheorem{corollary}[theorem]{Corollary}
\theoremstyle{definition}

\newtheorem{conjecture}[theorem]{Conjecture}
\DeclareMathOperator{\Irr}{Irr}
\DeclareMathOperator{\Hom}{Hom}
\DeclareMathOperator{\ord}{ord}
\DeclareMathOperator{\rank}{rank}
\DeclareMathOperator{\Rad}{Rad}
\DeclareMathOperator{\Tr}{Tr}
\newcommand{\Fp}{\mathbb F_p}
\newcommand{\U}{\mathrm U(1)}
\newcommand{\cC}{\mathcal C}
\newcommand{\cD}{\mathcal D}
\newcommand{\DW}{\mathrm{DW}}
\newcommand{\WZW}{\mathrm{WZW}}
\hypersetup{pdftitle={Maximal Total Quantum Dimension at Bounded Rank in Dijkgraaf-Witten Theories},pdfauthor={Ce Shen},pdfkeywords={Dijkgraaf-Witten theories, total quantum dimension, categorical rank, topological entanglement entropy}}
\title{Maximal Total Quantum Dimension at Bounded Rank in Dijkgraaf-Witten Theories}
\author[a]{Ce Shen}
\affiliation[a]{Beijing Institute of Mathematical Sciences and Applications (BIMSA),
Beijing, China}
\emailAdd{shence@bimsa.cn}
\abstract{We investigate the maximal total quantum dimension at a bounded categorical rank cutoff $R$ in three-dimensional Dijkgraaf--Witten theories. This optimization problem physically governs the maximal state-optimized torus topological entanglement entropy at a fixed torus ground-state degeneracy, while providing a quantitative constraint on the search space required to classify these topological order phases. We analyze how topological twists (3-cocycles) and gauge-group structures independently govern this trade-off. For elementary-Abelian gauge groups at a fixed prime $p$, we prove that introducing cocycle twists elevates the maximum total quantum dimension from square-root growth $\Theta_p(R^{1/2})$ to linear growth $\Theta_p(R)$ in the rank cutoff, driven by the geometric protection of cyclic holonomies. For untwisted theories, we establish sharp envelopes for symmetric and alternating gauge groups, and derive group-theoretic bounds across almost-simple, solvable, and radical-free families. In particular, almost-simple groups and their direct products obey a quadratic-logarithmic upper envelope: $\log\mathcal{D} = O((\log R)^2)$. Finally, we compare these scales against WZW models and discuss the status of the general quadratic-logarithmic envelope conjecture.}
\keywords{Topological Field Theories, Anyons, Topological States of Matter}
\begin{document}
\maketitle

\section{Introduction}

Topological phases of matter in two spatial dimensions exhibit universal
forms of quantum order that transcend Landau's paradigm of local order
parameters and spontaneous symmetry breaking \cite{WenNiu,NayakEtAl}.
At zero temperature, these gapped quantum systems support fractionalized,
deconfined quasiparticle excitations known as anyons, possessing exotic
braiding statistics and a topological ground-state degeneracy that
depends on the underlying spatial manifold. From an entanglement
standpoint, intrinsic topological order is an immutable consequence of
long-range quantum entanglement, which cannot be created or destroyed
by shallow, local unitary quantum circuits \cite{ChenGuWen}.
Uncovering the mathematical and physical constraints that govern this
entanglement landscape is central to both fundamental quantum many-body
physics and the classification of topological quantum states.

At low energies, the algebraic structure of deconfined anyonic excitations
in $(2+1)$-dimensional bosonic topological phases is mathematically
formalized by a unitary modular tensor category (UMTC), denoted $\cC$
\cite{WenBosonic,NayakEtAl}. In this categorical framework, the simple
objects label the topologically distinct anyon types (or superselection
sectors), the fusion algebra dictates the multi-channel superposition
rules when anyons are brought together, and the associativity ($F$-symbols)
and braiding ($R$-symbols) maps characterize basis changes and spatial
exchanges of particles. Consistency of these physical operations is
guaranteed by the pentagon and hexagon polynomial equations.
Crucially, modularity embodies the nondegeneracy of topological braiding:
the vacuum sector is the unique anyon that braids trivially with all
excitations in the bulk. This structure provides a complete,
universal description of the bulk anyon sector, independent of microscopic
Hamiltonian details.

The same algebraic structures play a cornerstone role in two-dimensional
rational conformal field theory (RCFT). A rational chiral CFT possesses
finitely many primary fields with respect to its chiral algebra. The fusion
rules of these fields, together with the braiding and monodromy of conformal
blocks and the modular transformations of characters on the torus, mirror the
anyon braiding and modular data \cite{Verlinde,MooreSeiberg}. Under the usual
finiteness and regularity hypotheses ensuring modularity, the chiral
representation category is modular and, in the unitary setting, forms a
UMTC \cite{HuangModularity}. Nevertheless, a modular tensor category does not
uniquely specify a full conformal field theory: data such as the chiral
algebra, central charge, and conformal dimensions provide additional
physical information. In particular, tensoring a given CFT with a holomorphic
theory—which has trivial representation category—shifts the central charge
without altering the modular category \cite[Section 2.2]{TenerWang}.
Finiteness of categorical data at a given rank therefore does not imply
finiteness of full CFTs. Conversely, categorical modular data can serve as
generative input for constructing critical lattice models via competing anyon
condensation \cite{JiFactory}.

These intimate connections establish the classification of modular tensor
categories as a guiding problem for topological phases of matter and RCFT.
A natural grading for this classification is the \emph{categorical rank}
$r(\cC) = |\Irr(\cC)|$, which counts the number of distinct anyon species.
Physically, $r(\cC)$ represents the topological ground-state degeneracy
on the spatial torus $T^2$. At low rank, systematic classification programs
have mapped out allowed fusion rules and modular data, leveraging the modular
$S$- and $T$-matrices as strong numerical constraints \cite{RowellStongWang,NRWclassification}.
However, modular data alone do not provide a complete invariant: inequivalent
phases, such as twisted finite-group doubles, can share identical modular
matrices \cite{MignardSchauenburg}.

A central challenge in classifying topological phases at a bounded rank is the
potentially unbounded size of the underlying search space. Fixing the number
of anyon types $r(\cC) \le R$ does not, by itself, provide an explicit,
operational bound on how large the quantum dimensions $d_a$ or the integer
fusion multiplicities $N_{ab}^c$ can become. The rank-finiteness theorem of
Bruillard, Ng, Rowell, and Wang \cite[Theorem 3.1]{BNRW} proves that only
finitely many inequivalent UMTCs exist at any fixed rank. Their argument bounds
the order of the $T$ matrix and uses an $S$-unit equation to establish finiteness. However, they do not provide explicit bounds on the quantum dimensions or fusion multiplicities. The remaining challenge is therefore
obtaining sharp,
physically informative envelopes and manageable bounds for numerical or
combinatorial classification searches.

The \emph{total quantum dimension} $\cD(\cC) = \sqrt{\sum_a d_a^2}$ provides
an ideal physical and mathematical handle for this problem. While the rank
$r(\cC)$ counts the number of sectors, $\cD(\cC)$ captures the cumulative
topological weight of the phase, heavily weighting non-Abelian anyons
($d_a > 1$) over Abelian excitations ($d_a = 1$). Because quantum dimensions
satisfy $d_a d_b = \sum_c N_{ab}^c d_c$, a universal upper bound on $\cD$
immediately controls every individual quantum dimension $d_a$ and imposes
a finite ceiling on all integer fusion multiplicities:
\begin{equation}
 0 \le N_{ab}^c \le \frac{d_a d_b}{d_c} \le \cD^2.
 \label{eq:intro-fusion-cutoff}
\end{equation}
By Ocneanu rigidity \cite[Theorem 2.28]{ENO}\cite[Theorem 3.11]{BNRW},
bounding $N_{ab}^c$ restricts the classification to finitely many candidate
fusion rings, each admitting only finitely many associator and braiding
solutions up to equivalence.

This leads directly to the extremal question investigated here:
\emph{What is the maximal total quantum dimension that a topological phase can
support at a bounded categorical rank cutoff $R$?}
To accommodate families whose ranks form sparse discrete sets, we formulate
this optimization across all theories below an external cutoff $R$. Following
Ref.~\cite{ShenWZW}, we consider the logarithmic envelope
\begin{equation}
 F(R) = \sup_{\substack{\cC\ {\rm UMTC}\\r(\cC)\le R}} 2\log\cD(\cC).
 \label{eq:intro-UMTC-envelope}
\end{equation}
The quantity $F(R)$ admits a direct physical interpretation in terms of quantum
entanglement. On a spatial torus partitioned into two cylinders by two circular
boundaries of lengths $L_1$ and $L_2$, the universal topological entanglement
entropy (TEE) contains a constant sub-leading correction $\Gamma_{T^2}(\psi)$
that depends on the chosen ground state $|\psi\rangle$ \cite{DongEtAl,ZhangEtAl}.
Optimizing over all normalized ground states in the torus Hilbert space reveals
that the maximum TEE is attained by any definite Abelian flux state, including
the vacuum, yielding
$\max_\psi \Gamma_{T^2}(\psi) = 2\log \cD(\cC)$ \cite{ShenWZW}.
Thus, $F(R)$ quantifies the maximal state-optimized torus TEE compatible with a
given bound on the torus ground-state degeneracy.

While bounding $F(R)$ across the entire universe of UMTCs is an ambitious
frontier, tractable and physically prominent families of theories allow this
trade-off to be examined systematically. In Ref.~\cite{ShenWZW}, the envelope
was analyzed for WZW modular tensor categories, uncovering a sharp
quadratic-logarithmic growth $F_{\WZW}(R) \sim \frac{7\zeta(3)}{4\pi^2} (\log_2 R)^2$.
In the present work, we turn to $(2+1)$-dimensional Dijkgraaf--Witten (DW)
theories—discrete gauge theories specified by a finite gauge group $G$ and
a topological action (twist) defined by a 3-cocycle class $[\omega] \in H^3(G,\U)$
\cite{DW,DPR}. The anyonic excitations of a DW theory are described by the
modular representation category of the twisted quantum double:
\begin{equation}
 \cC(G,\omega) = \operatorname{Rep}D^\omega(G).
 \label{eq:DWcategory}
\end{equation}
A remarkable feature of DW theories is that their total quantum dimension is
strictly determined by the gauge group order, $\cD(G,\omega) = |G|$, completely
independent of the twist $\omega$. In contrast, the categorical rank $r_\omega(G)$—the
torus ground-state degeneracy—is sensitive to both the algebraic
structure of $G$ and the topological cocycle $\omega$. The central problem
therefore becomes an extremal group-theoretic optimization: maximizing the
group order $|G|$ subject to a rank ceiling $r_\omega(G) \le R$. We define the
corresponding envelopes for DW theories:
\begin{equation}
 F_{\DW}(R) = \sup_{\substack{G\ {\rm finite},\,[\omega]\in H^3(G,\U)\\
 r_\omega(G)\le R}} 2\log|G|,
 \qquad
 F_{\DW}^{(1)}(R) = \sup_{\substack{G\ {\rm finite}\\
 r_1(G)\le R}} 2\log|G|,
\end{equation}
where $F_{\DW}^{(1)}$ denotes the untwisted ($\omega = 1$) envelope.

This optimization reveals two distinct physical mechanisms:
\begin{enumerate}
 \item \emph{Cocycle twisting at fixed gauge group.}
 For a fixed group \(G\), introducing a nontrivial \(3\)-cocycle twist \(\omega\) preserves the total quantum dimension \(\mathcal D=|G|\), but can substantially reduce the number of anyon types. For each magnetic flux \(g\), the charge labels become irreducible projective representations of the centralizer \(C_G(g)\), with 2-cocycle \(\alpha_g\) of the centralizer \(C_G(g)\) induced by \(\omega\). Consequently, only the \(\alpha_g\)-regular conjugacy classes of \(C_G(g)\) contribute to the projective character theory, and the number of charge sectors over a given flux can decrease. Every magnetic-flux conjugacy class remains present.
 For elementary-Abelian gauge groups $G = (\mathbb{Z}_p)^n$ at a fixed prime $p$,
 we prove that optimal 3-cocycles transform the asymptotic envelope from
 \begin{equation}
  F_p^{(1)}(R) = \log R + O_p(1) \; (\text{untwisted})
  \quad \Longrightarrow \quad
  F_p(R) = 2\log R + O_p(1) \; (\text{twisted}).
  \label{eq:intro-laws}
 \end{equation}
 Correspondingly, the maximal total quantum dimension changes from
 square-root growth, $\cD_{\max,p}^{(1)}(R) = \Theta_p(R^{1/2})$,
 to linear growth, $\cD_{\max,p}(R) = \Theta_p(R)$.
 The mechanism preventing further compression is the geometric phenomenon of
 \emph{cyclic protection}: commuting holonomies that generate cyclic subgroups
 cannot acquire non-trivial topological phases under any twist, establishing a
 universal floor on the anyon count.
 \item \emph{Gauge-group architecture.}
 In the untwisted setting, the trade-off is governed entirely by group structure.
 For permutation gauge groups—the symmetric groups $S_n$ and alternating
 groups $A_n$—we establish the sharp asymptotic envelope
 \begin{equation}
  F_S^{(1)}(R) \sim F_A^{(1)}(R) \sim A_S (\log R)^{3/2}\log\log R,
  \qquad A_S = \frac{2}{\sqrt{3\zeta(2)\zeta(3)}},
  \label{eq:intro-permutation}
 \end{equation}
 via an asymptotic inversion of the commuting-triple generating function \cite{BF,BFH}.
 Furthermore, we prove that almost-simple gauge groups and their finite direct
 products obey an upper envelope $F_{\mathrm{a.s.}}^{(1)}(R) \le C(\log R)^2$.
 We also establish structural bounds across solvable, radical-free, and general
 finite groups (summarized in Table~\ref{tab:envelopes}), and analyze how group
 extensions and quotient losses restrict the synthesis of extremal theories.
\end{enumerate}

The charge--flux description and projective-character rank formula are
standard foundations \cite{DPR,HW,Willerton}. Our analysis uses them to
extract a cyclic protection index and an exact first-moment identity, and
to translate group-theoretic estimates into rank-cutoff envelopes.
The finite-geometric constructions \cite{DSsingular,DSforms}, permutation
enumerations \cite{BF,BFH,FS}, and class-number bounds used below are
credited at their points of use. This separates the optimization results
from the established inputs on which they depend.

The remainder of this paper is structured as follows.
Section~\ref{sec:setup} reviews the topological and quantum-double foundations,
details the charge--flux decomposition and torus entanglement optimization, and
establishes the family-by-family taxonomy in Table~\ref{tab:envelopes}.
Section~\ref{sec:twisting} focuses on the physics of topological twists: we
derive the cyclic protection index $\Xi(G)$, map the rank minimization over
elementary-Abelian groups to the geometry of alternating trilinear forms, and
prove the sharp linear envelope $\cD_{\max,p}(R) = \Theta_p(R)$ via the first-moment
method.
Section~\ref{sec:structure} explores non-Abelian gauge architectures, proving the
sharp permutation envelopes, the almost-simple quadratic-logarithmic bound,
quotient and central-extension constraints, and solvable/radical-free refinements.
Section~\ref{sec:outlook} contrasts these findings with chiral WZW models,
formulates the universal quadratic-logarithmic envelope conjecture, and highlights
open directions.
Appendix~\ref{app:technical} provides technical derivations for the alternating
phase, rank censuses, unitriangular groups, affine actions, wreath products, and
envelope variational identities.

\section{Topological Setup and the DW Cutoff Problem}
\label{sec:setup}

We begin by establishing the topological and quantum-double framework
governing our optimization, connecting bulk categorical invariants with the
geometry of torus entanglement. We then detail the charge--flux decomposition
of Dijkgraaf--Witten theories, showing how the rank-cutoff problem maps onto
an extremal question in finite group theory and cohomology.

\subsection{Rank, total quantum dimension, and torus entanglement}
\label{subsec:physical-background}

For a $(2+1)$-dimensional topological phase described by a UMTC $\cC$,
the quantities needed here are determined by its simple superselection
sectors $\Irr(\cC)$ and quantum dimensions $d_a$, which can be read from
the unitary modular data $S$ and $T$ \cite{WenBosonic,NayakEtAl}.
These data need not determine the full category.
Here, $0 \in \Irr(\cC)$ denotes the vacuum sector.
The categorical rank $r(\cC) = |\Irr(\cC)|$ counts the distinct anyon species,
while the total quantum dimension $\cD(\cC)$ aggregates their collective weights:
\begin{equation}
 r(\cC) = |\Irr(\cC)|, \qquad
 \cD(\cC)^2 = \sum_{a \in \Irr(\cC)} d_a^2, \qquad
 d_a = \frac{S_{0a}}{S_{00}}, \qquad
 S_{00} = \frac{1}{\cD(\cC)}.
 \label{eq:physical-invariants}
\end{equation}
In topological quantum field theory (TQFT), the categorical rank and total
quantum dimension govern the partition functions on closed 3-manifolds:
\begin{equation}
 Z_{\cC}(T^3) = \dim\mathcal{H}_{\cC}(T^2) = r(\cC),
 \qquad
 Z_{\cC}(S^3) = \frac{1}{\cD(\cC)}.
 \label{eq:partition-invariants}
\end{equation}
Imposing an upper bound on categorical rank $r(\cC) \le R$ physically bounds
the topological ground-state degeneracy (GSD) on a spatial 2-torus.
Simultaneously, maximizing $\cD(\cC)$ corresponds to minimizing the vacuum
amplitude on $S^3$.

The total quantum dimension also admits a direct information-theoretic
interpretation through the topological entanglement entropy (TEE)
\cite{KitaevPreskill,LevinWen}. Consider a spatial 2-torus partitioned into two
cylindrical subregions $A$ and $\bar{A}$ by two circular cuts of lengths $L_1$
and $L_2$, as illustrated in figure~\ref{fig:torus-bipartition}.
The torus Hilbert space $\mathcal{H}_{\cC}(T^2)$ admits a canonical basis
$\{|a\rangle\}_{a \in \Irr(\cC)}$ corresponding to threading definite anyon
flux $a$ along the non-contractible cycle threading the cylinders.
For an arbitrary normalized ground state $|\psi\rangle = \sum_a \psi_a |a\rangle$,
with occupation probabilities $p_a = |\psi_a|^2$, the von Neumann entanglement
entropy in the topological scaling limit takes the form
$S_A = \alpha(L_1 + L_2) - \Gamma_{T^2}(\psi) + o(1)$, where $\alpha$ is a
non-universal boundary-area coefficient \cite{DongEtAl,ZhangEtAl}.
The universal subleading correction is
\begin{equation}
 \Gamma_{T^2}(\psi)
 = 2\log\cD -
 \left(2\sum_a p_a\log d_a - \sum_a p_a\log p_a\right).
 \label{eq:torus-tee}
\end{equation}
Because unitarity dictates $d_a \ge 1$ and the Shannon entropy $-\sum_a p_a \log p_a$
is non-negative, the bracketed expression is non-negative.
Maximizing the torus TEE across all normalized states in the torus ground-state
manifold yields
\begin{equation}
 \max_{\substack{\psi \in \mathcal{H}_{\cC}(T^2)\\\|\psi\|=1}}
 \Gamma_{T^2}(\psi) = 2\log\cD(\cC).
 \label{eq:state-maximum}
\end{equation}
This state-optimized torus TEE is saturated precisely when the system resides in
a definite Abelian flux state; the vacuum-flux state $|\psi\rangle = |0\rangle$
serves as a canonical maximizer. The prefactor of two reflects the presence of
two disjoint circular boundaries in the cylindrical bipartition.
(In contrast, a planar disk bipartition hosts a single boundary, giving
$\Gamma_{\rm disk} = \log \cD$ \cite{KitaevPreskill,LevinWen}.)
Complementary treatments of torus entanglement use string-net wavefunctions
\cite{LuoHuWuTorus}, edge theories \cite{WenMatsuuraRyu}, and torus-knot
bipartitions \cite{LoChang}. We keep the bipartition and entropy prescription
fixed; choices of subsystem algebra in lattice gauge theory constitute a
separate issue \cite{HungWanGaugeEntropy}.

The entanglement cuts here are not physical gapped boundaries. The latter
require additional boundary data, as made explicit in the Hamiltonian
constructions of Refs.~\cite{BullivantHuWanBoundary,HuEtAlBoundary}.
Anyon condensation provides a complementary description of such boundaries
and phase transitions \cite{BaisSlingerland,KongAnyon,KongErratum,LouShenChenHung}.
It can modify entanglement in systems with physical boundaries or defects
\cite{ChenHungLiWan,LouShenHungI,ShenLouHungII,ShenHungDefect}; the optimization
in \eqref{eq:state-maximum}, by contrast, concerns a closed spatial manifold.

\begin{figure}[htbp]
 \centering
 \includegraphics[width=0.75\linewidth]{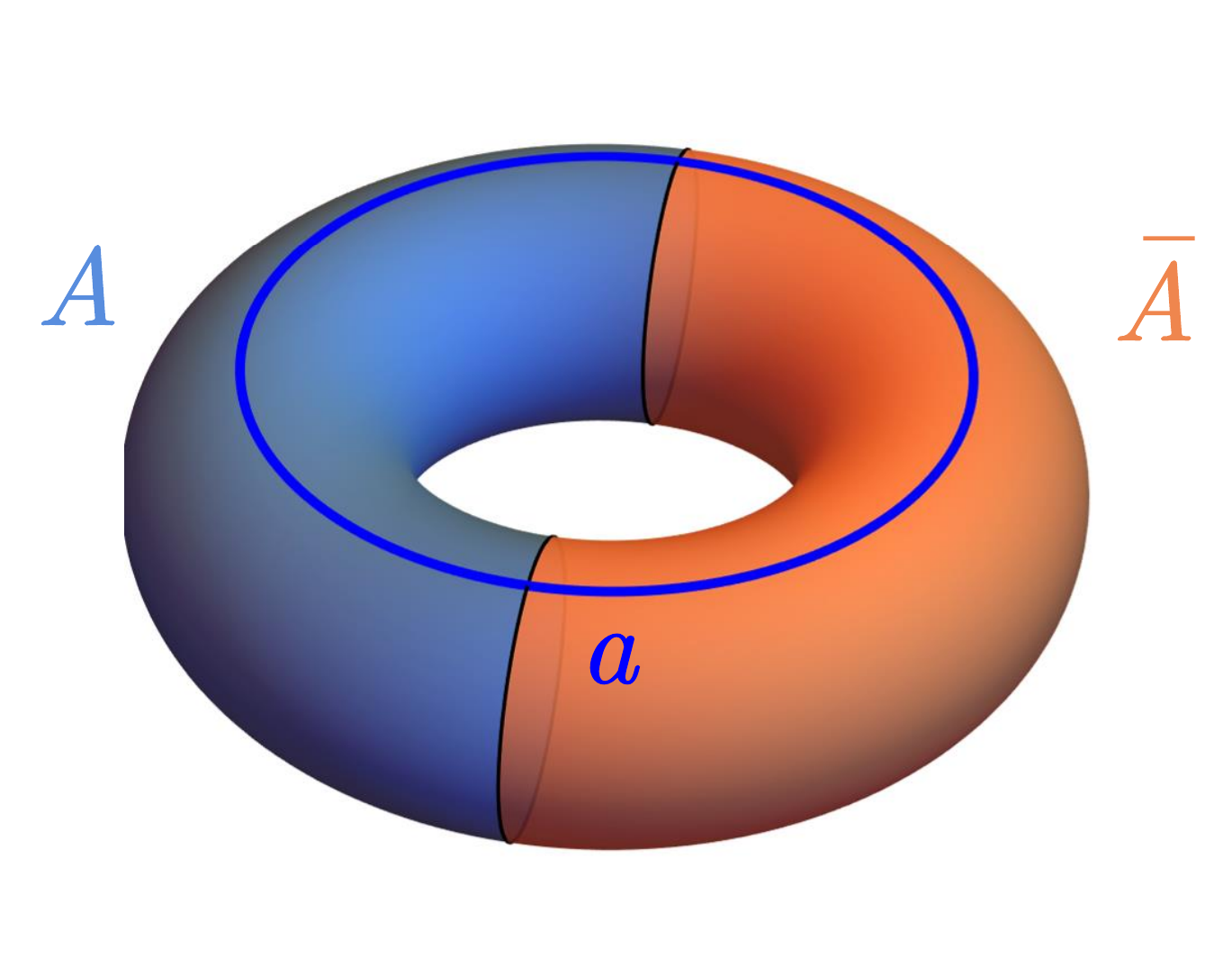}
 \caption{Cylindrical bipartition of a spatial 2-torus into regions $A$ and $\bar{A}$.
 The entanglement cut consists of two disjoint circles of lengths $L_1$ and $L_2$.
 The state $|a\rangle$ corresponds to threading an anyon flux label $a$ through the
 cylinder, giving rise to the universal TEE in \eqref{eq:torus-tee}.
 (Reproduced from Ref.~\cite{ShenWZW}.)}
 \label{fig:torus-bipartition}
\end{figure}

For any specified collection $\mathfrak{F}$ of topological phases, the maximal
achievable total quantum dimension and its corresponding logarithmic envelope
are defined by
\begin{equation}
 \cD_{\max,\mathfrak{F}}(R)
 = \sup_{\substack{\cC \in \mathfrak{F}\\r(\cC) \le R}} \cD(\cC),
 \qquad
 F_{\mathfrak{F}}(R) = 2\log\cD_{\max,\mathfrak{F}}(R).
 \label{eq:family-envelope}
\end{equation}
Thus, $F_{\mathfrak{F}}(R)$ measures the maximal state-optimized torus TEE
attainable within the family $\mathfrak{F}$ under a prescribed torus GSD bound $R$.
Throughout this paper, $R \ge 1$, all logarithms are natural unless explicitly
indexed, and $q_R = \log_2 R$ denotes the binary logarithm.

\subsection{Finite gauge theory and charge--flux sectors}
\label{subsec:DW-background}

In $(2+1)$-dimensional Dijkgraaf--Witten theory, the dynamical gauge fields are
flat connections with values in a finite group $G$ \cite{DW,Willerton}.
A flat connection on a 2-manifold assigns a group holonomy to each closed loop,
and gauge transformations act by overall conjugation.
The topological action is specified by a normalized 3-cocycle $\omega: G^3 \to \U$,
representing a cohomology class $[\omega] \in H^3(G,\U)$.
The 3-cocycle satisfies the non-Abelian pentagon condition
\begin{equation}
 \omega(b,c,d)\,\omega(a,bc,d)\,\omega(a,b,c)
 = \omega(ab,c,d)\,\omega(a,b,cd),
 \label{eq:background-cocycle}
\end{equation}
along with the normalization $\omega(a,b,c) = 1$ whenever any argument is the
group identity $1 \in G$.
Shifting $\omega$ by a 3-coboundary $\delta\beta$ merely corresponds to a gauge-invariant
rephasing of the path integral, leaving physical observables invariant.
In the Hamiltonian formulation, these phases are realized by Kitaev quantum double
models \cite{KitaevQuantumDouble} and their cocycle-twisted generalizations
\cite{HW,BuerschaperTwisted}. The related string-net framework
\cite{LevinWenStringNet} describes doubled topological phases, with the full
charge and dyon excitation spectrum accessible in extended Levin--Wen models
\cite{HuGeerWuDyons}.

\subsubsection{Charge--flux labels and quantum dimensions}
The simple anyonic excitations in a DW theory are classified by gauge-invariant
pairs of magnetic flux and electric charge \cite{DPR,HW}.
A magnetic flux sector is labeled by a conjugacy class $[g] = \{xgx^{-1} : x \in G\}$.
Fixing a representative flux $g \in G$, the residual gauge symmetry preserving this
holonomy is the centralizer subgroup $C_G(g) = \{x \in G : xg = gx\}$.
The electric charge carried by the excitation is an irreducible representation
$\pi$ of $C_G(g)$, generally \emph{projective} in the presence of a twist
\cite{DPR,HW}, satisfying
\begin{equation}
 \pi(x)\pi(y) = \alpha_g(x,y)\,\pi(xy), \qquad x,y \in C_G(g),
 \label{eq:background-projective}
\end{equation}
where the centralizer 2-cocycle $\alpha_g \in Z^2(C_G(g),\U)$ is induced by $\omega$:
\begin{equation}
 \alpha_g(x,y) = \frac{\omega(g,x,y)\,\omega(x,y,g)}{\omega(x,g,y)}.
 \label{eq:alpha}
\end{equation}
Ordinary representations occur when the multiplier is trivial; in particular,
$\alpha_1=1$ in the identity flux sector for every normalized $\omega$.
The anyons of the twisted quantum double $\cC(G,\omega) = \operatorname{Rep}D^\omega(G)$
are therefore indexed by pairs $([g],\pi)$, where $\pi \in \Irr_{\alpha_g}(C_G(g))$
is an irreducible $\alpha_g$-projective representation.
The quantum dimension of the sector $([g],\pi)$ is given by the product of its
magnetic orbit size and electric representation dimension:
\begin{equation}
 d_{([g],\pi)} = |[g]|\dim\pi = [G : C_G(g)]\dim\pi.
 \label{eq:anyon-dimensions}
\end{equation}
Because the sum of squared dimensions of all irreducible $\alpha_g$-projective
representations of $C_G(g)$ equals the group order $|C_G(g)|$ \cite{Karpilovsky},
the total quantum dimension evaluates to
\begin{equation}
 \cD(G,\omega)^2
 = \sum_{[g]} |[g]|^2 \sum_{\pi} (\dim\pi)^2
 = \sum_{[g]} |[g]|^2 |C_G(g)|
 = |G| \sum_{[g]} |[g]|
 = |G|^2.
 \label{eq:DWdimension}
\end{equation}
Hence, $\cD(G,\omega) = |G|$ is an exact topological invariant of the gauge group,
entirely independent of the twist $\omega$.
Twisting can change the number and dimensions of projective charges in each
flux sector, while keeping $\cD$ fixed. Each twisted centralizer algebra is
nonzero and semisimple, so every flux conjugacy class supports at least one
charge sector. Explicit modular-data formulas and computations for these
categories are developed in Refs.~\cite{CosteGannonRuelle,GruenMorrison}.

\subsubsection{Torus states and commuting holonomies}
The relation between flat connections and ground states follows from the
Hamiltonian realization of the untwisted theory ($\omega=1$). On a lattice,
each oriented edge carries a group element, and the ordered product along
a closed loop is its holonomy. The quantum-double Hamiltonian can be written
as a sum of mutually commuting projectors,
\[
 H=\sum_v(1-A_v)+\sum_f(1-B_f).
\]
Here $B_f$ projects onto trivial holonomy around the face $f$, while $A_v$
averages over gauge transformations at the vertex $v$. A ground state
satisfies $B_f=1$ and $A_v=1$ everywhere: it has no local magnetic or electric
excitations. Thus flatness alone is not sufficient; the wavefunction must
also be gauge invariant \cite{KitaevQuantumDouble,HW}.

Flatness means that every contractible loop has identity holonomy, but
noncontractible loops can retain nontrivial global holonomies. On the
spatial torus $T^2$, choose its two fundamental cycles with a common
basepoint and denote their holonomies by $g$ and $h$. Their commutator loop
is contractible, so
\[
 ghg^{-1}h^{-1}=1,\qquad gh=hg.
\]
After fixing the local gauge freedom, flat configurations are therefore
represented by commuting pairs $(g,h)$. The residual gauge transformation
at the basepoint acts by simultaneous conjugation,
\[
 U_x|g,h\rangle=|xgx^{-1},xhx^{-1}\rangle,\qquad x\in G.
\]
An orbit is the set of all pairs related by this action; its members are
gauge-equivalent representatives of the same configuration.

In this reduced description, a wavefunction has amplitudes $\psi(g,h)$ on
commuting pairs. Gauge invariance imposes
\[
 \psi(xgx^{-1},xhx^{-1})=\psi(g,h).
\]
The amplitude is therefore constant on each orbit, leaving exactly one
independent amplitude per orbit. More explicitly, with an orthonormal
configuration basis, each orbit $\mathcal O$ gives the invariant state
\[
 |\mathcal O\rangle=\frac{1}{\sqrt{|\mathcal O|}}
 \sum_{(g,h)\in\mathcal O}|g,h\rangle.
\]
These states are mutually orthogonal and span the ground-state space.
Consequently, the torus ground-state degeneracy is the number of
simultaneous conjugacy orbits of commuting pairs, rather than the number
of individual pairs \cite{HW,Willerton}.

To count the orbits, first fix a representative $g$ of each conjugacy
class $[g]$. The second holonomy lies in $C_G(g)$, and the remaining gauge
transformations conjugate it within $C_G(g)$. There are therefore
$k(C_G(g))$ choices, giving
\begin{equation}
 r_1(G) = \sum_{[g]} k(C_G(g)),
 \label{eq:chargeflux-physical}
\end{equation}
where $k(H)$ denotes the number of ordinary conjugacy classes (and ordinary
irreducible representations) of a finite group $H$.
This also equals the number of anyon labels $([g],\pi)$ described above.
The orbit basis and the definite-anyon-flux basis are different bases of
the same torus ground-state space, related by a character transform;
an individual orbit state need not have definite anyon flux.
Applying Burnside's lemma to the simultaneous conjugation action on commuting pairs
introduces a third commuting holonomy:
\begin{equation}
 h_3(G) := |\Hom(\mathbb{Z}^3,G)|, \qquad
 r_1(G) = \frac{h_3(G)}{|G|}, \qquad
 P_3(G) := \frac{h_3(G)}{|G|^3}.
 \label{eq:definitions}
\end{equation}
Here, $h_3(G)$ counts ordered commuting triples $(g,h,x) \in G^3$, and $P_3(G)$ is
the probability that three uniformly chosen group elements mutually commute.
In an untwisted theory, $r_1(G)$ is the torus GSD.
In a twisted theory, gauge transformations acquire cocycle-dependent phases,
so the equal-amplitude orbit construction need not survive the gauge projection;
the precise criterion for an orbit to survive is derived in Section~\ref{sec:twisting}.

To contrast the distribution of quantum dimensions, consider the non-Abelian
gauge group $S_3$ in the untwisted theory. The conjugacy classes are the identity
$[1]$ with $C_{S_3}(1) \cong S_3$, transpositions $[(12)]$ with $C_{S_3}((12)) \cong C_2$, and 3-cycles $[(123)]$ with $C_{S_3}((123)) \cong C_3$.
Their class numbers are $k(S_3) = 3$, $k(C_2) = 2$, and $k(C_3) = 3$, giving
$r_1(S_3) = 3 + 2 + 3 = 8$ anyon sectors, with total quantum dimension $\cD(S_3,1) = 6$.
The individual quantum dimensions are
\begin{equation}
 (d_a) = (\underbrace{1, 1, 2}_{\text{flux } 1}; \;
          \underbrace{3, 3}_{\text{flux } (12)}; \;
          \underbrace{2, 2, 2}_{\text{flux } (123)}),
 \qquad \sum_a d_a^2 = 36.
\end{equation}
In comparison, an Abelian gauge group of the same order, such as $C_6$, possesses
$r_1(C_6) = 36$ strictly Abelian sectors ($d_a = 1$), yielding the identical
$\cD = 6$. Non-Abelian gauge structures thus concentrate a large total quantum
dimension into significantly fewer, non-Abelian anyon species.

\subsection{Cutoff envelopes and the landscape of gauge families}
\label{subsec:cutoff-background}

The rigidity $\cD(G,\omega) = |G|$ transforms our physical optimization problem
into bounding group order against categorical rank.
The unrestricted Dijkgraaf--Witten envelope and the fixed-prime elementary-Abelian
envelope are given respectively by
\begin{align}
 F_{\DW}(R)
 &= \sup_{\substack{G\ {\rm finite},\,[\omega]\in H^3(G,\U)\\
                   r_\omega(G)\le R}} 2\log|G|,
 \label{eq:DWenvelope}\\
 F_p(R)
 &= \sup_{\substack{n\ge 0,\,[\omega]\in H^3(\Fp^n,\U)\\
                   r_\omega(\Fp^n)\le R}} 2n\log p.
 \label{eq:penvelope}
\end{align}
Restricting to trivial twists ($\omega = 1$) defines the untwisted envelopes
$F_{\DW}^{(1)}(R)$ and $F_p^{(1)}(R)$.
Importantly, our global finiteness theorem (Theorem~\ref{thm:global}) establishes
that bounding rank $r_\omega(G) \le R$ restricts $|G|$ to a finite range.
Because there are only finitely many groups of bounded order and $H^3(G,\U)$ is finite
for every finite group, the supremum in \eqref{eq:DWenvelope} is always attained as
a true maximum.
This is a finite set of presentations $(G,[\omega])$, not a unique labeling
of UMTCs: different presentations can give braided-equivalent categories.
Moreover, an asymptotic bound with an unspecified constant is not by itself
an executable numerical cutoff for enumeration.

To probe the structural origins of the envelope, we systematically evaluate
$F_{\mathfrak{G}}^{(1)}(R)$ and $F_{\mathfrak{G}}(R)$ across specific algebraic
classes $\mathfrak{G}$:
\begin{itemize}
 \item \emph{Elementary-Abelian groups} $(\mathbb{Z}_p)^n$: Linear vector spaces
 over $\mathbb{F}_p$, where cocycle twists map directly to alternating trilinear forms.
 \item \emph{Permutation groups} $S_n$ and $A_n$: Combinatorial gauge groups whose
 commuting holonomies correspond to $\mathbb{Z}^2$-permutation sets.
 \item \emph{Almost-simple groups}: Finite groups $S \le G \le \operatorname{Aut}(S)$
 where $S$ is a non-Abelian simple group, along with their direct products.
 \item \emph{Solvable groups}: Groups constructed by successive Abelian extensions,
 parameterized by their derived length $d$.
 \item \emph{Radical-free groups}: Groups possessing no non-trivial solvable normal
 subgroups ($\mathrm{Rad}(G) = 1$).
\end{itemize}

Table~\ref{tab:envelopes} compiles our rigorous envelope bounds across these families.
A two-sided envelope bound requires two complementary analyses: establishing a
\emph{lower bound} on rank in terms of group order (which produces an \emph{upper}
envelope on $F(R)$), and exhibiting an explicit sequence of theories with controlled
rank (which provides a realizable \emph{lower} envelope).
In the subsequent sections, we unfold the physical and algebraic mechanisms
underlying each row in this hierarchy.

\begin{table}[t]
\centering
\small
\renewcommand{\arraystretch}{1.15}
\caption{Rigorous envelopes for maximal $2\log\cD$ at categorical rank cutoff $R$
as $R\to\infty$. Fixed parameters ($p$, $d$, or $\varepsilon$) are indicated in
the asymptotic notation. All bounds are derived from the specified theorems and
external inputs cited in the text.}
\label{tab:envelopes}
\begin{tabular}{@{}p{.34\linewidth}p{.43\linewidth}p{.17\linewidth}@{}}
\toprule
Gauge-group class and twists & Envelope $F(R) = 2\log\cD_{\max}(R)$ & Source\\
\midrule
$(\mathbb Z_p)^n$, fixed $p$, any twist
& $F_p(R) = 2\log R + O_p(1)$ & Cor.~\ref{cor:envelope}\\
$(\mathbb Z_p)^n$, fixed $p$, untwisted
& $F_p^{(1)}(R) = \log R + O_p(1)$ & Cor.~\ref{cor:envelope}\\[3pt]
$S_n$ or $A_n$, untwisted
& $F_S^{(1)}(R), F_A^{(1)}(R) \sim A_S(\log R)^{3/2}\log\log R$
& Cor.~\ref{cor:symmetric}\\
Almost-simple \& direct products, untwisted
& $F_{\mathrm{a.s.}}^{(1)}(R) \le C(\log R)^2$
& Thm.~\ref{thm:almostsimple}\\[3pt]
Derived length $\le d$, untwisted
& $F_{\mathrm{dl}\le d}^{(1)}(R) \le 2\alpha_d^{-1}\log R$
& Prop.~\ref{prop:derived}\\
Nilpotency class $\le 2$, untwisted
& $F_{\mathrm{class}\le 2}^{(1)}(R) \le 2\log R$
& Prop.~\ref{prop:derived}\\[3pt]
Solvable, untwisted
& $F_{\mathrm{solv}}^{(1)}(R) = O(R\log\log R)$
& Thm.~\ref{thm:solvable}\\
Solvable, any twist
& $F_{\mathrm{solv}}(R) = O(R\log R)$
& Thm.~\ref{thm:solvable}\\
$\mathrm{Rad}(G)=1$, any twist
& $F_{\mathrm{Rad}=1}(R) = O_\varepsilon((\log R)^{3+\varepsilon})$
& Cor.~\ref{cor:radicalfree}\\[3pt]
All groups, untwisted
& $F_{\DW}^{(1)}(R) = O_\varepsilon(R(\log R)^{2+\varepsilon})$
& Thm.~\ref{thm:global}\\
All groups, any twist
& $F_{\DW}(R) = O_\varepsilon(R(\log R)^{3+\varepsilon})$
& Thm.~\ref{thm:global}\\
\bottomrule
\end{tabular}
\end{table}

The constants in the fixed-derived-length row are
\[
 \alpha_1=2,\qquad
 \alpha_d=(3\cdot2^{d-2}-1)^{-1}\quad(d\geq2).
\]
Neither solvability without a derived-length cutoff nor the absence of a
solvable radical currently yields the desired quadratic-logarithmic
bound by these arguments.

There are two directions to keep separate. A \emph{lower} bound on
rank in terms of $|G|$ limits the group orders admitted at a fixed $R$
and hence gives an \emph{upper} envelope bound.
Conversely, an explicit family with controlled rank supplies
admissible theories and therefore a \emph{lower} envelope bound.
A sharp restricted law needs both directions, together with control
of the gaps between successive admissible ranks.

Table~\ref{tab:envelopes} is a map of the results, not a list of
saturating constructions. The elementary-Abelian and permutation rows
give sharp restricted scales. The almost-simple row gives only an
upper bound. The final rows establish finiteness for all finite
gauge groups, but remain far above the conjectured $(\log R)^2$ scale.
The two following sections develop, respectively, the twisting and
group-structure mechanisms behind this comparison.

\section{Cocycle Twisting and Elementary-Abelian Envelopes}
\label{sec:twisting}
\label{sec:protected}

At a fixed gauge group $G$, introducing a 3-cocycle twist $\omega$ preserves
the total quantum dimension $\cD(G,\omega) = |G|$, but can drastically reduce
the categorical rank by changing projective charge multiplicities and projecting
out nonregular commuting-pair orbits. This compares distinct topological theories,
not a perturbative splitting of degeneracies within one phase.
In this section, we analyze the fundamental limits of this compression mechanism.
We first identify a topological obstruction—the phenomenon of \emph{cyclic protection}—which
guarantees that certain commuting holonomies survive every possible twist,
imposing a rigorous lower bound on anyon count.
We then specialize to elementary-Abelian gauge groups $V = \mathbb{F}_p^n$, where
the minimization of rank over all cocycles maps onto the geometry of alternating
trilinear forms and singular projective lines.
Using a probabilistic first-moment analysis over the space of 3-cocycles, we prove
that optimal twists achieve linear scaling in the rank cutoff, establishing the
sharp fixed-prime envelopes
\begin{equation}
 (p+1)p^n - p \le m_p(n) < (p^2+p+1)p^n \quad (n\ge 3),
 \qquad r_1(\Fp^n) = p^{2n},
 \label{eq:family-summary}
\end{equation}
where $m_p(n) := \min_{[\omega]} r_\omega(\Fp^n)$.
Consequently, twisting elevates the maximal total quantum dimension from
$\cD_{\max,p}^{(1)}(R) = \Theta_p(R^{1/2})$ to $\cD_{\max,p}(R) = \Theta_p(R)$.

\subsection{Regular commuting pairs and cyclic protection}

The rank of a twisted Dijkgraaf--Witten theory is determined by the number of
irreducible projective representations of the flux centralizers:
\begin{equation}
 r_\omega(G) = \sum_{[g]} k_{\alpha_g}(C_G(g)),
 \label{eq:rank}
\end{equation}
where $k_\alpha(H)$ counts the irreducible $\alpha$-projective representations of $H$.
By Schur's theory of projective characters \cite{Karpilovsky}, $k_\alpha(H)$ equals
the number of \emph{$\alpha$-regular} conjugacy classes in $H$.
A class represented by $h \in H$ is $\alpha$-regular if $\alpha(h,x) = \alpha(x,h)$
for all $x \in C_H(h)$.

The physical origin of this condition lies in the gauge invariance of torus states
\cite{HW}. Consider a commuting pair $(g,h) \in G^2$, representing flat holonomies
along the two cycles of $T^2$.
Any gauge transformation that commutes with both $g$ and $h$ belongs to the mutual
centralizer $C_G(g,h) = C_G(g) \cap C_G(h)$.
In the projective representation $\pi \in \Irr_{\alpha_g}(C_G(g))$, the operators
associated with $h$ and $x \in C_G(g,h)$ satisfy
\begin{equation}
 \pi(h)\pi(x) = \frac{\alpha_g(h,x)}{\alpha_g(x,h)}\,\pi(x)\pi(h).
\end{equation}
For the state to survive the gauge projection onto physical, gauge-invariant
configurations, this residual commutator phase must equal unity.
Inserting the explicit expression for $\alpha_g$ from \eqref{eq:alpha}, this phase
can be written symmetrically in terms of the commuting triple $(g,h,x)$.
We define the 3-holonomy invariant
\begin{equation}
 \Omega_\omega(g,h,x) :=
 \frac{\omega(g,h,x)\,\omega(h,x,g)\,\omega(x,g,h)}
      {\omega(h,g,x)\,\omega(g,x,h)\,\omega(x,h,g)}.
 \label{eq:omega}
\end{equation}
Physically, $\Omega_\omega(g,h,x)$ is the cocycle weight of the specified flat
connection on $T^3$ with holonomies $(g,h,x)$. The full DW partition function
sums these weights over commuting triples, with the gauge normalization.
The condition that the commuting pair $(g,h)$ is $\alpha_g$-regular is therefore
equivalent to the statement that
\begin{equation}
 \Omega_\omega(g,h,x) = 1 \quad \text{for all } x \in C_G(g,h).
 \label{eq:regular}
\end{equation}
The categorical rank $r_\omega(G)$ counts simultaneous $G$-conjugacy orbits of
commuting pairs $(g,h)$ satisfying \eqref{eq:regular}.
In the untwisted case ($\omega = 1$), $\Omega_1 \equiv 1$ trivially, and every
commuting pair is regular, recovering the classical orbit count
\begin{equation}
 r_1(G) = \sum_{[g]} k(C_G(g)) = \frac{|\Hom(\mathbb{Z}^3,G)|}{|G|}.
 \label{eq:untwisted}
\end{equation}

Crucially, the algebraic properties of the 3-cocycle identity impose powerful
constraints on the phase $\Omega_\omega$.

\begin{lemma}[Alternation of the 3-holonomy phase]\label{lem:alternation}
On any Abelian subgroup $A \le G$, the function $\Omega_\omega: A^3 \to \U$ is an
alternating tricharacter: it is multiplicative in each argument, invariant under
cyclic permutations, and inverts under any odd permutation of arguments.
In particular, $\Omega_\omega(g,h,x) = 1$ whenever any two arguments coincide.
\end{lemma}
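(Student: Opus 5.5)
The plan is to reduce everything to the centralizer $2$-cocycle $\alpha_g$ of \eqref{eq:alpha} and the standard commutator pairing of a $2$-cocycle on an Abelian group. Fix an Abelian subgroup $A\le G$ and $g,h,x\in A$. Since $A$ is Abelian, $A\le C_G(g)$, so $\alpha_g$ restricts to a $2$-cocycle on $A$; this uses $\alpha_g\in Z^2(C_G(g),\U)$, which the paper takes as given from the pentagon identity \eqref{eq:background-cocycle}. The first step is a direct comparison of \eqref{eq:alpha} with \eqref{eq:omega}:
\[
 \frac{\alpha_g(h,x)}{\alpha_g(x,h)}
 =\frac{\omega(g,h,x)\,\omega(h,x,g)\,\omega(x,g,h)}{\omega(h,g,x)\,\omega(g,x,h)\,\omega(x,h,g)}
 =\Omega_\omega(g,h,x).
\]
Thus, for fixed $g$, $\Omega_\omega(g,\cdot,\cdot)$ is the commutator form $\beta_g(h,x):=\alpha_g(h,x)/\alpha_g(x,h)$ of $\alpha_g|_A$.

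The second step, which I expect to be the only substantive one, is to show that $\beta_g$ is an alternating bicharacter on $A$. I would work in the twisted group algebra $\mathbb C^{\alpha_g}[A]$, with basis $u_a$ and product $u_au_b=\alpha_g(a,b)u_{ab}$. Commutativity of $A$ gives $u_au_b=\beta_g(a,b)\,u_bu_a$. The product $u_a u_b u_c$ can then be rewritten in two ways:
\begin{itemize}
 \item Moving $u_a$ past $u_b$ and then past $u_c$ gives $u_au_bu_c=\beta_g(a,b)\beta_g(a,c)\,u_bu_cu_a$.
 \item Writing $u_bu_c=\alpha_g(b,c)u_{bc}$ and moving $u_a$ past $u_{bc}$ in one step gives $u_au_bu_c=\beta_g(a,bc)\,u_bu_cu_a$.
\end{itemize}
Since $u_bu_cu_a$ is invertible, comparing the two expressions yields $\beta_g(a,bc)=\beta_g(a,b)\beta_g(a,c)$. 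The relation $\beta_g(b,a)=\beta_g(a,b)^{-1}$ is immediate from the definition, which gives multiplicativity in the first slot as well. Finally, $\beta_g(a,a)=1$ trivially. Associativity of the twisted group algebra is exactly the cocycle condition, so no further input is needed.

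The third step assembles the tricharacter properties from this.
\begin{itemize}
 \item Cyclic invariance $\Omega_\omega(g,h,x)=\Omega_\omega(h,x,g)$ is visible directly in \eqref{eq:omega}: the three numerator factors are permuted among themselves, and so are the three denominator factors.
 \item Swapping the last two arguments inverts $\Omega_\omega$, either by the second step or by inspection of \eqref{eq:omega}. Together with the cyclic symmetry, this means every odd permutation inverts $\Omega_\omega$, since the transposition of the last two slots and the cyclic shifts generate $S_3$.
 \item Multiplicativity in the second and third arguments is the bicharacter property of $\beta_g$. Cyclic invariance transports it to the first argument: $\Omega_\omega(gg',h,x)=\Omega_\omega(h,x,gg')$, which is now multiplicative in the last slot.
 \item If two arguments coincide, cyclic invariance moves them into the last two slots, and then $\Omega_\omega=\beta_g(h,h)=1$.
\end{itemize}
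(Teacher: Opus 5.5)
Your proof is correct, but it takes a different route from the paper's. The paper works directly with the $3$-cocycle identity. Writing $R(a,b,c,d)$ for the ratio of its two sides, it expresses the first-slot defect $\Omega_\omega(ab,c,d)/[\Omega_\omega(a,c,d)\,\Omega_\omega(b,c,d)]$ as an explicit quotient of twelve instances of $R$ on commuting arguments. It then gets the other slots from cyclic symmetry, reads off the alternation properties from \eqref{eq:omega}, and also checks coboundary invariance by substitution. You instead identify $\Omega_\omega(g,\cdot,\cdot)$ with the commutator pairing $\alpha_g(h,x)/\alpha_g(x,h)$ of the restricted $2$-cocycle $\alpha_g|_A$. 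You then use the standard twisted-group-algebra argument that such a pairing on an Abelian group is an alternating bicharacter, and obtain first-slot multiplicativity by cycling the base flux from $g$ to $h$. Your version is shorter and explains \emph{why} the result holds. It also ties the lemma to the regularity discussion in the text, where exactly this commutator phase appears between $\pi(h)$ and $\pi(x)$. The cost is that the real cocycle computation is delegated to the fact that $\alpha_g \in Z^2(C_G(g),\U)$. That fact is standard (it is the Dijkgraaf--Pasquier--Roche transgression restricted to the centralizer) and is asserted in the paper, but it is itself a multi-step manipulation of the pentagon identity. The paper's twelve-term identity, by contrast, is self-contained given only the $3$-cocycle identity. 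Coboundary invariance is not part of the stated lemma, but the paper relies on it later for the map $[\omega]\mapsto\tau$. In your framework it is also quick: replacing $\omega$ by $\omega\,\delta\beta$ changes $\alpha_g$ on $C_G(g)$ by a $2$-coboundary, and the commutator pairing of a coboundary on an Abelian group is trivial.
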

\begin{proof}
Multiplicativity follows from expanding the 3-cocycle identity \eqref{eq:background-cocycle}
for pairwise-commuting elements, and coboundary invariance ensures that $\Omega_\omega$
depends only on the cohomology class $[\omega]|_A \in H^3(A,\U)$.
A detailed verification of this identity is provided in Appendix~\ref{app:alternation}.
\end{proof}

This alternation property immediately yields a fundamental physical consequence:
holonomies that reside in the same cyclic subgroup can never be projected out by
any topological twist.

\begin{lemma}[Cyclic protection]\label{lem:protection}
If a commuting pair $(g,h) \in G^2$ generates a cyclic subgroup $\langle g,h\rangle$,
then $(g,h)$ satisfies the regularity condition \eqref{eq:regular} for every
cohomology class $[\omega] \in H^3(G,\U)$.
\end{lemma}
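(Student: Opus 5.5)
We plan to verify the regularity condition \eqref{eq:regular} directly. Let $(g,h)$ be a commuting pair with $\langle g,h\rangle=\langle c\rangle$ cyclic, so $g=c^a$ and $h=c^b$ for some integers $a,b$. Fix an arbitrary $x\in C_G(g,h)$. The first point to check is that the relevant holonomies all lie in a single Abelian subgroup. Since $x$ commutes with $g$ and $h$, it commutes with every element of $\langle g,h\rangle=\langle c\rangle$, and in particular with $c$. Hence $A:=\langle c,x\rangle$ is an Abelian subgroup of $G$ containing $g$, $h$ and $x$. This places us in the setting of Lemma~\ref{lem:alternation}.

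Next we invoke Lemma~\ref{lem:alternation} on $A$. Since $\Omega_\omega$ restricted to $A^3$ is multiplicative in each argument, we get
\[
 \Omega_\omega(g,h,x)=\Omega_\omega(c^a,c^b,x)=\Omega_\omega(c,c,x)^{ab}.
\]
Negative exponents are handled through $\Omega_\omega(c^{-1},\cdot,\cdot)=\Omega_\omega(c,\cdot,\cdot)^{-1}$, which also follows from multiplicativity. By the alternation clause of the same lemma, $\Omega_\omega(c,c,x)=1$ because two arguments coincide. Therefore $\Omega_\omega(g,h,x)=1$.

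Since $x\in C_G(g,h)$ and the class $[\omega]$ were arbitrary, condition \eqref{eq:regular} holds for every twist. Moreover, $\Omega_\omega$ depends only on the class $[\omega]|_A$ (coboundary invariance, also part of Lemma~\ref{lem:alternation}), so the conclusion does not depend on the chosen representative cocycle.

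Each step is short given Lemma~\ref{lem:alternation}. The only place needing care is the observation that $x$ lies in a common Abelian subgroup with the generator $c$, rather than merely with $g$ and $h$. If we instead worked in $\langle g,h,x\rangle$, we would have to express $g$ and $h$ as powers of a common element inside that subgroup. That is exactly what cyclicity of $\langle g,h\rangle$ provides: the generator $c$ lies in $\langle g,h\rangle\subseteq\langle g,h,x\rangle$, so this subgroup is Abelian and contains $c$.
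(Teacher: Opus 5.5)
Your proposal is correct and follows essentially the same route as the paper's proof: write $g=c^i$, $h=c^j$, note that $x\in C_G(g,h)$ commutes with $c$, apply Lemma~\ref{lem:alternation} on the Abelian subgroup $\langle c,x\rangle$ to get $\Omega_\omega(g,h,x)=\Omega_\omega(c,c,x)^{ij}=1$. Your additional remarks on negative exponents and representative-independence are harmless refinements of the same argument.
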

\begin{proof}
Because $\langle g,h\rangle$ is cyclic, there exists a common generator $c \in G$
such that $g = c^i$ and $h = c^j$ for some integers $i,j$.
Any element $x \in C_G(g,h)$ necessarily commutes with $c$, since $c$ can be
expressed as an integer word in $g$ and $h$.
Applying Lemma~\ref{lem:alternation} to the Abelian subgroup $\langle c,x\rangle \le G$,
we obtain
\begin{equation}
 \Omega_\omega(g,h,x) = \Omega_\omega(c^i, c^j, x) = \Omega_\omega(c,c,x)^{ij} = 1^{ij} = 1,
\end{equation}
which establishes universal regularity across all twists $\omega$.
\end{proof}

\subsection{The cyclic protection index}

Lemma~\ref{lem:protection} demonstrates that every simultaneous conjugacy orbit of
commuting pairs $(g,h)$ generating a cyclic subgroup persists across all twists.
Counting these protected orbits yields a universal, twist-independent lower bound
on the categorical rank of $D^\omega(G)$.

Let $C = \langle c\rangle$ be a cyclic subgroup of order $m = |C|$.
A pair $(c^i,c^j) \in C^2$ generates $C$ if and only if $\gcd(i,j,m) = 1$.
The number of such generating pairs is given by the second Jordan totient function,
\begin{equation}
 J_2(m) = m^2 \prod_{\ell \mid m} \left(1 - \frac{1}{\ell^2}\right),
\end{equation}
where the product runs over all distinct prime divisors $\ell$ of $m$.
Two generating pairs in $C^2$ are conjugate in $G$ if and only if they are related
by the normalizer $N_G(C) = \{x \in G : xCx^{-1} = C\}$.
The kernel of this normalizer action is the centralizer $C_G(C)$, and the quotient
$N_G(C)/C_G(C)$ acts faithfully and freely on the set of generating pairs.
Dividing by $[N_G(C) : C_G(C)]$ counts the number of distinct $G$-conjugacy orbits
originating from conjugates of $C$.

\begin{theorem}[Explicit protected-sector bound]\label{thm:cyclic}
For any finite group $G$, define the cyclic protection index
\begin{equation}
 \Xi(G) := \sum_{[C]_{\rm cyc}} \frac{J_2(|C|)}{[N_G(C) : C_G(C)]},
 \label{eq:Xi}
\end{equation}
where the sum runs over all $G$-conjugacy classes of cyclic subgroups $C \le G$.
Then $\Xi(G)$ is an integer, and for every $[\omega] \in H^3(G,\U)$,
\begin{equation}
 \Xi(G) \le r_\omega(G) \le r_1(G).
 \label{eq:sandwich}
\end{equation}
Equivalently, $\Xi(G)$ admits the element-class expansion
\begin{equation}
 \Xi(G) = \sum_{[g]} \ord(g) \prod_{\ell \mid \ord(g)} \left(1 + \frac{1}{\ell}\right),
 \label{eq:element}
\end{equation}
where the sum ranges over all conjugacy classes of elements in $G$.
\end{theorem}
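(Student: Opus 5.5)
The plan is to interpret $\Xi(G)$ as an orbit count. Let $\mathcal P_{\rm cyc}\subset G^2$ be the set of pairs $(g,h)$ with $\langle g,h\rangle$ cyclic. Such pairs automatically commute, and the set is stable under simultaneous conjugation. I will show that the number of simultaneous $G$-conjugacy orbits on $\mathcal P_{\rm cyc}$ equals the right-hand side of \eqref{eq:Xi}. Integrality of $\Xi(G)$ then comes for free. The sandwich \eqref{eq:sandwich} also follows directly, by the three steps below.

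\textbf{Step 1: orbit count.} Sort $\mathcal P_{\rm cyc}$ according to the subgroup $C=\langle g,h\rangle$. This is a $G$-equivariant map to the set of cyclic subgroups. So the orbits on $\mathcal P_{\rm cyc}$ are in bijection with pairs consisting of a class $[C]_{\rm cyc}$ and an $N_G(C)$-orbit on the generating pairs of $C$. There are $J_2(|C|)$ generating pairs of $C$, and $N_G(C)$ acts through $N_G(C)/C_G(C)\hookrightarrow \operatorname{Aut}(C)$.

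The key point is that this action is free. Suppose an automorphism $c\mapsto c^u$ fixes a generating pair $(c^i,c^j)$. Then $(u-1)i\equiv(u-1)j\equiv0 \pmod m$. Since $\gcd(i,j,m)=1$, there are integers $a,b$ with $ai+bj\equiv1\pmod m$. Hence $u\equiv1\pmod m$.

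Each class $[C]_{\rm cyc}$ therefore contributes exactly $J_2(|C|)/[N_G(C):C_G(C)]$ orbits. In particular, this quotient is an integer.

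\textbf{Step 2: the sandwich.} By eq.~\eqref{eq:regular} and the discussion after it, $r_\omega(G)$ counts orbits of regular commuting pairs. By Lemma~\ref{lem:protection}, every pair in $\mathcal P_{\rm cyc}$ is regular for every $[\omega]$. Distinct orbits are counted separately, which gives $\Xi(G)\le r_\omega(G)$. The regular orbits form a subset of all commuting-pair orbits, and the latter number $r_1(G)$ by \eqref{eq:untwisted}. This gives $r_\omega(G)\le r_1(G)$.

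\textbf{Step 3: the element-class expansion.} Write $\psi(m)=m\prod_{\ell\mid m}(1+1/\ell)$, so that $J_2(m)=\varphi(m)\psi(m)$. Group the conjugacy classes $[g]$ according to the class of the cyclic subgroup $\langle g\rangle$.

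Fix $C$ of order $m$. The elements generating some conjugate of $C$ form $\varphi(m)/[N_G(C):C_G(C)]$ conjugacy classes. The reason is that two generators of $C$ are $G$-conjugate iff they are $N_G(C)$-conjugate, and $N_G(C)/C_G(C)$ acts freely on the $\varphi(m)$ generators, since an automorphism fixing a generator is trivial.

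Therefore
\[
\sum_{[g]}\psi(\ord g)=\sum_{[C]_{\rm cyc}}\frac{\varphi(m)\psi(m)}{[N_G(C):C_G(C)]}=\Xi(G),
\]
which is \eqref{eq:element}.

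The main care point is Step 1. One must verify that conjugacy of generating pairs in $G$ reduces to the $N_G(C)$-action, since $xgx^{-1},xhx^{-1}\in C$ forces $xCx^{-1}=C$. One must also establish freeness using the condition $\gcd(i,j,m)=1$. Everything else is bookkeeping.
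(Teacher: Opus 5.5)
Your proposal is correct and follows essentially the same route as the paper. It counts $G$-orbits of pairs generating a cyclic subgroup via the free action of $N_G(C)/C_G(C)$, uses cyclic protection for the lower bound and the subset relation for the upper bound, and regroups with $J_2=\varphi\,\psi$ to get the element-class expansion, while also supplying two details the paper only asserts: freeness of the action (via $\gcd(i,j,m)=1$) and the reduction of $G$-conjugacy of generating pairs to $N_G(C)$-conjugacy.
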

\begin{proof}
For each conjugacy class of cyclic subgroups $[C]$, the generating pairs in $C^2$
are regular for all $\omega$ by Lemma~\ref{lem:protection}.
Because $N_G(C)/C_G(C)$ acts freely on the $J_2(|C|)$ generating pairs, the number
of distinct $G$-orbits of pairs generating a conjugate of $C$ is precisely the
summand in \eqref{eq:Xi}. Summing over all $[C]_{\rm cyc}$ proves that $\Xi(G)$ is
an integer and establishes $\Xi(G) \le r_\omega(G)$.
The upper bound $r_\omega(G) \le r_1(G)$ holds because regular classes form a subset
of all conjugacy classes in each sector.

To derive the element-class formula \eqref{eq:element}, note that Euler's totient
$\varphi(m)$ counts the generators of $C_m$.
The number of distinct $G$-classes of generators of conjugates of $C$ is
$\varphi(|C|)/[N_G(C) : C_G(C)]$.
Using the identity $J_2(m)/\varphi(m) = m \prod_{\ell \mid m} (1 + \ell^{-1})$ and
regrouping the sum in \eqref{eq:Xi} over conjugacy classes of individual elements
$g \in G$ (where $C = \langle g\rangle$) directly yields \eqref{eq:element}.
\end{proof}

The element-class formulation \eqref{eq:element} allows immediate evaluations for
general groups and $p$-groups.

\begin{corollary}\label{cor:classes}
For every finite group $G$ and twist $\omega$,
\begin{equation}
 r_\omega(G) \ge \Xi(G) \ge 3k(G) - 2.
\end{equation}
If $G$ is a $p$-group, then $\Xi(G) \ge (p+1)k(G) - p$, with equality holding
if and only if $G$ has exponent dividing $p$.
\end{corollary}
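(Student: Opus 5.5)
The plan is to read everything off the element-class expansion \eqref{eq:element} of Theorem~\ref{thm:cyclic}. The first inequality $r_\omega(G)\ge\Xi(G)$ is already part of \eqref{eq:sandwich}, so only the lower bounds on $\Xi(G)$ need proof. For each conjugacy class $[g]$, write
\[
 f(m) := m\prod_{\ell\mid m}\Bigl(1+\frac1\ell\Bigr), \qquad m=\ord(g),
\]
so that $\Xi(G)=\sum_{[g]} f(\ord(g))$. The identity class contributes $f(1)=1$ (empty product). Every other class contributes at least $f(\min)$ over the possible orders $m\ge2$, and the whole argument reduces to a lower bound on $f$.

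For general $G$, I will show $f(m)\ge 3$ for all $m\ge 2$. Since $f(m)\ge m$, the cases $m\ge 3$ are immediate, and $f(2)=2\cdot\tfrac32=3$. Summing over the $k(G)-1$ nontrivial classes gives
\[
 \Xi(G)\ge 1+3\bigl(k(G)-1\bigr)=3k(G)-2 .
\]

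For a $p$-group, every nontrivial element has order $m=p^a$ with $a\ge1$, so $f(p^a)=p^a+p^{a-1}=p^{a-1}(p+1)\ge p+1$. Equality holds exactly when $a=1$, since $f(p^a)$ is strictly increasing in $a$. Summing over classes gives
\[
 \Xi(G)\ge 1+(p+1)\bigl(k(G)-1\bigr)=(p+1)k(G)-p .
\]
Equality forces every nontrivial class to consist of elements of order $p$, which means $G$ has exponent dividing $p$. Conversely, exponent dividing $p$ makes each nontrivial term equal to $p+1$, so equality holds. This also covers the trivial group, where $k=1$ and $\Xi=1$.

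I do not expect a real obstacle. The only care needed is to use \eqref{eq:element} rather than \eqref{eq:Xi}, handle the identity class separately, and verify the small case $m=2$, which is where the constant $3$ comes from.
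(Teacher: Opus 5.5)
Your proposal is correct and follows the paper's proof essentially verbatim. Both read the bounds off the element-class expansion \eqref{eq:element}: the identity class contributes $1$, and each nontrivial class contributes at least $3$, or at least $p+1$ for $p$-groups with equality exactly at order $p$. Your explicit checks that $f(m)\ge m\ge 3$ for $m\ge 3$, and that $f(p^a)$ is strictly increasing in $a$, fill in small steps the paper leaves implicit.
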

\begin{proof}
In \eqref{eq:element}, the identity class $[1]$ contributes $\ord(1) = 1$.
Every non-identity class $[g]$ has $\ord(g) \ge 2$, contributing at least
$2(1 + 1/2) = 3$. This gives $\Xi(G) \ge 1 + 3(k(G)-1) = 3k(G)-2$, recovering
and refining the general lower bound observed in Ref.~\cite{HW}.
For a $p$-group, every non-identity element has order $p^k \ge p$, contributing
at least $p(1 + 1/p) = p+1$, with equality if and only if $\ord(g) = p$.
\end{proof}

For a cyclic group $G = C_m$, every pair of elements generates a cyclic subgroup,
giving $\Xi(C_m) = m^2 = r_\omega(C_m)$ for all twists.
For $S_3$, the conjugacy classes of orders 1, 2, and 3 give
$\Xi(S_3) = 1 + 3 + 4 = 8 = r_1(S_3)$, confirming that twisting cannot reduce the
rank of $S_3$.
In general, however, non-cyclic commuting pairs can be projected out by non-trivial
cocycles, allowing $r_\omega(G)$ to drop below $r_1(G)$.

\subsection{Alternating responses and sector counts in Abelian gauge theories}
\label{subsec:response}
\label{sec:elementary}

We now specialize to elementary-Abelian gauge groups $V = \mathbb{F}_p^n$, viewed
as $n$-dimensional vector spaces over the finite field $\mathbb{F}_p$.
Here, the group law is additive, and all group elements commute.
By Lemma~\ref{lem:alternation}, the 3-holonomy invariant $\Omega_\omega: V^3 \to \U$
is an alternating tricharacter.
Writing $\zeta = e^{2\pi i/p}$, there exists a unique alternating trilinear form
$\tau \in \Lambda^3 V^*$ such that
\begin{equation}
 \Omega_\omega(u,v,w) = \zeta^{\tau(u,v,w)}, \qquad u,v,w \in V.
\end{equation}
Conversely, every alternating trilinear form $\tau = \sum_{i<j<k} t_{ijk} e_i^* \wedge e_j^* \wedge e_k^*$
is realized by a 3-cocycle on $V$:
\begin{equation}
 \omega(a,b,c) = \zeta^{\sum_{i<j<k} t_{ijk} a_i b_j c_k}.
 \label{eq:representative}
\end{equation}
The exponent is trilinear, so it satisfies the additive 3-cocycle identity;
its alternation is precisely $\tau$. No division by $6$ is involved, so this
realization also applies in characteristics $2$ and $3$.
The map $[\omega] \mapsto \tau$ is a surjective homomorphism from $H^3(V,\U)$ to
$\Lambda^3 V^*$ with equal-sized fibers.
Minimizing the rank $r_\omega(V)$ over all cocycle twists is therefore equivalent
to minimizing over all alternating trilinear forms $\tau \in \Lambda^3 V^*$.
This reduction concerns rank, not equivalence of the resulting modular categories.

Fixing the magnetic flux $u \in V$, the trilinear form contracts to an alternating
bilinear form $B_u \in \Lambda^2 V^*$, defined by
\begin{equation}
 B_u(v,w) := \tau(u,v,w).
\end{equation}
The condition \eqref{eq:regular} that $(u,v)$ is a regular commuting pair requires
$\tau(u,v,w) = 0$ for all $w \in V$.
This means that $v$ must lie in the radical (kernel) of $B_u$:
\begin{equation}
 \Rad B_u = \{v \in V : B_u(v,w) = 0 \text{ for all } w \in V\}.
\end{equation}
The dimension of $\Rad B_u$ is $n - \rank B_u$, where $\rank B_u$ denotes the
standard matrix rank of the alternating bilinear form.
A two-dimensional subspace $W \le V$ (a projective line in $PG(n-1,p)$) is called
\emph{singular} for $\tau$ if $\tau(W,W,V) = 0$ identically \cite{DSsingular,DSforms}.
This connects the anyon count directly to finite geometry.

\begin{proposition}[Rank and singular-line dictionary]\label{prop:dictionary}
Let $\tau \in \Lambda^3 V^*$ and let $L_\tau$ denote the set of singular projective
lines of $\tau$. Then the categorical rank $r_\tau(V)$ satisfies
\begin{align}
 r_\tau(V) &= \#\{(u,v) \in V^2 : \tau(u,v,-) = 0\}
            = \sum_{u \in V} p^{n - \rank B_u},
 \label{eq:contraction}\\
 r_\tau(V) &= (p+1)p^n - p + (p^2-1)(p^2-p)|L_\tau|.
 \label{eq:lines}
\end{align}
\end{proposition}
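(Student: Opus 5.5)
The plan is to start from the orbit description of $r_\omega(G)$ established just before Lemma~\ref{lem:alternation}. There, $r_\omega(G)$ counts simultaneous $G$-conjugacy orbits of commuting pairs $(g,h)$ satisfying \eqref{eq:regular}. For $V=\Fp^n$ three simplifications occur. Every pair commutes. Conjugation is trivial, so each orbit is a single pair. Finally, $C_V(u,v)=V$, so the regularity condition \eqref{eq:regular} reads $\zeta^{\tau(u,v,w)}=1$ for all $w\in V$, i.e.\ $\tau(u,v,-)=0$. This gives the first equality in \eqref{eq:contraction} directly. (The dependence on $\omega$ only through $\tau$ is justified by the surjection $[\omega]\mapsto\tau$ described above.) For the second equality, fix $u$ and note that the condition on $v$ is $v\in\Rad B_u$. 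This is a subspace of dimension $n-\rank B_u$, so summing $|\Rad B_u|=p^{n-\rank B_u}$ over $u\in V$ gives the formula.

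For \eqref{eq:lines}, I will partition the set of regular pairs $(u,v)$ according to the subspace $W=\langle u,v\rangle$, which has dimension $0$, $1$ or $2$.

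\emph{Dimension $0$ or $1$.} Here $\langle u,v\rangle$ is cyclic of order $1$ or $p$. By Lemma~\ref{lem:protection}, or directly from the alternation in Lemma~\ref{lem:alternation} since $v$ is a multiple of $u$ or vice versa, every such pair is regular. Dimension $0$ contributes the single pair $(0,0)$. For a fixed line $\ell$, the pairs in $\ell^2$ spanning exactly $\ell$ number $p^2-1$, and there are $(p^n-1)/(p-1)$ lines. This contribution is therefore $(p+1)(p^n-1)$, and adding the pair $(0,0)$ gives $(p+1)p^n-p$.

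\emph{Dimension $2$.} Such pairs are ordered bases of a projective line $W$. I will check that regularity of one basis is equivalent to $W$ being singular. If $\tau(u,v,-)=0$, then by trilinearity and alternation $\tau(au+bv,\,cu+dv,\,-)=(ad-bc)\,\tau(u,v,-)=0$, so $\tau(W,W,V)=0$. The converse is immediate. Hence a two-dimensional pair is regular exactly when its span lies in $L_\tau$. Each such $W$ contributes $|GL_2(\Fp)|=(p^2-1)(p^2-p)$ ordered bases.

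Adding the contributions yields \eqref{eq:lines}. I do not expect a serious obstacle. The only point needing care is the basis-independence of the singularity condition, which the determinant identity handles; this is where alternation of $\tau$, rather than mere trilinearity, is essential.
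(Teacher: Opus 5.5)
Your proposal is correct and follows essentially the same route as the paper. Regularity reduces to $\tau(u,v,-)=0$ and is counted flux by flux via $\Rad B_u$; the regular pairs then split into linearly dependent ones (always regular by alternation, $(p+1)p^n-p$ in total, which you count line by line rather than by $u=0$ versus $u\neq 0$) and ordered bases of singular lines, $(p^2-1)(p^2-p)$ per line, where your identity $\tau(au+bv,cu+dv,-)=(ad-bc)\,\tau(u,v,-)$ usefully makes explicit the basis-independence of singularity that the paper's proof uses without comment.
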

\begin{proof}
Because $V$ is Abelian, every pair of elements commutes and conjugacy is trivial.
The regular pairs in flux $u$ are precisely the vectors $v \in \Rad B_u$, yielding
$p^{n - \rank B_u}$ choices, which proves \eqref{eq:contraction}.
Linearly dependent pairs $(u,v)$ satisfy $\tau(u,v,-) = 0$ automatically by
alternation; there are $p^n$ pairs with $u=0$ and $(p^n-1)p$ dependent pairs with
$u \ne 0$, totaling $(p+1)p^n - p$.
Every linearly independent regular pair $(u,v)$ spans a unique singular projective
line $W = \langle u,v\rangle \in L_\tau$.
Since each 2-dimensional subspace possesses $(p^2-1)(p^2-p)$ ordered bases,
summing the dependent and independent contributions gives \eqref{eq:lines}.
\end{proof}

Proposition~\ref{prop:dictionary} demonstrates that minimizing the DW rank
$r_\tau(V)$ is mathematically equivalent to minimizing the number of singular
lines $|L_\tau|$.
However, linear algebra imposes an immediate parity constraint: the matrix rank
of any alternating bilinear form must be an even integer.

\begin{proposition}[Parity obstruction]\label{prop:parity}
For any $n \ge 1$ and any alternating trilinear form $\tau \in \Lambda^3 (\mathbb{F}_p^n)^*$,
\begin{equation}
 r_\tau(\mathbb{F}_p^n) \ge
 \begin{cases}
  p^n + (p^n-1)p, & n \text{ odd},\\
  p^n + (p^n-1)p^2, & n \text{ even}.
 \end{cases}
 \label{eq:parity}
\end{equation}
Equality holds if and only if every non-zero flux $u \ne 0$ achieves the maximum
possible contraction rank: $\rank B_u = n-1$ for odd $n$, and $\rank B_u = n-2$
for even $n$.
\end{proposition}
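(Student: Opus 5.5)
The plan is to start from the contraction formula \eqref{eq:contraction} of Proposition~\ref{prop:dictionary}:
\[
r_\tau(V)=\sum_{u\in V}p^{\,n-\rank B_u}.
\]
We separate the zero flux from the nonzero fluxes and bound each nonzero term from below by a uniform quantity.

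The zero flux $u=0$ is handled directly. Here $B_0\equiv 0$, so its radical is all of $V$ and it contributes exactly $p^n$.

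For a nonzero flux $u\neq 0$, alternation (Lemma~\ref{lem:alternation}) gives $B_u(u,w)=\tau(u,u,w)=0$ for all $w$. Hence $u\in\Rad B_u$, so $\dim\Rad B_u\ge 1$ and $\rank B_u\le n-1$. Two further facts enter:
\begin{itemize}
\item The matrix rank of an alternating bilinear form over $\Fp$ is even. This holds in every characteristic, including $p=2$, because alternating means $B(v,v)=0$ rather than mere skew-symmetry; it follows from the symplectic normal form of the form.
\item Consequently, for odd $n$ the bound $\rank B_u\le n-1$ is already even and we keep it, so $\dim\Rad B_u\ge 1$. For even $n$, the number $n-1$ is odd, so the parity forces $\rank B_u\le n-2$ and $\dim\Rad B_u\ge 2$.
\end{itemize}
Each nonzero $u$ therefore contributes at least $p$ (odd $n$) or $p^2$ (even $n$). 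Summing over the $p^n-1$ nonzero vectors and adding the $p^n$ from $u=0$ yields \eqref{eq:parity}.

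For the equality clause, note that every summand in this argument was bounded individually. The total equals the bound if and only if each nonzero term attains its minimum $p^{\,n-\rank B_u}$. By monotonicity of $p^{\,n-\rank B_u}$ in the rank, this happens exactly when $\rank B_u=n-1$ (odd $n$) or $\rank B_u=n-2$ (even $n$) for every $u\ne 0$.

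The only step needing care is the parity statement in characteristic $2$. We make sure to use that $B_u$ is genuinely alternating, which holds because $\tau$ is an alternating trilinear form, so $B_u(v,v)=\tau(u,v,v)=0$. With that, the standard symplectic basis argument applies verbatim and shows the rank is even.
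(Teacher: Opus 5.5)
Your proposal is correct and follows essentially the same route as the paper: the contraction formula \eqref{eq:contraction}, the observation $u\in\Rad B_u$ giving $\rank B_u\le n-1$, evenness of the rank of an alternating form forcing $\rank B_u\le n-2$ when $n$ is even, and then a term-by-term sum. You also spell out the equality clause and the characteristic-$2$ point, which the paper leaves implicit; for the latter, the relevant fact is $B_u(v,v)=\tau(u,v,v)=0$.
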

\begin{proof}
Because $B_u(u,v) = \tau(u,u,v) = 0$, the flux vector $u$ always lies in $\Rad B_u$.
Hence, $\rank B_u \le n-1$. Since $\rank B_u$ must be even, $\rank B_u \le n-1$
when $n$ is odd, and $\rank B_u \le n-2$ when $n$ is even.
The flux $u=0$ contributes $p^n$ to \eqref{eq:contraction}, while each of the $p^n-1$
non-zero fluxes contributes at least $p^{n-(n-1)} = p$ (odd $n$) or $p^{n-(n-2)} = p^2$
(even $n$). Summing these terms yields \eqref{eq:parity}.
\end{proof}

When $n > 3$, geometric constraints prevent the cyclic protection floor from being
attained.

\begin{corollary}[Geometric floor]\label{cor:strict}
For $n > 3$, the singular line set is non-empty ($|L_\tau| \ge 1$), and therefore
\begin{equation}
 r_\tau(\mathbb{F}_p^n) \ge (p+1)p^n - p + (p^2-1)(p^2-p).
\end{equation}
\end{corollary}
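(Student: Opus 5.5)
The plan is to show that for $n\ge 4$ there is a nonzero flux $u$ whose radical $\Rad B_u$ has dimension at least $2$. Given such a $u$, pick $v\in\Rad B_u$ linearly independent of $u$. Then $(u,v)$ is a linearly independent regular pair, so $\langle u,v\rangle\in L_\tau$. The displayed bound then follows at once from \eqref{eq:lines} with $|L_\tau|\ge 1$. The argument splits according to the parity of $n$, as in Proposition~\ref{prop:parity}.

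\emph{Even $n\ge 4$.} Take any $u\neq 0$. As in the proof of Proposition~\ref{prop:parity}, $u\in\Rad B_u$ and $\rank B_u\le n-2$. Hence $\dim\Rad B_u\ge 2$, and we are done.

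\emph{Odd $n\ge 5$.} Here the radical dimension $n-\rank B_u$ is odd, so it suffices to find $u\ne 0$ with $\rank B_u\le n-3$. I would use the Pfaffian (adjugate) vector. Define a polynomial map $P:V\to V$ whose $i$-th coordinate is, up to sign, the Pfaffian of the $(n-1)\times(n-1)$ principal submatrix of $B_u$ obtained by deleting row and column $i$. Each coordinate is homogeneous of degree $(n-1)/2$ in $u$, since the entries of $B_u$ are linear in $u$.

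Standard Pfaffian identities give the following facts.
\begin{itemize}
\item $B_u P(u)=0$, so $P(u)\in\Rad B_u$.
\item $P(u)\neq0$ exactly when $\rank B_u=n-1$.
\item $P(u)=0$ exactly when $\rank B_u\le n-3$.
\end{itemize}
These are rank statements about a matrix with entries in $\Fp$, so they can be checked over $\Fp$ itself.

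Next I would pass to the algebraic closure. On the Zariski-open set where $\rank B_u=n-1$, the radical is exactly $\langle u\rangle$, so $u\wedge P(u)=0$ there. If that set is empty, then $P\equiv 0$ and every $u\neq0$ already has $\rank B_u\le n-3$. Otherwise $u\wedge P(u)=0$ holds identically as polynomials, i.e.\ $u_iP_j(u)=u_jP_i(u)$ for all $i,j$. Because the coordinates $u_i$ are pairwise coprime in the polynomial ring, unique factorization gives $P(u)=\lambda(u)\,u$ for a polynomial $\lambda$. This $\lambda$ is homogeneous of degree $(n-3)/2\ge 1$ with coefficients in $\Fp$.

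Now apply the Chevalley--Warning theorem to the single form $\lambda$. It has degree $(n-3)/2<n$ in $n$ variables and vanishes at $0$. So the number of its zeros in $\Fp^n$ is divisible by $p$, and there is a nonzero $u\in\Fp^n$ with $\lambda(u)=0$. For this $u$ we get $P(u)=0$, hence $\rank B_u\le n-3$ and $\dim\Rad B_u\ge 3$, as required.

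The main obstacle is the odd case, in particular justifying the factorization $P(u)=\lambda(u)u$ cleanly, including characteristic $2$. I would handle this by working with the Pfaffian as an integer polynomial in the formal entries of a generic alternating matrix and then specializing. The identity $B P=0$ and the rank characterization via sub-Pfaffians are characteristic-free. The remaining point is to note that the degree count $(n-3)/2\ge1$ is exactly what fails at $n=3$: there $\lambda$ is a constant, and the generic form $e_1^*\wedge e_2^*\wedge e_3^*$ has no singular lines.
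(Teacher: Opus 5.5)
Your proof is correct, but it takes a different route from the paper. The paper cites the Draisma--Shaw singular-subspace theorem \cite[Theorem 1.1]{DSsingular} to get $|L_\tau|\ge 1$ and substitutes into \eqref{eq:lines}; you instead prove the needed case of that input directly.

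For even $n$, your argument is just the parity observation from Proposition~\ref{prop:parity}. It actually gives more: every nonzero flux lies on a singular line, so $|L_\tau|\ge (p^n-1)/(p^2-1)$. Through \eqref{eq:lines} this is exactly the even-$n$ parity bound, which is stronger than the corollary.

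For odd $n\ge 5$, the identity $P(u)=\lambda(u)\,u$, with $\lambda$ a form of degree $(n-3)/2$, shows that (when $P\not\equiv 0$) the nonzero fluxes lying on singular lines are precisely the nonzero zeros of $\lambda$. Chevalley--Warning then supplies such a zero. Each step checks out, including in characteristic $2$. The Pfaffian identity $B_uP(u)=0$, the relation $\det=\mathrm{Pf}^2$, and the fact that the rank of an alternating matrix is attained on a principal submatrix are all characteristic-free. You could also skip the Zariski-density step: $u\wedge P(u)=0$ holds at every point of $\Fp^n$. It is trivial where $P(u)=0$, and follows from the one-dimensionality of the radical where $P(u)\neq 0$.

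The citation buys brevity and a result stated more generally than this corollary requires. Your argument gives a self-contained, elementary proof of the only external input here. It also shows why $n=3$ is exceptional: for the volume form, $\lambda$ is a nonzero constant. Finally, it yields a little more: the rational zero $u$ of $\lambda$ has $\dim\Rad B_u\ge 3$, so it lies on at least $p+1$ singular lines.
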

\begin{proof}
By the Draisma--Shaw singular-subspace theorem \cite[Theorem 1.1]{DSsingular}, every
alternating trilinear form over a finite field in dimension $n > 3$ admits at least
one singular projective line: $|L_\tau| \ge 1$.
Substituting $|L_\tau| \ge 1$ into \eqref{eq:lines} proves the claim.
\end{proof}

\subsection{First moment and the sharp fixed-prime envelope}
\label{subsec:sharp}

Can one construct twists $\tau$ whose rank matches the linear-in-$|V|$ scale
suggested by the cyclic protection floor?
We resolve this affirmatively by evaluating the exact first moment of $r_\tau(V)$
over the space of alternating trilinear forms.

\begin{theorem}[Exact first moment and optimal scale]\label{thm:mean}
Let $\tau$ be chosen uniformly at random from $\Lambda^3(\mathbb{F}_p^n)^*$ for $n \ge 3$.
Then the expected categorical rank is
\begin{align}
 \mathbb{E}[r_\tau(\mathbb{F}_p^n)]
 &= (p+1)p^n - p + p^{-(n-2)}(p^n-1)(p^n-p)
 \label{eq:mean}\\
 &= (p^2+p+1)p^n - (p^3+p^2+p) + p^{3-n}. \notag
\end{align}
Consequently, the minimal rank $m_p(n) := \min_{\omega} r_\omega(\mathbb{F}_p^n)$ satisfies
\begin{equation}
 (p+1)p^n - p \le m_p(n) < (p^2+p+1)p^n,
 \label{eq:bounds}
\end{equation}
establishing $m_p(n) = \Theta_p(p^n)$ as $n \to \infty$.
\end{theorem}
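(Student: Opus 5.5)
The plan is to average the singular-line formula \eqref{eq:lines} of Proposition~\ref{prop:dictionary} over $\tau$. Since $(p+1)p^n-p$ does not depend on $\tau$, linearity of expectation gives
\[
\mathbb{E}[r_\tau(V)] = (p+1)p^n - p + (p^2-1)(p^2-p)\,\mathbb{E}|L_\tau|.
\]
I will write $|L_\tau|$ as a sum of indicators over all $2$-dimensional subspaces $W\le V$. The computation then reduces to the probability that one fixed $W$ is singular.

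\textbf{Probability that a fixed line is singular.} Fix $W=\langle u,v\rangle$. Choose a basis $e_1=u$, $e_2=v$, $e_3,\dots,e_n$ of $V$, and write $\tau$ in the coordinates $t_{ijk}$ ($i<j<k$) of \eqref{eq:representative}. Uniformity of $\tau$ on $\Lambda^3V^*$ is basis-independent, so in this basis the coefficients $t_{ijk}$ are independent and uniform on $\Fp$. By trilinearity and alternation, $\tau(W,W,V)=0$ is equivalent to $\tau(u,v,-)=0$. This in turn says exactly that $t_{12k}=0$ for $k=3,\dots,n$, because $\tau(u,v,e_1)=\tau(u,v,e_2)=0$ automatically. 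Hence the linear map $\tau\mapsto\tau(u,v,-)|_{V/W}$ from $\Lambda^3V^*$ to $(V/W)^*$ is surjective. Its fibres are therefore equal-sized, and $\Pr[W\in L_\tau]=p^{-(n-2)}$. This coordinate-surjectivity step is the only real content of the proof, and it is where I expect any care to be needed. One has to see that the $n-2$ constraints involve disjoint coordinates, which holds because distinct $k$ give distinct triples $(1,2,k)$.

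\textbf{Assembling the mean.} The number of $2$-dimensional subspaces is $(p^n-1)(p^n-p)/\bigl((p^2-1)(p^2-p)\bigr)$. Multiplying by $p^{-(n-2)}$ cancels the ordered-basis factor $(p^2-1)(p^2-p)$ in the expectation, which yields \eqref{eq:mean}. The second form follows by expanding $p^{2-n}(p^n-1)(p^n-p)=p^{n+2}-p^3-p^2+p^{3-n}$ and adding $(p+1)p^n-p$.

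\textbf{Bounds on $m_p(n)$.} For the upper bound, a minimum never exceeds an average, so $m_p(n)\le\mathbb{E}[r_\tau]$. Moreover, for $n\ge3$,
\[
-(p^3+p^2+p)+p^{3-n}\le -(p^3+p^2+p)+1<0,
\]
so $\mathbb{E}[r_\tau]<(p^2+p+1)p^n$. For the lower bound, the minimum over $[\omega]$ is the minimum over $\tau$ by the surjectivity of $[\omega]\mapsto\tau$. Then \eqref{eq:lines} with $|L_\tau|\ge0$ gives $m_p(n)\ge(p+1)p^n-p$; equivalently, this is Corollary~\ref{cor:classes} for exponent $p$, where $k(V)=p^n$. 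The two bounds are both of order $p^n$ with $p$-dependent constants, which gives $m_p(n)=\Theta_p(p^n)$.
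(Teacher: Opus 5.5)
Your proof is correct and follows essentially the same route as the paper. Both arguments use linearity of expectation, with the $\tau$-independent contribution $(p+1)p^n-p$ from dependent pairs and a vanishing probability $p^{-(n-2)}$ coming from surjectivity of $\tau\mapsto\tau(u,v,-)$ onto $\operatorname{Ann}(\langle u,v\rangle)$; they then conclude with ``minimum $\le$ average'' and the cyclic-protection floor. Your regrouping of ordered independent pairs into $2$-dimensional subspaces via \eqref{eq:lines} is cosmetic, and your adapted-basis check (the conditions are exactly $t_{12k}=0$ for $k=3,\dots,n$) makes explicit the surjectivity that the paper only asserts.
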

\begin{proof}
Linearly dependent pairs $(u,v)$ satisfy $\tau(u,v,-) = 0$ identically for every $\tau$,
contributing $(p+1)p^n - p$ to the expectation.
For any fixed linearly independent pair $(u,v)$, the linear evaluation map
$\tau \mapsto \tau(u,v,-)$ maps $\Lambda^3 V^*$ surjectively onto the annihilator
$\operatorname{Ann}(\langle u,v\rangle) \subset V^*$.
Because $\dim\operatorname{Ann}(\langle u,v\rangle) = n-2$, the condition $\tau(u,v,-) = 0$
imposes $n-2$ independent linear constraints on the coefficients of $\tau$.
For a uniformly distributed form, the probability that an independent pair vanishes
is precisely $p^{-(n-2)}$.
There are $(p^n-1)(p^n-p)$ independent pairs. By linearity of expectation,
\begin{equation}
 \mathbb{E}[r_\tau(\mathbb{F}_p^n)] = (p+1)p^n - p + p^{-(n-2)}(p^n-1)(p^n-p),
\end{equation}
which simplifies directly to \eqref{eq:mean}.
Since the minimum cannot exceed the average, $m_p(n) \le \mathbb{E}[r_\tau] < (p^2+p+1)p^n$.
The lower bound is guaranteed by cyclic protection (Proposition~\ref{prop:dictionary}).
\end{proof}

While Theorem~\ref{thm:mean} establishes existence via the probabilistic method,
Markov's inequality supplies a constant-factor sampling guarantee: for every
$t>1$, a uniform form has $r_\tau \le t(p^2+p+1)p^n$ with probability at least
$1-1/t$. This does not assert tight concentration or an optimal leading constant.
Inverting this relation under the rank cutoff $r \le R$ yields our main sharp
envelope theorem for elementary-Abelian theories.

\begin{corollary}[Sharp fixed-prime envelopes]\label{cor:envelope}
For a fixed prime $p$, the twisted and untwisted envelopes satisfy
\begin{equation}
 F_p(R) = 2\log R + O_p(1), \qquad
 F_p^{(1)}(R) = \log R + O_p(1),
 \label{eq:envelopes}
\end{equation}
as $R \to \infty$. In terms of maximal total quantum dimension,
\begin{equation}
 \cD_{\max,p}(R) = \Theta_p(R), \qquad
 \cD_{\max,p}^{(1)}(R) = \Theta_p(R^{1/2}).
\end{equation}
Explicitly, setting $A_p = p^2+p+1$, for all $R \ge A_p p^3$ we have
\begin{equation}
 2\log R - 2\log(pA_p) \le F_p(R) \le 2\log\left(\frac{R+p}{p+1}\right),
\end{equation}
while $F_p^{(1)}(R) = 2\log p \, \lfloor\log R / (2\log p)\rfloor$ for $R \ge 1$.
\end{corollary}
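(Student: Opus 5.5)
The plan is to invert the two-sided estimates on the rank at fixed $p$: Theorem~\ref{thm:mean} and Proposition~\ref{prop:dictionary} in the twisted case, and the exact formula $r_1(\Fp^n)=p^{2n}$ in the untwisted case. Throughout, $\cD=|V|=p^n$ by \eqref{eq:DWdimension}. So $F_p(R)$ is $2\log p$ times the largest $n$ admitting a twist of rank at most $R$, and $\cD_{\max,p}(R)=e^{F_p(R)/2}$. The $\Theta$ statements will follow by exponentiating the explicit two-sided bounds.

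\emph{Untwisted case.} Since $V$ is Abelian and every pair is regular when $\omega=1$, \eqref{eq:contraction} with $\tau=0$ gives $r_1(\Fp^n)=p^n\cdot p^n=p^{2n}$. Hence $n$ is admissible iff $p^{2n}\le R$, i.e.\ $n\le \log R/(2\log p)$. The largest admissible value is $n=\lfloor\log R/(2\log p)\rfloor$, and this gives the exact formula for $F_p^{(1)}(R)$ for all $R\ge1$ (with $n=0$ always allowed). Removing the floor costs at most $2\log p$, so $\log R-2\log p<F_p^{(1)}(R)\le\log R$, which is $F_p^{(1)}(R)=\log R+O_p(1)$ and $\cD^{(1)}_{\max,p}=\Theta_p(R^{1/2})$.

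\emph{Twisted upper bound.} For every $n\ge0$ and every $\tau$, the dependent-pair count in the proof of Proposition~\ref{prop:dictionary} gives $r_\tau(\Fp^n)\ge(p+1)p^n-p$. This lower bound needs no hypothesis on $n$: it is just the cyclic-protection floor of Corollary~\ref{cor:classes}. So admissibility forces $p^n\le(R+p)/(p+1)$, and hence $F_p(R)=2n\log p\le 2\log\bigl((R+p)/(p+1)\bigr)$.

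\emph{Twisted lower bound.} Since $R\ge A_pp^3$, there is a largest $n\ge3$ with $A_pp^n\le R$. Theorem~\ref{thm:mean} applies to this $n$ because $n\ge 3$, and it gives a twist with $r_\omega(\Fp^n)\le m_p(n)<A_pp^n\le R$. So this $n$ is admissible. By maximality, $A_pp^{n+1}>R$, i.e.\ $p^n>R/(pA_p)$. Therefore $F_p(R)\ge2n\log p>2\log R-2\log(pA_p)$.

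Combining the two twisted bounds, and using $2\log((R+p)/(p+1))\le 2\log R$, yields $F_p(R)=2\log R+O_p(1)$ and $\cD_{\max,p}(R)=\Theta_p(R)$. The only delicate point is bookkeeping. The existence statement of Theorem~\ref{thm:mean} is only valid for $n\ge3$, which is exactly why the threshold $R\ge A_pp^3$ is imposed. The upper bound, by contrast, must hold uniformly over all $n$, including $n\le 2$, and it does because the cyclic-protection floor holds for every $n$.
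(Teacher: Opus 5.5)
Your proposal is correct and follows essentially the same route as the paper. The twisted upper bound comes from the dependent-pair (cyclic-protection) floor $(p+1)p^n-p$, which is valid for all $n\ge0$. The lower bound comes from taking the largest $n\ge3$ with $A_pp^n\le R$, so that $p^n>R/(pA_p)$, and invoking the first-moment bound $m_p(n)<A_pp^n$ of Theorem~\ref{thm:mean}. The untwisted formula comes from $r_1(\Fp^n)=p^{2n}$.
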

\begin{proof}
The lower bound $r_\omega(\mathbb{F}_p^n) \ge (p+1)p^n - p$ holds for all $n \ge 0$,
implying $p^n \le (R+p)/(p+1)$ whenever $r_\omega \le R$, which establishes the upper bound
$F_p(R) \le 2\log((R+p)/(p+1))$.
Conversely, choose $n$ such that $p^n$ is the largest power of $p$ satisfying
$p^n \le R/A_p$. For $R \ge A_p p^3$, we have $n \ge 3$ and $p^n > R/(p A_p)$.
By Theorem~\ref{thm:mean}, there exists a twist with $r_\tau(\mathbb{F}_p^n) < A_p p^n \le R$,
rendering this group order admissible. Thus, $F_p(R) \ge 2n\log p > 2\log(R/(pA_p))$.
In the untwisted case, $r_1(\mathbb{F}_p^n) = p^{2n}$, giving $p^{2n} \le R$ and
yielding the exact floor formula.
\end{proof}

Expressed in terms of the binary logarithmic cutoff $q_R = \log_2 R$, the envelopes read
\begin{equation}
 F_p(R) = 2(\log 2)q_R + O_p(1), \qquad
 F_p^{(1)}(R) = (\log 2)q_R + O_p(1).
 \label{eq:binary-laws}
\end{equation}
Introducing topological twists thus doubles the leading coefficient of the envelope,
elevating the total quantum dimension from $\Theta_p(R^{1/2})$ to $\Theta_p(R)$.

\subsection{Finite spectra and block-decomposition limits}
\label{subsec:examples}

Small dimensions distinguish exact attainment of the rank lower bounds from
optimality only at the level of asymptotic scaling. We begin with the volume
form in dimension 3, compare the intervening binary dimensions 4 and 5, and
then give a construction attaining the parity bound in dimension 6.
For $n=3$, any non-zero alternating trilinear form $\tau$ is a volume form.
For every non-zero flux $u \ne 0$, $\rank B_u = 2$ and $\Rad B_u = \langle u\rangle$.
Thus, $m_p(3) = (p+1)p^3 - p$ saturates the cyclic protection bound.
The resulting theory hosts $p^3$ Abelian anyons ($d_a = 1$) from the identity flux,
and $(p^3-1)p$ non-Abelian anyons ($d_a = p$) from the $p^3-1$ non-zero fluxes,
yielding $\sum_a d_a^2 = p^3(1)^2 + (p^3-1)p(p)^2 = p^6$, exactly verifying $\cD = p^3$.

Attainment does not persist in the next two dimensions. The exhaustive binary
census in Appendix~\ref{app:census} gives
\[
 m_2(4)=88>76=2^4+(2^4-1)2^2,\qquad
 m_2(5)=184>94=2^5+(2^5-1)2.
\]
Thus, even an optimal twist need not attain the parity lower bound: no form
in either of these two binary dimensions has maximal contraction rank at
every non-zero flux. These cases contrast with the uniform non-zero-flux
spectra in dimensions 3 and 6.

In dimension $n=6$, the parity obstruction (Proposition~\ref{prop:parity}) dictates
$m_p(6) \ge p^6 + (p^6-1)p^2$. This bound is saturated by an algebraic
construction based on quadratic field extensions.

\begin{proposition}[Optimality of the field-trace line spread]\label{prop:six}
For every prime $p$, the minimal rank in dimension $n=6$ is
\begin{equation}
 m_p(6) = p^6 + (p^6-1)p^2.
\end{equation}
In particular, for binary gauge fields ($p=2$), $m_2(6) = 316$.
\end{proposition}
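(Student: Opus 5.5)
The lower bound is already in hand. Proposition~\ref{prop:parity} with $n=6$ gives $r_\tau(\Fp^6)\ge p^6+(p^6-1)p^2$ for every $\tau$. It therefore suffices to exhibit a single alternating trilinear form $\tau\in\Lambda^3(\Fp^6)^*$ attaining this value. By the equality clause of Proposition~\ref{prop:parity}, this amounts to finding $\tau$ with $\rank B_u=4$, i.e.\ $\dim\Rad B_u=2$, for every non-zero flux $u$. Since $\tau$ is realized by the explicit cocycle \eqref{eq:representative}, the minimal rank $m_p(6)$ equals this value.

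For the construction I will use the quadratic extension $\mathbb F_q$ with $q=p^2$. I identify $V=\Fp^6$ with $\mathbb F_q^3$ as $\Fp$-vector spaces and define
\[
 \tau(u,v,w)=\Tr_{\mathbb F_q/\Fp}\bigl(\det(u,v,w)\bigr),
\]
where $\det$ is the $\mathbb F_q$-determinant of the $3\times 3$ matrix with columns $u,v,w$. The $\mathbb F_q$-determinant is $\mathbb F_q$-trilinear and alternating, and the trace is $\Fp$-linear. Hence $\tau$ is $\Fp$-trilinear, and it vanishes whenever two arguments coincide, so $\tau\in\Lambda^3V^*$.

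The key step is computing $\Rad B_u$ for $u\neq 0$. Fix $v$ and set $\lambda(w)=\det(u,v,w)$, an $\mathbb F_q$-linear functional on $\mathbb F_q^3$.
\begin{itemize}
 \item If $\lambda\neq 0$, it is surjective onto $\mathbb F_q$. Because the trace is a non-zero $\Fp$-linear functional on $\mathbb F_q$ (by separability of the extension), some $w$ then has $\Tr(\lambda(w))\neq 0$.
 \item Consequently $v\in\Rad B_u$ if and only if $\lambda\equiv 0$. This holds exactly when $u,v$ are $\mathbb F_q$-linearly dependent, i.e.\ $v\in\mathbb F_q u$.
\end{itemize}
So $\Rad B_u=\mathbb F_q u$, which has $\Fp$-dimension $2$, giving $\rank B_u=4$ for all $u\neq 0$ as required. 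By \eqref{eq:contraction},
\[
 r_\tau=p^6+(p^6-1)p^2 .
\]
Geometrically, the singular lines of $\tau$ are exactly the $\mathbb F_q$-points of $PG(2,q)$, which form a line spread of $PG(5,p)$. As a consistency check, \eqref{eq:lines} with $|L_\tau|=p^4+p^2+1$ reproduces the same count. For $p=2$ this gives $64+63\cdot 4=316$.

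I do not expect a serious obstacle. The only point requiring care is the non-degeneracy of the trace pairing, which guarantees that a non-zero $\mathbb F_q$-functional composed with $\Tr$ is a non-zero $\Fp$-functional. This is standard for finite (hence separable) field extensions.
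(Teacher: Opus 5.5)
Your proposal is correct and follows the same route as the paper: the lower bound comes from Proposition~\ref{prop:parity} with $n=6$. Attainment uses the Draisma--Shaw form $\tau=\Tr_{\mathbb F_{p^2}/\Fp}\circ\det$ on $\mathbb F_{p^2}^3$, for which $\Rad B_u=\mathbb F_{p^2}u$ and $\rank B_u=4$ at every non-zero flux. Your line-spread check via \eqref{eq:lines} with $|L_\tau|=p^4+p^2+1$ is also consistent.
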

\begin{proof}
Let $K = \mathbb{F}_{p^2}$ be the quadratic extension of $\mathbb{F}_p$, so that
$V = K^3$ is a 6-dimensional vector space over $\mathbb{F}_p$.
Following the line-spread construction of Draisma and Shaw \cite{DSforms}, define
\begin{equation}
 \tau(u,v,w) = \Tr_{K/\mathbb{F}_p}[\det_K(u,v,w)],
\end{equation}
where $\Tr_{K/\mathbb{F}_p}(z) = z + z^p$ is the field trace.
For any non-zero flux $u \in V \setminus \{0\}$, the 2-dimensional $\mathbb{F}_p$-subspace
$Ku$ satisfies $\tau(u, Ku, V) = 0$, so $Ku \subseteq \Rad B_u$.
If $v \notin Ku$, then $u$ and $v$ are $K$-linearly independent, and the map
$w \mapsto \det_K(u,v,w)$ is a surjective $K$-linear map onto $K$.
Because the trace $\Tr_{K/\mathbb{F}_p}: K \to \mathbb{F}_p$ is non-zero, $\tau(u,v,-)$
cannot vanish identically.
Hence, $\Rad B_u = Ku$ has dimension exactly 2 over $\mathbb{F}_p$, giving
$\rank B_u = 4$ for every $u \ne 0$.
By \eqref{eq:contraction}, $r_\tau(\mathbb{F}_p^6) = p^6 + (p^6-1)p^2$, which saturates
the even-dimensional parity lower bound \eqref{eq:parity}.
\end{proof}

The singular lines of this optimal form are precisely the 1-dimensional $K$-subspaces
viewed as 2-dimensional $\mathbb{F}_p$-subspaces, forming a Desarguesian line spread
that partitions $V \setminus \{0\}$.
The spectrum consists of $p^6$ Abelian anyons ($d_a = 1$) and $(p^6-1)p^2$ non-Abelian
anyons of dimension $d_a = p^2$, satisfying $\sum_a d_a^2 = p^{12} = \cD^2$.

It is instructive to note why simple block-diagonal extensions of these exceptional
small-dimensional forms fail to produce the asymptotic linear scale.
If $V = V_1 \oplus V_2$ and $\tau = \tau_1 \oplus \tau_2$ has no cross-terms, the
contraction radicals decouple: $\Rad B_{(u_1,u_2)} = \Rad B_{u_1} \oplus \Rad B_{u_2}$.
Consequently, the rank factorizes multiplicatively:
\begin{equation}
 r_\tau(V_1 \oplus V_2) = r_{\tau_1}(V_1)\,r_{\tau_2}(V_2).
 \label{eq:product}
\end{equation}
Iterating a fixed block $V_0$ of dimension $n_0$ yields
$r(V_0^{\oplus t}) = [r(V_0)]^t = |V_0^{\oplus t}|^{\log r(V_0)/\log|V_0|}$.
Because $r(V_0) > |V_0|$ for any non-trivial block, the scaling exponent is strictly
greater than 1 (for instance, the $n=3$ block gives exponent $\log_p((p+1)p^3-p)/3 > 1$).
Similarly, defining $\tau$ on $\mathbb{F}_{p^s}^3$ via $\Tr_{\mathbb{F}_{p^s}/\mathbb{F}_p}\det$
yields rank $p^{3s} + (p^{3s}-1)p^s = \Theta(|V|^{4/3})$ as $s \to \infty$.
These fixed-block and field-trace families do not realize the linear scale.
Theorem~\ref{thm:mean} establishes the existence of forms that do, without
determining their decomposition type. Indeed, a fixed number of growing
blocks with $r_{\tau_i}(V_i)=O_p(|V_i|)$ also gives linear-scale rank by
\eqref{eq:product}.

\section{Gauge-Group Structure and Untwisted Envelopes}
\label{sec:structure}
\label{sec:constraints}

Having explored how topological cocycle twists compress the anyon spectrum at
fixed gauge group, we now investigate the dual mechanism: varying the algebraic
architecture of the gauge group $G$.
The central physical question is whether one can design non-Abelian gauge groups
whose total quantum dimension $\cD = |G|$ grows rapidly without generating an
excessive proliferation of magnetic flux sectors and electric charges.
Except where twists are explicitly considered in the solvable, radical-free, and
global bounds, this section focuses on the untwisted setting ($\omega = 1$).

We first establish sharp asymptotic envelopes for permutation gauge groups ($S_n$
and $A_n$) and their wreath products.
We then prove an absolute quadratic-logarithmic envelope bound for almost-simple
groups and their direct products.
Next, we examine the structural costs of group extensions, proving that taking
quotients can increase categorical rank and establishing the exact sharpness of
this quotient loss using extraspecial groups.
Finally, we derive structural refinements across solvable and radical-free groups,
and conclude with the global finiteness theorem.
Throughout this analysis, the following standard counting relations serve as our
basic tools.

\begin{lemma}[Counting identities on commuting triples]\label{lem:basic}
For every finite group $G$, the untwisted categorical rank is given by
\begin{equation}
 r_1(G) = \sum_{[x] \subset G} k(C_G(x)).
 \label{eq:centralizers}
\end{equation}
Moreover, $r_1$ is strictly multiplicative under direct products,
$r_1(G \times H) = r_1(G)\,r_1(H)$, and for any subgroup $H \le G$,
\begin{equation}
 r_1(H) \le [G : H]\,r_1(G).
 \label{eq:subgroup}
\end{equation}
\end{lemma}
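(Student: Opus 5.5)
The plan is to derive all three statements from the description of $r_1(G)$ as the number of simultaneous conjugacy orbits of commuting pairs, equivalently $r_1(G)=h_3(G)/|G|$ from \eqref{eq:definitions}.

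\textbf{Centralizer formula.} For \eqref{eq:centralizers} I would redo the orbit count of Section~\ref{subsec:DW-background}. First choose a representative $x$ of each class $[x]$. The orbits of commuting pairs whose first entry is conjugate to $x$ are then in bijection with $C_G(x)$-conjugacy classes of $C_G(x)$. This gives $\sum_{[x]}k(C_G(x))$. Burnside's lemma applied to the conjugation action on commuting pairs converts this into $h_3(G)/|G|$.

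\textbf{Multiplicativity.} A commuting triple in $G\times H$ is exactly a pair consisting of a commuting triple in $G$ and one in $H$. Hence $h_3(G\times H)=h_3(G)h_3(H)$. Dividing by $|G\times H|=|G||H|$ gives $r_1(G\times H)=r_1(G)r_1(H)$.

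\textbf{Subgroup inequality.} Every commuting triple in $H$ is also a commuting triple in $G$, so $h_3(H)\le h_3(G)$. Dividing,
\[
 r_1(H)=\frac{h_3(H)}{|H|}\le\frac{h_3(G)}{|H|}=[G:H]\,\frac{h_3(G)}{|G|}=[G:H]\,r_1(G).
\]

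None of these steps is a real obstacle. The only point needing care is the Burnside step: the fixed points of $y\in G$ acting on commuting pairs $(g,h)$ are exactly the commuting triples $(g,h,y)$, so summing over $y$ gives $h_3(G)$, and dividing by $|G|$ gives the orbit count. The centralizer-sum form can also be cross-checked against the standard identity $k(C)=|\{\text{commuting pairs in }C\}|/|C|$, applied to $C=C_G(x)$ and summed with class sizes $|[x]|=|G|/|C_G(x)|$.
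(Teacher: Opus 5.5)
Your proof is correct and follows the paper's argument. The paper proves the centralizer formula by writing $h_3(G)=\sum_{x\in G}|C_G(x)|\,k(C_G(x))$ and grouping the terms by conjugacy class, which is exactly your cross-check. Your main route, the orbit bijection followed by Burnside's lemma on commuting pairs, is an equivalent form of the same count. Multiplicativity and the subgroup inequality are argued identically through $h_3$.
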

\begin{proof}
By Burnside's lemma, the count of ordered commuting triples $h_3(G) = |\Hom(\mathbb{Z}^3,G)|$
can be evaluated by fixing the first element $x \in G$:
\begin{equation}
 h_3(G) = \sum_{x \in G} |C_G(x)|\,k(C_G(x)),
\end{equation}
since any finite group $K$ has $|K|k(K)$ commuting pairs.
Partitioning the sum into conjugacy classes $[x]$ yields
\begin{equation}
 h_3(G) = \sum_{[x]} |[x]|\,|C_G(x)|\,k(C_G(x)) = |G|\sum_{[x]} k(C_G(x)),
\end{equation}
which proves \eqref{eq:centralizers} upon dividing by $|G|$.
Direct-product multiplicativity follows because $\Hom(\mathbb{Z}^3, G \times H)$
naturally factors as $\Hom(\mathbb{Z}^3,G) \times \Hom(\mathbb{Z}^3,H)$.
Finally, any commuting triple in $H$ is a commuting triple in $G$, so $h_3(H) \le h_3(G)$;
dividing by $|H| = |G|/[G:H]$ establishes \eqref{eq:subgroup}.
\end{proof}

\subsection{Permutation gauge groups and sharp restricted envelopes}
\label{sec:permutation}

\subsubsection{Symmetric and alternating gauge groups}
Permutation gauge groups—the symmetric group $S_n$ and alternating group $A_n$—provide
a natural arena for testing how non-Abelian statistics dilute the categorical rank.
We denote their untwisted envelopes by $F_S^{(1)}(R)$ and $F_A^{(1)}(R)$,
restricting the gauge group to $S_n$ and $A_n$, respectively.
In an untwisted gauge theory with $G = S_n$, flat connections on the spatial 2-torus
correspond to commuting pairs of permutations $(\sigma, \tau) \in S_n^2$, modulo
simultaneous conjugation.
Under simultaneous conjugation, the commuting pair $(\sigma,\tau)$ defines an action
of the fundamental group $\mathbb{Z}^2$ on the set of $n$ points.
The set of physical torus states therefore corresponds bijectively to isomorphism
classes of finite $\mathbb{Z}^2$-sets of cardinality $n$.

Every finite $\mathbb{Z}^2$-set decomposes uniquely into a disjoint union of transitive
orbits. A transitive $\mathbb{Z}^2$-orbit of size $m$ is isomorphic to the quotient
$\mathbb{Z}^2/L$, where $L \le \mathbb{Z}^2$ is a sublattice of index $[\mathbb{Z}^2 : L] = m$.
The number of such sublattices of index $m$ is given by the sum-of-divisors function
$\sigma_1(m) = \sum_{d \mid m} d$.
Because any transitive orbit of size $m$ can appear with arbitrary non-negative
integer multiplicity, the generating function for the sequence of ranks $a_n := r_1(S_n)$
takes an exact Euler product form \cite{BF,BFH}:
\begin{equation}
 \sum_{n \ge 0} a_n t^n = \prod_{m \ge 1} (1 - t^m)^{-\sigma_1(m)}.
 \label{eq:symmetric}
\end{equation}
Using the analytic properties of the Dirichlet series $\sum_{m \ge 1} \sigma_1(m)m^{-s} = \zeta(s)\zeta(s-1)$,
Bringmann, Franke, and Heim \cite{BFH} established the precise asymptotic growth of the
coefficients $a_n$:
\begin{equation}
 \log a_n \sim B n^{2/3}, \qquad
 B = \frac{3}{2}(2Z)^{1/3}, \qquad
 Z = \zeta(2)\zeta(3).
\end{equation}
To determine the physical envelope, we invert this growth to find the largest degree
$n$ permitted by a rank cutoff $R$, and evaluate the total quantum dimension
$\cD(S_n,1) = n!$.

\begin{corollary}[Sharp permutation envelopes]\label{cor:symmetric}
Under the categorical rank cutoff $r_1 \le R$, the untwisted envelopes for symmetric
and alternating gauge groups satisfy
\begin{equation}
 F_S^{(1)}(R) \sim F_A^{(1)}(R) \sim \frac{2}{\sqrt{3\zeta(2)\zeta(3)}}\,(\log R)^{3/2}\log\log R,
 \label{eq:symenv}
\end{equation}
as $R \to \infty$.
\end{corollary}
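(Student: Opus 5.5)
Our approach is to invert the asymptotic $\log a_n\sim Bn^{2/3}$ and then substitute into Stirling's formula. The first step is monotonicity. Lemma~\ref{lem:basic} gives $r_1(S_n)\le r_1(S_{n+1})$, because $S_n\le S_{n+1}$ and every $\mathbb Z^2$-set of size $n$ extends to one of size $n+1$ by adjoining a fixed point. Hence the admissible set $\{n: a_n\le R\}$ is an initial segment $\{0,\dots,N(R)\}$, and $F_S^{(1)}(R)=2\log N(R)!$.

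From $\log a_n\sim Bn^{2/3}$ we get $\log R\ge \log a_{N}\sim BN^{2/3}$, and also $\log R<\log a_{N+1}\sim B(N+1)^{2/3}$. Together these give $N(R)\sim(\log R/B)^{3/2}$. Stirling's formula then yields
\[
2\log N! \sim 2N\log N \sim 2B^{-3/2}(\log R)^{3/2}\cdot\tfrac32\log\log R=3B^{-3/2}(\log R)^{3/2}\log\log R .
\]
The constant follows from $B^{3/2}=(3/2)^{3/2}(2Z)^{1/2}$, which gives
\[
3B^{-3/2}=\frac{3\cdot 2^{3/2}}{3^{3/2}\sqrt{2Z}}=\frac{2}{\sqrt{3Z}}=A_S .
\]
The only subtle point here is that $\log N\sim\frac32\log\log R$ requires $N=(\log R)^{3/2+o(1)}$, and this does follow from the asymptotic equivalence.

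For $A_n$ we compare with $S_n$ via Lemma~\ref{lem:basic}. Since $A_n\le S_n$ has index $2$, we have $r_1(A_n)\le 2r_1(S_n)$. In the other direction $S_{n-2}$ embeds in $A_n$: send $\sigma$ to $\sigma$ composed with $(n-1\ n)$ when $\sigma$ is odd. The index is $n!/(2(n-2)!)=n(n-1)/2$, which gives
\[
r_1(S_{n-2})\le \tfrac{n(n-1)}2\,r_1(A_n),
\]
so $\log r_1(A_n)=Bn^{2/3}(1+o(1))$ as well. Monotonicity of $r_1(A_n)$ in $n$ is not needed. Define $N_A(R)=\max\{n:r_1(A_n)\le R\}$. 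The bound $r_1(A_n)\le 2a_n$ shows that all $n\le N(R/2)$ are admissible. The lower bound gives $Bn^{2/3}(1+o(1))\le\log R$ for every admissible $n$. Together these give $N_A(R)\sim(\log R/B)^{3/2}$. Since $\log(n!/2)\sim\log n!$, the same asymptotic follows.

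The main obstacle is making the inversion rigorous with only a first-order asymptotic for $\log a_n$. We must make sure the $o(1)$ errors in the exponent do not corrupt the leading $\log\log R$ term. They do not, because an error of relative size $o(1)$ in $N$ changes $N\log N$ only by a factor $1+o(1)$. The same goes for replacing $R$ by $R/2$, since $\log(R/2)\sim\log R$. I would write this out carefully with explicit $\varepsilon$-bounds: $(1-\varepsilon)Bn^{2/3}\le\log a_n\le(1+\varepsilon)Bn^{2/3}$ for $n\ge n_\varepsilon$.
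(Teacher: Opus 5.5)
Your proposal is correct and follows essentially the same route as the paper: you invert $\log r_1(S_n)\sim Bn^{2/3}$ to get $N(R)\sim(B^{-1}\log R)^{3/2}$, apply Stirling, and sandwich $A_n$ between $S_{n-2}\hookrightarrow A_n\le S_n$ using the subgroup-index bound of Lemma~\ref{lem:basic}. One small correction: Lemma~\ref{lem:basic} by itself only gives $r_1(S_n)\le (n+1)\,r_1(S_{n+1})$, so your monotonicity actually rests on the fixed-point injection (equivalently, on the factor $(1-t)^{-1}$ in the Euler product); it is not needed anyway, since maximality of $N(R)$ already forces $a_{N(R)+1}>R$.
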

\begin{proof}
For any $\varepsilon > 0$ and all sufficiently large $n$, we have
$(B - \varepsilon)n^{2/3} \le \log a_n \le (B + \varepsilon)n^{2/3}$.
Imposing the cutoff $a_n \le R$, the maximal admissible degree is
$n(R) = (B^{-1}\log R)^{3/2}(1 + o(1))$.
By Stirling's approximation, the total quantum dimension satisfies
$2\log(n!) \sim 2n\log n$.
Substituting $n(R)$ yields
\begin{equation}
 2\log(n!) \sim 2\left(\frac{\log R}{B}\right)^{3/2} \log\left[\left(\frac{\log R}{B}\right)^{3/2}\right]
 \sim 3B^{-3/2}(\log R)^{3/2}\log\log R.
\end{equation}
Noting that $3B^{-3/2} = 3 \cdot [ \frac{3}{2}(2Z)^{1/3} ]^{-3/2} = 2/\sqrt{3Z}$ proves the
asymptotic for $S_n$.

For the alternating group $A_n$, the embedding $A_n \le S_n$ with $[S_n : A_n] = 2$ and
the embedding $S_{n-2} \hookrightarrow A_n$ (defined by $\sigma \mapsto \sigma (n-1,n)^{\epsilon(\sigma)}$,
where $\epsilon(\sigma) \in \{0,1\}$ is the permutation parity, so the sign is
$(-1)^{\epsilon(\sigma)}$) combine with Lemma~\ref{lem:basic}
to give
\begin{equation}
 \frac{r_1(S_{n-2})}{n(n-1)/2} \le r_1(A_n) \le 2r_1(S_n).
\end{equation}
Taking logarithms reveals $\log r_1(A_n) \sim B n^{2/3}$.
Because $\log|A_n| = \log(n!) - \log 2 \sim n\log n$, inverting the rank constraint yields
the identical leading asymptotic \eqref{eq:symenv}.
\end{proof}

The physical origin of the scale $(\log R)^{3/2}\log\log R$ lies in the pronounced
imbalance between the factorial growth of the gauge group order and the sub-exponential
growth of the torus ground-state degeneracy:
\begin{equation}
 \log\cD(S_n,1) \sim n\log n, \qquad
 \log r_1(S_n) \sim \frac{3}{2}[2\zeta(2)\zeta(3)]^{1/3} n^{2/3}.
 \label{eq:physical-symmetric-scale}
\end{equation}
The ratio $F_S^{(1)}(R)/(\log_2 R)^2$ therefore tends to zero,
with decay $O(n^{-1/3}\log n)$.
Consequently, while permutation theories achieve substantially higher total quantum
dimensions than elementary-Abelian theories at the same rank cutoff, they remain
strictly sub-quadratic on the logarithmic scale.

\subsubsection{Fixed-base wreath products}
To test whether adding a fixed internal gauge group at each permutation site
improves the permutation envelope, decorate each site with degrees of freedom
from a fixed finite group $H$, forming the wreath product
\begin{equation}
 H \wr S_n = H^n \rtimes S_n, \qquad |H \wr S_n| = |H|^n n!.
\end{equation}
Here, $S_n$ acts by permuting the $n$ factors of $H$.
In the commuting-pair picture, each transitive $\mathbb{Z}^2$-orbit of size $m$
is decorated by an $H$-local system—an equivalence class of homomorphisms from the
sublattice $L \cong \mathbb{Z}^2$ into $H$, modulo conjugation by $H$.
Because $L \cong \mathbb{Z}^2$, the number of distinct $H$-local systems on an
orbit is precisely $r_1(H)$. This is the local-system enumeration underlying
the wreath-product identities of Ref.~\cite{FS}; we use it here to determine
the rank-cutoff envelope.

\begin{proposition}[Wreath product envelope]\label{prop:symwreath}
Fix a finite group $H$ and let $r = r_1(H)$. Then the generating function for
$H \wr S_n$ is
\begin{equation}
 \sum_{n \ge 0} r_1(H \wr S_n) t^n = \prod_{m \ge 1} (1 - t^m)^{-r \sigma_1(m)},
 \label{eq:wreathproduct}
\end{equation}
and the rank grows asymptotically as
\begin{equation}
 \log r_1(H \wr S_n) \sim B r^{1/3} n^{2/3}.
 \label{eq:wreathasymptotic}
\end{equation}
Consequently, the envelope for $H \wr S_n$ obeys the same functional form as $S_n$,
with a strictly suppressed leading coefficient when $H\ne1$:
\begin{equation}
 F_{H\wr S}^{(1)}(R) \sim \frac{2}{\sqrt{3r_1(H)\zeta(2)\zeta(3)}}\,(\log R)^{3/2}\log\log R.
\end{equation}
\end{proposition}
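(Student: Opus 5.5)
We prove the three claims in order: the generating function \eqref{eq:wreathproduct} by a decorated version of the $\mathbb{Z}^2$-set argument behind \eqref{eq:symmetric}, the growth law \eqref{eq:wreathasymptotic} by a scaling reduction to the $S_n$ asymptotics of Ref.~\cite{BFH}, and the envelope by the same inversion as in Corollary~\ref{cor:symmetric}.

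\emph{Step 1: the generating function.} By Lemma~\ref{lem:basic}, $r_1(H\wr S_n)=|\Hom(\mathbb{Z}^3,H\wr S_n)|/|H\wr S_n|$. Equivalently, it counts simultaneous-conjugacy classes of commuting pairs in $H\wr S_n$. View $H\wr S_n$ as the automorphism group of the trivial principal $H$-bundle over $n$ points. A commuting pair is then the same as a $\mathbb{Z}^2$-action on $\{1,\dots,n\}$ lifted to this bundle, and conjugacy classes of pairs are isomorphism classes of such $H$-equivariant $\mathbb{Z}^2$-sets. These decompose uniquely into transitive pieces.

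A transitive piece over an orbit $\mathbb{Z}^2/L$ of size $m$ is determined, up to isomorphism, by the induced holonomy homomorphism $L\to H$ taken modulo $H$-conjugation. Since $L\cong\mathbb{Z}^2$, there are exactly $r_1(H)$ such classes, as noted in the text (cf.~\cite{FS}). This holonomy classification is the point needing most care. We will argue it via induction: a transitive object is the bundle induced from its stabilizer $L$, so an isomorphism restricts to conjugacy of stabilizer data. Counting the $\sigma_1(m)$ lattices $L$ of index $m$, there are $r\sigma_1(m)$ connected types of size $m$. Each may occur with arbitrary multiplicity, which gives the Euler product \eqref{eq:wreathproduct}.

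\emph{Step 2: the asymptotics.} Write $\prod_m(1-t^m)^{-r\sigma_1(m)}$. Its Dirichlet series is $r\zeta(s)\zeta(s-1)$, which has a simple pole at $s=2$ with residue $r\zeta(2)$ instead of $\zeta(2)$. Rather than redo the saddle-point analysis of \cite{BFH}, we will use a scaling argument. The Meinardus-type saddle point gives
\[
\log a_n\sim \tfrac32\bigl(2A\,\Gamma(2)\zeta(3)\bigr)^{1/3}n^{2/3}
\]
for exponents with pole residue $A$ at $s=2$. The constant $B$ quoted above corresponds to $A=\zeta(2)$, and $B\propto A^{1/3}$. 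Hence multiplying the exponents by $r$ multiplies the leading constant by $r^{1/3}$. We will state this as an application of Meinardus' theorem, or of the same Tauberian inversion as in \cite{BFH}, since its analytic hypotheses are unchanged by a constant factor; the secondary poles and growth conditions are identical. This yields \eqref{eq:wreathasymptotic}.

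\emph{Step 3: the envelope.} Invert exactly as in Corollary~\ref{cor:symmetric}, with $B$ replaced by $Br^{1/3}$. The largest admissible $n$ is
\[
n(R)=\bigl(\log R/(Br^{1/3})\bigr)^{3/2}(1+o(1)).
\]
Now $\log|H\wr S_n|=n\log|H|+\log n!\sim n\log n$, because the $n\log|H|$ term is of lower order. Therefore
\[
F\sim 3(Br^{1/3})^{-3/2}(\log R)^{3/2}\log\log R=\frac{2}{\sqrt{3rZ}}(\log R)^{3/2}\log\log R .
\]
Strict suppression for $H\ne1$ follows from $r_1(H)\ge k(H)\ge2$.

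The main obstacle is Step 2: checking that the asymptotic machinery applies uniformly with the rescaled exponent. Step 1's classification of transitive pieces also needs care, but is routine once phrased via induced bundles.
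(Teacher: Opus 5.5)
Your proposal is correct. Steps 1 and 3 match the paper's argument: each transitive $\mathbb{Z}^2$-orbit carries one of $r_1(H)$ local systems, and you then invert as for $S_n$, using $n\log|H|=o(n\log n)$. Step 2 takes a genuinely different route.

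\textbf{The paper's Step 2.} It stays elementary. Because $r$ is an integer, the series is literally the $r$-th power of the $S_n$ series, so $r_1(H\wr S_n)=\sum_{n_1+\cdots+n_r=n}\prod_i a_{n_i}$.
\begin{itemize}
\item The lower bound comes from a balanced composition $n_i\approx n/r$.
\item The upper bound comes from $\log a_j\le(B+\epsilon)j^{2/3}+C_\epsilon$, the concavity estimate $\sum_i n_i^{2/3}\le r^{1/3}n^{2/3}$, and the polynomial count $\binom{n+r-1}{r-1}$ of compositions.
\end{itemize}
This uses the Bringmann--Franke--Heim asymptotic purely as a black box, with no analytic hypotheses to re-check.

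\textbf{Your Step 2.} The saddle-point scaling explains the $r^{1/3}$ conceptually, as cube-root dependence on the residue at $s=2$. It would also cover non-integer multipliers of the exponents. The cost is that you must verify an analytic theorem's hypotheses. Classical Meinardus does not apply as stated, because $\zeta(s)\zeta(s-1)$ also has a pole at $s=1$ (residue $\zeta(0)=-1/2$), so you would need the multi-pole variant.

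\textbf{A cleaner way to close your Step 2.}
\begin{itemize}
\item Note that $\log\prod_m(1-t^m)^{-r\sigma_1(m)}$ is exactly $r$ times the $r=1$ logarithm. A Mellin computation then gives $\log F_r(e^{-u})\sim r\zeta(2)\zeta(3)u^{-2}$ as $u\to0^+$.
\item Apply a logarithmic Tauberian theorem of Ingham--Kohlbecker type. It is legitimate here because the coefficients are nondecreasing, since the factor $(1-t)^{-r}$ is present.
\end{itemize}
This yields $\log r_1(H\wr S_n)\sim\frac32\bigl(2r\zeta(2)\zeta(3)\bigr)^{1/3}n^{2/3}=Br^{1/3}n^{2/3}$ with nothing further to verify.
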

\begin{proof}
Because each transitive orbit admits $r = r_1(H)$ distinct local system decorations,
the generating function is the $r$-th power of the symmetric group generating function
\eqref{eq:symmetric}, proving \eqref{eq:wreathproduct}.
The coefficient of $t^n$ is given by the convolution $\sum_{n_1+\cdots+n_r=n} \prod_{i=1}^r a_{n_i}$.
For the lower bound, setting $n_i = n/r + O(1)$ gives
$\sum_i \log a_{n_i} \sim r B (n/r)^{2/3} = B r^{1/3} n^{2/3}$.
For the upper bound, for every $\epsilon>0$ choose $C_\epsilon$ such that
$\log a_j\le(B+\epsilon)j^{2/3}+C_\epsilon$ for all $j\ge0$.
Concavity gives $\sum_{i=1}^r n_i^{2/3} \le r^{1/3} n^{2/3}$, uniformly
including small or zero parts.
Because the number of compositions of $n$ into $r$ parts is $\binom{n+r-1}{r-1} \le (n+1)^{r-1}$,
and $r$ is fixed, $\log\binom{n+r-1}{r-1} = O(\log n) = o(n^{2/3})$.
Thus the logarithm of the convolution is at most
$(B+\epsilon)r^{1/3}n^{2/3}+rC_\epsilon+O_r(\log n)$.
Letting $\epsilon\downarrow0$ establishes \eqref{eq:wreathasymptotic}.
Because $2\log|H \wr S_n| = 2n\log n + O_H(n)$, inverting the rank constraint
yields the leading coefficient $3(B r^{1/3})^{-3/2} = 2/\sqrt{3r Z}$.
\end{proof}
Every non-trivial group has $r_1(H) \ge 4$, so decorating permutation sites with a
fixed base group strictly reduces the efficiency of the envelope.
This asymptotic argument holds for fixed $H$; it gives no uniform estimate
when the base group varies with $n$.

\subsection{Almost-simple gauge groups and their direct products}
\label{subsec:almostsimple}

We now examine non-Abelian simple groups and their extensions.
A finite group $S$ is simple if its only normal subgroups are $1$ and $S$.
An \emph{almost-simple} group is a group $G$ satisfying
\begin{equation}
 S \le G \le \operatorname{Aut}(S),
\end{equation}
where $S$ is a non-Abelian simple group (the socle of $G$).
The quotient $\operatorname{Out}(S) = \operatorname{Aut}(S)/S$ is the outer automorphism
group.
By the classification of finite simple groups, non-Abelian simple groups fall into
groups of Lie type, alternating groups $A_n$ ($n \ge 5$), and 26 sporadic groups.
For a specified Lie-type family, including its twisting data, the remaining
parameters are the Lie rank $\ell$ and defining field size $q$.

We prove that almost-simple gauge groups and their arbitrary finite direct products
satisfy a quadratic-logarithmic upper bound.

\begin{theorem}[Almost-simple quadratic-logarithmic bound]\label{thm:almostsimple}
There exists an absolute constant $C > 0$ such that
\begin{equation}
 2\log|G| \le C(\log r_1(G))^2
 \label{eq:almostsimple}
\end{equation}
for every finite group $G$ that is almost simple, or a finite direct product of
almost-simple groups.
Consequently, the restricted untwisted envelope satisfies
\begin{equation}
 F_{\mathrm{a.s.}}^{(1)}(R) \le C(\log R)^2.
\end{equation}
The same bound holds if cyclic groups of prime order are included as direct factors.
\end{theorem}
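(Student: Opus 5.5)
The proof reduces the bound to a lower bound on the ordinary class number. Since $r_1(G)=\sum_{[x]}k(C_G(x))\ge k(G)$ by Lemma~\ref{lem:basic}, it suffices to find an absolute constant $c>0$ with
\[
 \log|G| \le c\,(\log k(G))^2 + c
\]
for every almost-simple $G$. The direct-product case is not automatic, because the inequality $\log|G|\le C(\log r_1)^2$ is not additive. We therefore aim for the sharper per-factor form
\[
 \log|G_i|\le c\,(\log r_1(G_i))^2
\]
and then combine factors using multiplicativity $r_1(G_1\times\cdots\times G_t)=\prod_i r_1(G_i)$. Superadditivity of $x\mapsto x^2$ on $x\ge0$ gives $\sum_i c(\log r_1(G_i))^2\le c(\sum_i\log r_1(G_i))^2=c(\log r_1(G))^2$. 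This requires $\log r_1(G_i)\ge0$, which holds since $r_1\ge 1$. It also requires the per-factor bound to have no additive constant; we absorb any constant using $r_1(G_i)\ge 4$ for nontrivial $G_i$, so that $\log r_1(G_i)\ge\log 4$ is bounded below. Prime cyclic factors $C_\ell$ have $r_1=\ell^2$ and $\log|C_\ell|=\tfrac12\log r_1\le c(\log r_1)^2$ once $c\ge 1/(2\log 4)$, so they are included at no cost.

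For an almost-simple $G$ with socle $S$, we first reduce to $S$. We have $|G|\le|S|\,|\mathrm{Out}(S)|$, and $|\mathrm{Out}(S)|$ is polylogarithmic in $|S|$, say at most $C'\log|S|$ by CFSG (diagonal, field and graph automorphisms). Hence $\log|G|\le 2\log|S|$ for large $|S|$. In the other direction, Lemma~\ref{lem:basic} gives $r_1(S)\le[G:S]\,r_1(G)$, so $\log r_1(G)\ge \log r_1(S)-\log|\mathrm{Out}(S)|$. Once $\log r_1(S)$ is shown to be at least of order $\sqrt{\log|S|}$, the term $O(\log\log|S|)$ is negligible. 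So it suffices to prove $\log|S|\le c(\log k(S))^2$ for non-Abelian simple $S$, with sporadic groups and any finite set of exceptions absorbed into the constant.

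The families are handled separately. For $S=A_n$, we have $k(A_n)\ge p(n)/2$ with $\log p(n)\sim\pi\sqrt{2n/3}$, while $\log|A_n|\sim n\log n$; thus $\log|A_n| = O((\log k)^2\log\log k)$. That is \emph{not} $O((\log k)^2)$, so here we must use $r_1$ rather than $k$. Corollary~\ref{cor:symmetric} gives $\log r_1(A_n)\sim Bn^{2/3}$, hence $\log|A_n|\sim n\log n = O((\log r_1)^{3/2}\log\log r_1)=O((\log r_1)^2)$. For $S$ of Lie type with rank $\ell$ over $\mathbb F_q$, the standard class-number bound $k(S)\ge q^{\ell}/|Z|\,\cdot$(bounded factor) follows from counting semisimple classes via a maximal torus and Weyl group, e.g.\ $k(S)\ge q^\ell/(|W|\cdot d)$ with $d\le \ell+1$. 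Meanwhile $\log|S|\le (\dim \mathbf G)\log q = O(\ell^2\log q)$. When $q$ dominates, $\log k\gtrsim \ell\log q - \ell\log \ell$. The danger is small $q$ and large $\ell$, where $|W|\approx \ell!$ can swamp $q^\ell$; there we use partition-type lower bounds instead. Unipotent or semisimple class counts give $k(S)\ge \exp(c\sqrt{\ell})$, so $\log k\gtrsim\sqrt{\ell}$, which is too weak against $\ell^2$. So again $r_1$ is needed. We lower-bound $r_1(S)$ using $r_1(S)\ge k(C_S(x))$ summed over many classes, or through a large Abelian subgroup $A$: by Lemma~\ref{lem:basic}, $r_1(S)\ge r_1(A)/[S:A]$. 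That is useless when the index is large, so instead we use the bound $h_3(S)\ge |A|^3$, giving $r_1(S)\ge |A|^3/|S|$. Lie-type groups contain elementary Abelian unipotent subgroups of order $q^{\lfloor \ell^2/4\rfloor}$ (type $A$; analogous for the other types), against $\log|S|\approx\ell^2\log q$. The cube beats $|S|$ only when $3\cdot\ell^2/4>\ell^2$, which fails, so this bound needs refinement.

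The main obstacle is therefore the small-$q$, large-$\ell$ classical groups. The plan is to pick commuting triples inside a maximal Abelian unipotent radical $U$ and its normalizer (a parabolic $P=LU$), and count $\Hom(\mathbb Z^3,P)\ge|U|^2\cdot(\text{commuting pairs in }L)$. This yields $\log r_1(S)\ge c\,\ell^2\log q - O(\ell\log q)$ up to the index losses, so $\log r_1\asymp\log|S|$ and the quadratic bound holds with room to spare. If that estimate proves too delicate, we fall back on $\log r_1(S)\ge c\,\ell\log q$. This fallback suffices whenever $\ell^2\log q\le C(\ell\log q)^2$, i.e.\ always since $\log q\ge\log 2$, and it still gives the desired $(\log R)^2$ envelope. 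Finally, inverting $2\log|G|\le C(\log r_1(G))^2$ under $r_1(G)\le R$ gives $F^{(1)}_{\mathrm{a.s.}}(R)\le C(\log R)^2$.
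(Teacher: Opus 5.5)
Your skeleton matches the paper's: reduce to the socle via $r_1(S)\le[G:S]\,r_1(G)$ and the smallness of $|\mathrm{Out}(S)|$; treat $A_n$ through $\log r_1(A_n)\sim Bn^{2/3}$, correctly noting that $k(A_n)$ alone is too weak; absorb the sporadics; and combine direct factors, including $C_p$ factors, by multiplicativity and $\sum x_i^2\le(\sum x_i)^2$. The gap is the Lie-type case with $q$ bounded and $\ell\to\infty$. You flag it as the main obstacle but never close it. The premise behind the obstacle is also wrong. The $|W|$ loss in $k(S)\ge q^\ell/(|W|d)$ comes from counting elements of a single maximal torus modulo the Weyl group. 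At $q=2$ the split torus is even trivial. Likewise $\exp(c\sqrt\ell)$ is a gross undercount. By Steinberg's theorem the simply connected group $\mathbf G_{\mathrm{sc}}^F$ has exactly $q^\ell$ semisimple classes. Apart from finitely many small cases, $S=\mathbf G_{\mathrm{sc}}^F/Z$ with $|Z|\le\max(\ell+1,4)$, so $k(S)\ge q^\ell/\max(\ell+1,4)$. For instance, characteristic polynomials already give $k(\mathrm{GL}_n(2))\ge 2^{n-1}$. This is the input the paper uses, in the form $\log k(S)\ge c_0\ell\log q-C_1\log(\ell+1)-C_2$ from Fulman--Guralnick. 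Combined with $\log|S|\le C_0\ell^2\log q\le (C_0/\log 2)(\ell\log q)^2$, it settles the Lie-type case with $k(S)$ alone; no commuting-triple counting is needed.

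Your substitute does not supply the missing bound. The estimate $\Hom(\mathbb Z^3,P)\ge|U|^2\cdot(\text{commuting pairs in }L)$ is not a valid count of commuting triples. The Levi factor acts nontrivially on the unipotent radical. For the type-$A$ maximal parabolic, $L=\mathrm{GL}_b\times\mathrm{GL}_{n-b}$ acts on $U\cong M_{b\times(n-b)}(\mathbb F_q)$ by $X\mapsto AXB^{-1}$, so elements of $U$ and $L$ do not assemble into commuting triples in this product fashion. Even granting the count, $|U|^2\,|L|\,k(L)/|S|\approx q^{n^2/2}\,q^{n^2/2+n}/q^{n^2}=q^{n}$. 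That gives only the $\ell\log q$ scale, not the claimed $c\,\ell^2\log q$. It also presupposes $k(L)\approx q^{n}$, which is the very semisimple-class count you doubted. Your fallback $\log r_1(S)\ge c\,\ell\log q$ is the right target, but it is asserted without proof. As written, the Lie-type case with small $q$ and large rank is unproved.
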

\begin{proof}
We divide the proof according to the classification of simple socles $S$.

\emph{Lie-type socle.}
Let $S$ be a simple group of Lie type of rank $\ell$ over a finite field of size $q$.
By standard bounds on Lie-type orders and outer automorphisms \cite{GLS}, along with
the class-number estimates of Fulman and Guralnick
\cite[Theorem 1.1 and Sections 3--5]{FG}, there exist
absolute positive constants $C_0, c_0, C_1, C_2, C_3$ such that
\begin{align}
 \log|S| &\le C_0 \ell^2 \log q, \notag\\
 \log k(S) &\ge c_0 \ell \log q - C_1 \log(\ell+1) - C_2, \notag\\
 |\operatorname{Out}(S)| &\le C_3 (\ell+1)\log(2q).
 \label{eq:lie-input}
\end{align}
For the classical families, the algebraic-group class counts pass to simple
central quotients with a loss at most polynomial in $\ell+1$: a central
quotient can merge at most the order of its center many classes, and the
additional indices are bounded. The exceptional families have bounded rank.
Their class counts, including those of Suzuki and Ree groups, grow at least
as a fixed positive power of the field size. Reducing $c_0$ therefore gives
the uniform coarse bound in \eqref{eq:lie-input}, without requiring a literal
$q^\ell$ estimate in every field-size convention.
Letting $x = \ell \log q \ge \log 2$, all subleading terms in \eqref{eq:lie-input}
are bounded by $O(\log(2+x))$.
For any intermediate group $S \le G \le \operatorname{Aut}(S)$, Lemma~\ref{lem:basic} gives
\begin{equation}
 \log r_1(G) \ge \log r_1(S) - \log[G : S] \ge \log k(S) - \log|\operatorname{Out}(S)|
 \ge c_0 x - O(\log(2+x)).
\end{equation}
For sufficiently large $x$, this yields $\log r_1(G) \ge c x$ for some absolute $c > 0$.
Conversely, the group order satisfies
\begin{equation}
 \log|G| \le \log|S| + \log|\operatorname{Out}(S)| \le C_0 \ell^2 \log q + O(\log(2+x))
 \le C' x^2,
\end{equation}
since $\ell \ge 1$ and $\log q \ge \log 2$.
Combining these inequalities gives $2\log|G| \le C(\log r_1(G))^2$.
The finitely many groups with bounded $x$ are accommodated by increasing $C$.

\emph{Alternating and sporadic socles.}
For an alternating socle $S = A_n$, Corollary~\ref{cor:symmetric} gives $\log r_1(A_n) \sim B n^{2/3}$.
Because $|\operatorname{Out}(A_n)| \le 4$ for all $n$, we have $\log r_1(G) \ge c n^{2/3}$
for large $n$, while $\log|G| \sim n\log n = O(n^{4/3})$.
Thus, $2\log|G| = O((\log r_1(G))^{3/2}\log\log r_1(G)) = o((\log r_1(G))^2)$.
The 26 sporadic groups and their automorphism groups form a finite set, absorbed
by adjusting $C$.

\emph{Direct products.}
Let $G = \prod_{i=1}^m G_i$, where each factor $G_i$ is almost simple.
Because $r_1$ and $|G|$ are strictly multiplicative (Lemma~\ref{lem:basic}),
\begin{equation}
 2\log|G| = \sum_{i=1}^m 2\log|G_i|
 \le C \sum_{i=1}^m (\log r_1(G_i))^2
 \le C \left(\sum_{i=1}^m \log r_1(G_i)\right)^2
 = C (\log r_1(G))^2.
\end{equation}
If cyclic prime-order factors $C_p$ are included, $r_1(C_p) = p^2$, which satisfies
$2\log p \le (2\log 2)^{-1}(\log r_1(C_p))^2$.
Enlarging $C$ to exceed $(2\log 2)^{-1}$ covers these factors simultaneously.
\end{proof}

Theorem~\ref{thm:almostsimple} proves the upper-bound direction of the
quadratic-logarithmic conjecture for almost-simple groups and their direct
products: $\cD_{\max,\mathrm{a.s.}}^{(1)}(R) \le \exp(C(\log R)^2/2)$.
It does not produce a family saturating this scale, nor does it cover all
radical-free groups.

\subsection{Rank cost of gauge-group quotients}
\label{subsec:quotients}

In building larger gauge groups from simpler constituents, group extensions play a
fundamental role.
Let $N \triangleleft G$ be a normal subgroup with $|N| = a$, and consider the quotient
group $Q = G/N$.
The total quantum dimension changes monotonically: $\cD(G,1) = a \cD(Q,1)$.
However, the categorical rank $r_1$ does not obey simple monotonicity: commuting pairs
in the quotient $Q$ do not necessarily lift to commuting pairs in $G$.
Consequently, passing to a quotient can actually \emph{increase} the rank per unit
quantum dimension.

These are comparisons of gauge-group presentations; no particular anyon
condensation process or microscopic interpolation is assumed.

The following lemma bounds the behavior of conjugacy class numbers under subgroups
and quotients.

\begin{lemma}[Class-number relations]\label{lem:classes}
If $H \le K$ and $M \triangleleft K$, then
\begin{equation}
 k(K) \le [K : H]\,k(H), \qquad k(K/M) \le k(K).
\end{equation}
Furthermore, if $M \le Z(K)$ is central with $|M| = a$, then
\begin{equation}
 k(K) \ge k(K/M) + a - 1.
 \label{eq:surplus}
\end{equation}
\end{lemma}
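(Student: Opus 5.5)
We prove the three inequalities separately, using only elementary counting of conjugacy classes.

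\emph{Subgroup bound $k(K)\le [K:H]\,k(H)$.} We use the commuting-pair interpretation $k(K)=|\{(x,y)\in K^2: xy=yx\}|/|K|$. Every commuting pair in $H$ is a commuting pair in $K$, but this gives the wrong direction. So instead we bound $k(K)$ through $H$-orbits. Each $K$-conjugacy class $\mathcal C$ meets $H$ or not, so we cannot restrict naively. The standard route is Gallagher's argument:
\[
|K|\,k(K)=\sum_{x\in K}|C_K(x)|, \qquad |C_K(x)|\le [K:H]\,|C_H(x)\cdot|\ \text{-type estimates}.
\]
Concretely, I would use $|C_K(x)|\le [K:H]\,|C_K(x)\cap H|$, which holds since $C_K(x)\cap H$ has index at most $[K:H]$ in $C_K(x)$. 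Summing over $x\in K$ then gives
\[
|K|k(K)\le [K:H]\cdot\#\{(x,h)\in K\times H: xh=hx\}=[K:H]\sum_{h\in H}|C_K(h)|.
\]
Applying the same index estimate once more in the other variable seems to yield $[K:H]^2$, so this needs care. Gallagher's actual proof instead counts, for each $h\in H$, the $K$-orbit data via characters. I will cite Gallagher's inequality $k(H)\le [K:H]k(K)$ together with its companion $k(K)\le [K:H]k(H)$, both standard. If a self-contained proof is wanted, the cleanest approach is through induced characters: every $\chi\in\Irr(K)$ is a constituent of $\psi^K$ for some $\psi\in\Irr(H)$. Frobenius reciprocity gives that $\psi^K$ has at most $\langle\psi^K,\psi^K\rangle\le [K:H]$ distinct constituents, since $\langle \psi^K,\psi^K\rangle=\langle \psi,(\psi^K)_H\rangle$ and $(\psi^K)_H$ has degree $[K:H]\psi(1)$. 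Hence $k(K)\le\sum_\psi[K:H]=[K:H]k(H)$. I regard this character-theoretic step as the main point and will present it this way.

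\emph{Quotient bound $k(K/M)\le k(K)$.} Irreducible characters of $K/M$ inflate injectively to irreducible characters of $K$. Equivalently, the preimages of distinct conjugacy classes of $K/M$ are disjoint unions of $K$-classes, each containing at least one class.

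\emph{Central surplus \eqref{eq:surplus}.} Let $M\le Z(K)$ with $|M|=a$. Each element of $M$ is its own $K$-class, giving $a$ singleton classes. They all map to the identity class of $K/M$. The preimage of every non-identity class of $K/M$ is a union of at least one $K$-class disjoint from $M$. Thus
\[
k(K)\ge a+(k(K/M)-1).
\]
This step is routine; the only obstacle in the lemma is the subgroup inequality, handled by the Frobenius-reciprocity count above.
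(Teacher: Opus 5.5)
Your proof is correct and is essentially the paper's argument: both use Frobenius reciprocity to place every $\chi\in\Irr(K)$ inside some induced character $\psi^K$ and then show that $\psi^K$ has at most $[K:H]$ distinct constituents (you via $\langle\psi^K,\psi^K\rangle\le[K:H]$, the paper via the degree count $\chi(1)\ge\psi(1)$ against $\psi^K(1)=[K:H]\psi(1)$), and your quotient and central-surplus arguments coincide with the paper's. Incidentally, the commuting-pair route you abandoned already closes: the two index estimates give $|K|\,k(K)\le[K:H]^2\,|H|\,k(H)$, which is exactly $k(K)\le[K:H]\,k(H)$ because $|K|=[K:H]\,|H|$, so the factor $[K:H]^2$ was not actually an obstacle.
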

\begin{proof}
For the subgroup inequality, each irreducible character $\chi \in \Irr(K)$ restricts
to $H$ as a sum of constituents $\theta \in \Irr(H)$.
By Frobenius reciprocity, $\chi$ occurs in the induced character $\operatorname{Ind}_H^K \theta$.
Every such constituent has degree $\chi(1)\ge\theta(1)$, since its restriction
contains $\theta$. Because $\dim(\operatorname{Ind}_H^K \theta) = [K:H]\dim\theta$, at most $[K:H]$ distinct
irreducible characters of $K$ can contain $\theta$ as a constituent, proving
$k(K) \le [K:H]k(H)$ \cite{Gallagher}.
For quotients, inflation embeds $\Irr(K/M)$ injectively into $\Irr(K)$, so $k(K/M) \le k(K)$.
When $M \le Z(K)$, the identity class of $K/M$ lifts to $a$ distinct central singleton
classes in $K$, while every other class in $K/M$ has at least one pre-image class in $K$,
yielding \eqref{eq:surplus}.
\end{proof}

Using this lifting analysis, we bound the rank inflation under arbitrary group quotients.

\begin{theorem}[Quotient loss bound]\label{thm:quotient}
Let $N \triangleleft G$ with $a = |N|$, and let $Q = G/N$. Then
\begin{equation}
 r_1(Q) \le a\,r_1(G), \qquad h_3(Q) \le h_3(G).
 \label{eq:quotient}
\end{equation}
Consequently, $h_3(S) \le h_3(G)$ for every section $S = H/K$ of $G$ (where $K \triangleleft H \le G$).
\end{theorem}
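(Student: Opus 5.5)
Since $r_1(Q)=h_3(Q)/|Q|=a\,h_3(Q)/|G|$, the first inequality is equivalent to $h_3(Q)\le h_3(G)$. The whole theorem therefore reduces to one counting statement: a quotient has at most as many commuting triples as the group itself. I would prove this as a general monotonicity of commuting-tuple counts under quotients, via the character-theoretic formula for commuting pairs iterated through centralizers.

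\emph{Step 1: pairs.} The number of commuting pairs is $h_2(K)=|K|\,k(K)$. For $Q=G/N$, Lemma~\ref{lem:classes} gives $k(Q)\le k(G)$, so
\[
h_2(Q)=|Q|\,k(Q)\le \tfrac{|G|}{a}\,k(G)=\tfrac{1}{a}\,h_2(G)\le h_2(G).
\]

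\emph{Step 2: triples.} I would fiber over the first coordinate: $h_3(Q)=\sum_{\bar x\in Q}h_2(C_Q(\bar x))$. The difficulty is that $C_Q(\bar x)$ need not be the image of $C_G(x)$, because commuting in $Q$ does not lift to commuting in $G$. So instead I would use the Burnside formula $h_3(K)=|K|\,r_1(K)$ with $r_1(K)=\sum_{[x]}k(C_K(x))$, together with a coset-averaging argument. For each $\bar x$, the preimage $\tilde C=\{g: [g,x]\in N\}$ is a subgroup of $G$ with $\tilde C/N=C_Q(\bar x)$. The goal is to bound $\sum_{\bar x}|\tilde C/N|\,k(\tilde C/N)$ by $\sum_{x\in G}|C_G(x)|\,k(C_G(x))$.

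The cleaner route I would try first is the generating-function/character identity $h_3(K)=|K|\sum_{\chi}\ldots$. Better still, one can write $h_3(K)=\sum_{(x,y)\text{ commuting}}|C_K(x,y)|$ and use the Frobenius–Schur-type formula
\[
\frac{h_3(K)}{|K|^2}=\sum_{\chi\in\Irr(K)}\sum_{\ldots}
\]
Equivalently, $h_3(K)=|K|\sum_{g}$ (number of irreducible characters of $C_K(g)$). I expect the key inequality to come from $k(\tilde C/N)\le k(\tilde C)$ together with a comparison of $k(\tilde C)$ against the centralizer data in $G$. That comparison is the \textbf{main obstacle}.

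\emph{Fallback.} As an alternative I would use the known monotonicity $|\Hom(\Gamma,G/N)|\le|\Hom(\Gamma,G)|$ for $\Gamma=\mathbb Z^k$. The route is to count, for each commuting triple in $Q$, its lifts $(g_1,g_2,g_3)\in G^3$ with $[g_i,g_j]\in N$, of which there are exactly $a^3$ per triple. One then shows, by a character-sum (Fourier on $N$-cosets, i.e.\ projecting onto $N$-invariant functions), that the fraction of these lifts that genuinely commute averages to at least $a^{-3}$. Concretely, I would write the commutation indicator via class functions, $\mathbf 1[[g,h]=1]=\frac{1}{|G|}\sum_{\chi}\chi(1)\chi([g,h])$, express $h_3$ as $|G|^2\sum_\chi\ldots\ge 0$-weighted sums, and note that restricting to characters of $G$ trivial on $N$ (inflated from $Q$) reproduces exactly $h_3(Q)$ scaled appropriately. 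The positivity of the remaining terms then gives the inequality; establishing that positivity is the crux.

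\emph{Step 3: sections.} Finally, for a section $S=H/K$, I would combine the two monotonicities. First, $h_3(H/K)\le h_3(H)$ by the quotient case. Second, $h_3(H)\le h_3(G)$ because every commuting triple in $H$ is a commuting triple in $G$, as in the proof of Lemma~\ref{lem:basic}. Chaining these gives $h_3(S)\le h_3(G)$.
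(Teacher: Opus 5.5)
Your reduction of $r_1(Q)\le a\,r_1(G)$ to $h_3(Q)\le h_3(G)$ is correct, and so are the pair count in Step~1 and the section argument in Step~3. The gap is that the only nontrivial inequality, $h_3(Q)\le h_3(G)$, is never proved. In Step~2 you rightly introduce $\tilde C=\pi^{-1}(C_Q(\bar x))$, with $\pi\colon G\to Q$ the projection, and you note $k(\tilde C/N)\le k(\tilde C)$. You then leave the comparison of $k(\tilde C)$ with $k(C_G(x))$ open as ``the main obstacle''.

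The fallback does not close it. Calling $|\Hom(\mathbb Z^3,G/N)|\le|\Hom(\mathbb Z^3,G)|$ ``known'' assumes the statement being proved. In the natural character expansion (one character sum per commutation condition), the purely inflated part is exactly $h_3(Q)$. So your ``remaining terms'' add up to precisely $h_3(G)-h_3(Q)$, and proving their positivity is the theorem itself. That positivity is also not available term by term. Each commutation indicator splits as $\frac1a\mathbf 1[[g,h]\in N]+E(g,h)$, and $E(g,h)=-\frac1a$ whenever $1\ne[g,h]\in N$.

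The missing step is a one-line index bound. Since $\tilde C/N=C_Q(\bar x)$, conjugation by any $g\in\tilde C$ sends $x$ into the coset $xN$. Hence the $\tilde C$-orbit of $x$ has at most $a$ elements, and $[\tilde C:C_G(x)]\le a$. Gallagher's inequality, the first part of Lemma~\ref{lem:classes}, then gives
\[
k(C_Q(\bar x))=k(\tilde C/N)\le k(\tilde C)\le[\tilde C:C_G(x)]\,k(C_G(x))\le a\,k(C_G(x)).
\]
Now sum over classes $[\bar x]\subset Q$, using chosen lifts $x$; lifts of distinct $Q$-classes lie in distinct $G$-classes. This gives $r_1(Q)\le a\,r_1(G)$, hence $h_3(Q)\le h_3(G)$. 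It is simpler to work with the class sum $r_1$ than with your element-weighted sum over $\bar x\in Q$, which would need extra orbit bookkeeping in each coset.

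The paper's proof uses the same two class-number inequalities, applied through the image $H_x=\pi(C_G(x))\cong C_G(x)/C_N(x)$ instead of the preimage $\tilde C$. It shows $[C_Q(\bar x):H_x]\le|C_N(x)|$, then applies Gallagher's inequality to $H_x\le C_Q(\bar x)$ and the quotient inequality to $H_x$. The per-class factor there is $|C_N(x)|$ rather than $[\tilde C:C_G(x)]$, but both are at most $a$.
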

\begin{proof}
Let $\pi: G \to Q$ be the canonical projection.
For each conjugacy class $[\bar{x}] \subset Q$, choose a representative $\bar{x} \in Q$
and a lift $x \in G$ with $\pi(x) = \bar{x}$.
Define the pre-image subgroup $X_x := \pi^{-1}(C_Q(\bar{x}))$ and the projected centralizer
$H_x := \pi(C_G(x))$.
The $X_x$-conjugates of $x$ lie entirely within the coset $xN$, so $[X_x : C_G(x)] \le |N| = a$.
Using the isomorphism theorem $[C_G(x)N : C_G(x)] = [N : C_N(x)]$, we find
\begin{equation}
 [C_Q(\bar{x}) : H_x] = [X_x : C_G(x)N] \le |C_N(x)| \le a.
 \label{eq:index}
\end{equation}
Because $H_x \cong C_G(x)/C_N(x)$, Lemma~\ref{lem:classes} yields
\begin{equation}
 k(C_Q(\bar{x})) \le |C_N(x)|\,k(H_x) \le |C_N(x)|\,k(C_G(x)).
\end{equation}
Because the chosen lifts $x$ of distinct quotient conjugacy classes belong to distinct
$G$-conjugacy classes, summing over all $[\bar{x}] \subset Q$ gives
\begin{equation}
 r_1(Q) = \sum_{[\bar{x}] \subset Q} k(C_Q(\bar{x}))
 \le \sum_{[\bar{x}] \subset Q} |C_N(x)|\,k(C_G(x))
 \le a \sum_{[x] \subset G} k(C_G(x)) = a\,r_1(G).
\end{equation}
Multiplying by $|Q| = |G|/a$ yields $h_3(Q) \le h_3(G)$.
Composing this quotient monotonicity with subgroup monotonicity $h_3(H) \le h_3(G)$
proves the assertion for arbitrary sections.
\end{proof}

\begin{corollary}[Commuting probability comparison]\label{cor:prob}
For any normal subgroup $N \triangleleft G$ of order $a = |N|$, the commuting probabilities
satisfy
\begin{equation}
 P_3(G) \le P_3(G/N) \le a^3 P_3(G).
\end{equation}
\end{corollary}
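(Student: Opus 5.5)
Both inequalities come from rewriting the quotient loss bound of Theorem~\ref{thm:quotient} in terms of commuting probabilities. Set $Q=G/N$ with $|Q|=|G|/a$. By definition \eqref{eq:definitions}, $P_3(G)=h_3(G)/|G|^3$ and $P_3(Q)=h_3(Q)/|Q|^3=a^3h_3(Q)/|G|^3$, so the claim is equivalent to
\[
 \frac{h_3(G)}{a^3}\le h_3(Q)\le h_3(G).
\]
The right-hand inequality is exactly $h_3(Q)\le h_3(G)$ from \eqref{eq:quotient}. Multiplying it by $a^3/|G|^3$ gives $P_3(G/N)\le a^3P_3(G)$.

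The left-hand inequality, $P_3(G)\le P_3(G/N)$, needs a separate counting argument, and it is the only genuinely new step. The approach is to fibre the commuting triples of $G$ over their images in $Q$. If $(g,h,x)\in G^3$ pairwise commute, then their images $(\pi(g),\pi(h),\pi(x))$ pairwise commute in $Q$. So $\pi^{\times3}$ maps $\Hom(\mathbb Z^3,G)$ into $\Hom(\mathbb Z^3,Q)$, and
\[
 h_3(G)=\sum_{(\bar g,\bar h,\bar x)\in\Hom(\mathbb Z^3,Q)}\#\{\text{commuting lifts of }(\bar g,\bar h,\bar x)\}.
\]

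It therefore suffices to show that each commuting triple in $Q$ has at most $a^3$ commuting lifts. This holds because the full fibre of $\pi^{\times3}$ over any triple is the product of three cosets of $N$, which has exactly $a^3$ elements, so the commuting lifts are a subset of it. This gives $h_3(G)\le a^3h_3(Q)$, and dividing by $|G|^3=a^3|Q|^3$ yields $P_3(G)\le P_3(Q)$.

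No step here is a real obstacle. The only point requiring care is not to reuse Theorem~\ref{thm:quotient} for the lower inequality, since it gives the other direction. One should also check that the normalizations $|G|^3$ versus $|Q|^3$ produce the factor $a^3$ on the correct side.
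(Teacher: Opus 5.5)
Your proposal is correct and follows the paper's proof. The upper bound is the normalized form of $h_3(G/N)\le h_3(G)$ from Theorem~\ref{thm:quotient}. The lower bound is the observation that commuting triples in $G$ project to commuting triples in $G/N$; the paper only cites this, while you make the fibre count $h_3(G)\le a^3h_3(G/N)$ explicit.
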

\begin{proof}
The lower bound $P_3(G) \le P_3(G/N)$ reflects the fact that commuting in $G$ implies
commuting in the quotient $G/N$ \cite{LSprob}.
The upper bound follows directly from $h_3(G/N) \le h_3(G)$ via the normalization
$P_3(G) = h_3(G)/|G|^3$, noting $|G|^3 = a^3 |G/N|^3$.
\end{proof}

\subsection{Central extensions and flux-resolved constraints}
\label{subsec:central}
\label{subsec:sharp-loss}

The center $Z(G)$ consists of elements commuting with every element
of $G$. An extension is central when its kernel lies in $Z(G)$.
In this case, the obstruction to lifting commuting elements can be resolved
flux by flux.

Assume $A \le Z(G)$, $a = |A|$, $Q = G/A$, and choose a lift $x \in G$ of $\bar{x} \in Q$.
With commutator convention $[x,y] = x^{-1}y^{-1}xy$, define
\begin{equation}\label{eq:phi}
 \phi_x: C_Q(\bar{x}) \longrightarrow A, \qquad
 \phi_x(\bar{y}) = [x,y], \qquad
 B_x = \operatorname{im}\phi_x, \quad b_x = |B_x|.
\end{equation}
Because $A$ is central, the commutator $[x,y] \in A$ is independent of the choice
of lifts, and $\phi_x$ is a well-defined group homomorphism.
The integer $b_x$ measures the failure of the centralizer $C_Q(\bar{x})$ to lift
to commuting elements with $x$. When $b_x = 1$, every element in $C_Q(\bar{x})$ lifts
to an element commuting with $x$. In general, $a/b_x$ counts how many flux classes
lie above $[\bar{x}]$.

\begin{theorem}[Central lifting and weighted correction]\label{thm:central}
In this notation,
\begin{align}
 k(G) &= \sum_{[\bar{x}] \subset Q} \frac{a}{b_x}, \label{eq:classlift}\\
 r_1(G) &= \sum_{[\bar{x}] \subset Q} \frac{a}{b_x} k(C_G(x)). \label{eq:exact}
\end{align}
Furthermore,
\begin{equation}\label{eq:weighted}
 a\,r_1(G) \ge \sum_{[\bar{x}] \subset Q} \left(\frac{a}{b_x}\right)^2 k(C_Q(\bar{x})) + a(a-1)k(G).
\end{equation}
In particular,
\begin{equation}\label{eq:centralbound}
 a\,r_1(G) \ge r_1(Q) + (a^2-1)k(Q) + a(a-1)k(G).
\end{equation}
Dropping the last term gives $a\,r_1(G) \ge r_1(Q) + (a^2-1)k(Q)$.
\end{theorem}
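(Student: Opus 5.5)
The plan is to resolve everything flux by flux over the quotient $Q=G/A$, using the coset $xA$ above each representative $\bar x$. Each step then reduces to Lemma~\ref{lem:classes} applied to the centralizer $C_G(x)$ and its image in $C_Q(\bar x)$.

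\emph{Class lifting \eqref{eq:classlift}.} Every $G$-class lying over $[\bar x]$ meets the coset $xA$. So I will count the orbits on $xA$ under the elements $y\in G$ that map $xA$ into itself. Such $y$ are exactly the elements of $X_x=\pi^{-1}(C_Q(\bar x))$. For these, $y^{-1}xy=x[x,y]=x\,\phi_x(\bar y)$.

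Since $A$ is central, $y^{-1}(xz)y=xz\,\phi_x(\bar y)$ for every $z\in A$. Hence $X_x$ acts on $xA\cong A$ by translations through the subgroup $B_x$. There are therefore exactly $a/b_x$ orbits, and these are precisely the $G$-classes above $[\bar x]$. Summing over $[\bar x]$ gives \eqref{eq:classlift}.

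\emph{Rank formula \eqref{eq:exact}.} For central $z$ we have $C_G(xz)=C_G(x)$. So each of the $a/b_x$ classes above $[\bar x]$ contributes the same term $k(C_G(x))$ to \eqref{eq:centralizers}, which gives \eqref{eq:exact}.

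\emph{Weighted bound \eqref{eq:weighted}.} The image $\pi(C_G(x))$ is $K_x:=\ker\phi_x$, because $y$ commutes with $x$ iff $[x,y]=1$. Thus $C_G(x)/A\cong K_x$ with $A$ central in $C_G(x)$, and $[C_Q(\bar x):K_x]=b_x$. Lemma~\ref{lem:classes} then gives two inequalities: $k(C_G(x))\ge k(K_x)+a-1$ from \eqref{eq:surplus}, and $k(C_Q(\bar x))\le b_x k(K_x)$ from the subgroup bound. Combining them,
\[
 k(C_G(x))\ \ge\ \frac{k(C_Q(\bar x))}{b_x}+a-1 .
\]
Multiply \eqref{eq:exact} by $a$ and insert this inequality term by term. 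The first piece gives $\sum (a/b_x)^2k(C_Q(\bar x))$. The second piece gives $a(a-1)\sum a/b_x$, which equals $a(a-1)k(G)$ by \eqref{eq:classlift}. This proves \eqref{eq:weighted}.

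\emph{Consequence \eqref{eq:centralbound}.} Since $b_x\le a$, every weight satisfies $(a/b_x)^2\ge1$, so the sum is at least $\sum_{[\bar x]}k(C_Q(\bar x))=r_1(Q)$. On top of this I keep the extra contribution of the identity flux $\bar x=1$. There $x\in A$ is central, so $b_1=1$ and $C_Q(1)=Q$. Its term is therefore $a^2k(Q)$ rather than $k(Q)$, which supplies the surplus $(a^2-1)k(Q)$.

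The main point needing care is the orbit description in the first step. One must check that conjugation by $y\in X_x$ restricted to $xA$ is exactly translation by $\phi_x(\bar y)$, and that no element outside $X_x$ can map a point of $xA$ back into $xA$. After that, the rest is bookkeeping with Lemma~\ref{lem:classes}.
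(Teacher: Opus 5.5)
Your proposal is correct and follows the paper's proof essentially step for step. It uses the same kernel $K_x=\ker\phi_x$ with $C_G(x)/A\cong K_x$, the same two applications of Lemma~\ref{lem:classes}, the same weighting by $a^2/b_x$ through \eqref{eq:exact} and \eqref{eq:classlift}, and the same identity-flux surplus $(a^2-1)k(Q)$. The only difference is cosmetic and occurs in \eqref{eq:classlift}: you count $X_x$-orbits on the fibre $xA$ directly as cosets of $B_x$, whereas the paper computes $|C_G(x)|=a|C_Q(\bar x)|/b_x$ and divides the $a[Q:C_Q(\bar x)]$ preimage elements by the common class size $[Q:C_Q(\bar x)]\,b_x$.
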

\begin{proof}
Let $K_x = \ker\phi_x$. Then $C_G(x)/A \cong K_x$, and $|C_G(x)| = a|C_Q(\bar{x})|/b_x$.
There are $a[Q : C_Q(\bar{x})]$ elements in the coset pre-image of $[\bar{x}]$.
Each $G$-class above it has size $[Q : C_Q(\bar{x})]b_x$, because conjugating a
representative image to $\bar{x}$ gives a lift $xz$ ($z \in A$) with $C_G(xz) = C_G(x)$.
Thus precisely $a/b_x$ classes lie above $[\bar{x}]$, with identical centralizers.
This proves \eqref{eq:classlift} and \eqref{eq:exact}.

To bound the electric charges in each lifted class, note that $A \le Z(C_G(x))$.
Applying Lemma~\ref{lem:classes} to $C_G(x)$ and then to $K_x \le C_Q(\bar{x})$ of index $b_x$ gives
\begin{equation}
 k(C_G(x)) \ge k(K_x) + a - 1 \ge b_x^{-1} k(C_Q(\bar{x})) + a - 1.
\end{equation}
Multiplying by $a^2/b_x$ and summing using \eqref{eq:exact} yields \eqref{eq:weighted},
noting $a^2(a-1)\sum b_x^{-1} = a(a-1)k(G)$.
Because $(a/b_x)^2 \ge 1$ with $(a/b_1)^2 = a^2$, retaining this excess yields \eqref{eq:centralbound}.
\end{proof}

\begin{corollary}[Central towers]\label{cor:tower}
Suppose $G_0 \twoheadrightarrow G_1 \twoheadrightarrow \cdots \twoheadrightarrow G_t$ has central
kernels of orders $a_i$ in the maps $G_i \to G_{i+1}$. Setting $A_i = \prod_{j<i} a_j$ and $A_0 = 1$,
\begin{equation}\label{eq:tower}
 A_t\,r_1(G_0) \ge r_1(G_t) + \sum_{i=0}^{t-1} \frac{A_t}{A_{i+1}} \bigl((a_i^2-1)k(G_{i+1}) + a_i(a_i-1)k(G_i)\bigr).
\end{equation}
\end{corollary}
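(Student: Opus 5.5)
The inequality \eqref{eq:tower} follows from iterating the single-step central bound \eqref{eq:centralbound} of Theorem~\ref{thm:central} by induction on the tower length $t$. For each $i$, write $T_i := (a_i^2-1)k(G_{i+1}) + a_i(a_i-1)k(G_i)$ for the correction term attached to the $i$-th map. Apply Theorem~\ref{thm:central} with $G = G_i$, with $A$ the kernel of $G_i \twoheadrightarrow G_{i+1}$, and with $Q = G_{i+1}$; this is legitimate because the kernel is central by hypothesis and has order $a_i$. The result is
\begin{equation*}
 a_i\, r_1(G_i) \ge r_1(G_{i+1}) + T_i, \qquad 0 \le i \le t-1 .
\end{equation*}

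\emph{Base case.} For $t=0$ the statement reads $r_1(G_0)\ge r_1(G_0)$, since $A_0=1$ and the sum is empty. For $t=1$ we have $A_1 = a_0$ and $A_1/A_1 = 1$, so \eqref{eq:tower} is exactly the single-step inequality with $i=0$.

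\emph{Inductive step.} Assume
\begin{equation*}
 A_t\, r_1(G_0) \ge r_1(G_t) + \sum_{i<t} \frac{A_t}{A_{i+1}} T_i .
\end{equation*}
Multiply both sides by the positive integer $a_t$, using $a_t A_t = A_{t+1}$. This gives
\begin{equation*}
 A_{t+1}\, r_1(G_0) \ge a_t\, r_1(G_t) + \sum_{i<t} \frac{A_{t+1}}{A_{i+1}} T_i .
\end{equation*}
Next substitute the single-step bound $a_t\, r_1(G_t) \ge r_1(G_{t+1}) + T_t$. The new term $T_t$ enters with coefficient $1 = A_{t+1}/A_{t+1}$, which is precisely the $i=t$ summand of \eqref{eq:tower} at length $t+1$. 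This closes the induction.

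There is no substantive obstacle here. The only points requiring care are bookkeeping ones: confirming that the weight of the $i$-th correction is $\prod_{i<j<t} a_j = A_t/A_{i+1}$, and noting that every multiplication is by a positive factor, so the direction of each inequality is preserved. All the genuine content sits in Theorem~\ref{thm:central}, which we take as given.
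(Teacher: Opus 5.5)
Your proof is correct and is essentially the paper's argument. The paper divides each single-step bound \eqref{eq:centralbound} by $A_{i+1}$ and telescopes the differences $r_1(G_i)/A_i - r_1(G_{i+1})/A_{i+1}$, which is the same iteration you carry out by induction after multiplying by $a_t$.
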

\begin{proof}
Dividing \eqref{eq:centralbound} for the $i$-th map by $A_{i+1}$ reveals that
$r_1(G_i)/A_i - r_1(G_{i+1})/A_{i+1}$ is bounded below by the displayed correction.
Summing over $i$ telescopes, proving \eqref{eq:tower}.
\end{proof}
\subsubsection{Sharpness of quotient loss: Extraspecial groups}
To test the sharpness of the quotient loss bound $r_1(Q) \le a\,r_1(G)$, we examine extraspecial $p$-groups.
An extraspecial $p$-group $P$ of order $p^{2m+1}$ has center $Z(P) \cong C_p$ equal to its commutator
subgroup $P'$, with central quotient $V = P/Z(P) \cong \mathbb{F}_p^{2m}$.
Commutation is encoded by a non-degenerate alternating bilinear form on $V$, reducing the rank calculation
to finite symplectic linear algebra.

\begin{proposition}[Double rank of extraspecial groups]\label{prop:extra}
If $P$ is extraspecial of order $p^{2m+1}$, $m \ge 1$, then
\begin{equation}\label{eq:extra}
 r_1(P) = p^{2m-1}(p^{2m} + p^3 - 1).
\end{equation}
\end{proposition}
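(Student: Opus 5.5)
The plan is to evaluate $r_1(P)$ directly from the exact central-lifting formula \eqref{eq:exact} of Theorem~\ref{thm:central}, with $A = Z(P)\cong C_p$, $a=p$, and $Q = V = P/Z(P)\cong\Fp^{2m}$. Write $\beta: V\times V\to \Fp$ for the non-degenerate alternating form encoding commutators, so that $[x,y] = z^{\beta(\bar x,\bar y)}$ for a fixed generator $z$ of $Z(P)$. Since $Q$ is Abelian, its conjugacy classes are single vectors $\bar x\in V$ and $C_Q(\bar x)=V$. The homomorphism $\phi_x$ of \eqref{eq:phi} is then $\bar y\mapsto \beta(\bar x,\bar y)$.

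The first step computes the flux data $b_x$ and $k(P)$. For $\bar x = 0$ we have $b_x=1$, so $p$ central classes lie above $0$, each with centralizer $P$. For $\bar x\neq 0$, non-degeneracy of $\beta$ makes $\phi_x$ surjective, so $b_x=p$ and exactly one class lies above $\bar x$. By \eqref{eq:classlift} this gives $k(P)=p+(p^{2m}-1)=p^{2m}+p-1$. Substituting into \eqref{eq:exact} yields
\[
 r_1(P) = p\,k(P) + (p^{2m}-1)\,k(C_P(x)),\qquad \bar x\ne 0 .
\]

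The main step, and the only real obstacle, is to compute $k(C_P(x))$ for a non-central $x$, uniformly in $p$ (including $p=2$, where $x$ may have order $4$). Set $C=C_P(x)$. I will not try to identify its isomorphism type; I will apply \eqref{eq:classlift} again to $C$ with central subgroup $Z(P)\le Z(C)$. The quotient is $C/Z(P)\cong \bar x^{\perp}$, which has dimension $2m-1$, and for $\bar y\in\bar x^\perp$ the relevant homomorphism is $\beta(\bar y,-)$ restricted to $\bar x^\perp$. The radical of $\beta|_{\bar x^\perp}$ is exactly $\langle \bar x\rangle$, by non-degeneracy and $(\bar x^{\perp})^{\perp}=\langle\bar x\rangle$. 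This splits the vectors $\bar y\in\bar x^\perp$ into two cases:
\begin{itemize}
 \item If $\bar y\in\langle\bar x\rangle$ (there are $p$ such vectors), then $b=1$, so each such $\bar y$ contributes $p$ classes.
 \item Otherwise (there are $p^{2m-1}-p$ such vectors), $b=p$, so each such $\bar y$ contributes one class.
\end{itemize}
Hence $k(C)=p^2+p^{2m-1}-p$. This argument only uses commutators landing in $Z(P)$, so element orders play no role and $p=2$ needs no separate treatment.

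Finally, I will substitute and simplify:
\[
 r_1(P)=p(p^{2m}+p-1)+(p^{2m}-1)(p^{2m-1}+p^2-p).
\]
Expanding, the terms $\pm p^{2m+1}$ and $\pm(p^2-p)$ cancel, leaving $p^{4m-1}+p^{2m+2}-p^{2m-1}=p^{2m-1}(p^{2m}+p^3-1)$, which is \eqref{eq:extra}. As a sanity check, I will verify the case $p=2$, $m=1$: this gives $r_1(D_8)=r_1(Q_8)=2(4+7)=22$, matching the known value.
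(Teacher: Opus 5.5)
Your argument is correct, and it reaches \eqref{eq:extra} by a different route from the paper.

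The paper never computes class numbers. It uses $r_1(P)=h_3(P)/|P|$ from \eqref{eq:definitions}. It observes that a triple in $P$ commutes exactly when its image in $V=P/Z(P)$ is pairwise orthogonal, so $h_3(P)=p^3N$, where $N$ is the number of pairwise-orthogonal triples in $V$. It then computes $N=q^2(q+p^3-1)/p^3$ with $q=p^{2m}$, splitting on whether the first vector is zero and whether the second lies in the span of the first.

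You instead resolve the count flux by flux, applying the exact lifting formula \eqref{eq:exact} of Theorem~\ref{thm:central} once to $P$ and again to the centralizer $C_P(x)$. Your split $\bar y\in\langle\bar x\rangle$ versus $\bar y\in\bar x^\perp\setminus\langle\bar x\rangle$ is the class-level counterpart of the paper's split on the second vector. The same nondegeneracy fact does the work in both: for you $(\bar x^\perp)^\perp=\langle\bar x\rangle$, for the paper $\dim\langle u,v\rangle^\perp=2m-2$ for independent $u,v$.

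The paper's route is shorter and self-contained, needing only Burnside's identity. Yours depends on Theorem~\ref{thm:central} but returns more information: $k(P)=p^{2m}+p-1$, the $p$ central fluxes each carry $k(P)$ charges, and each of the $p^{2m}-1$ non-central fluxes carries $p^{2m-1}+p^2-p$ charges. That is the sector-resolved anyon count, and it also checks \eqref{eq:exact} on an infinite family.

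You write $r_1(P)=p\,k(P)+(p^{2m}-1)\,k(C_P(x))$ before computing $k(C_P(x))$, which implicitly assumes it is the same for every non-central $x$. Your own computation justifies this, since it depends on $\bar x$ only through $\dim\bar x^\perp=2m-1$ and the radical $\langle\bar x\rangle$.
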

\begin{proof}
The commutator induces a non-degenerate alternating bilinear form on $V = P/Z(P)$ of dimension $2m$ over $\mathbb{F}_p$.
Put $q = p^{2m}$. An orthogonal triple in $V$ lifts to $p^3$ commuting elements in $P$.
Counting orthogonal triples: if the first vector is zero, there are $q + (q-1)q/p$ choices for the other two.
If the first vector is non-zero ($q-1$ choices), the second lies in its orthogonal complement ($q/p$ choices);
it either lies in its span ($p$ choices, leaving $q/p$ choices for the third) or outside its span ($q/p-p$ choices, leaving $q/p^2$ choices).
Thus the total number of orthogonal triples in $V$ is
\begin{equation}
 N = q + \frac{(q-1)q}{p} + (q-1)\left[p\,\frac{q}{p} + \left(\frac{q}{p}-p\right)\frac{q}{p^2}\right] = \frac{q^2(q + p^3 - 1)}{p^3}.
\end{equation}
Multiplying by $p^3$ lifts and dividing by $|P| = pq = p^{2m+1}$ gives $r_1(P) = p^3 N / (pq) = p^{2m-1}(p^{2m} + p^3 - 1)$.
The proof applies to both extraspecial isomorphism types for all primes $p$.
\end{proof}
Formula \eqref{eq:extra} is the double-rank normalization
$r_1(P)=|P|^2P_3(P)$ of the commuting-probability formula in
Ref.~\cite[Corollary 7.4]{LSrigid}; the counting proof is included for completeness.

\begin{theorem}[Sharp loss at every kernel order]\label{thm:sharp}
For every integer $a \ge 2$, there exists a sequence of central extensions
$1 \to A_m \to G_m \to Q_m \to 1$ with $|A_m| = a$ such that
\begin{equation}
 \lim_{m \to \infty} \frac{r_1(Q_m)}{r_1(G_m)} = a.
\end{equation}
Thus, the prefactor $a$ in Theorem~\ref{thm:quotient} is strictly optimal.
\end{theorem}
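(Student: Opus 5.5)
The plan is to build the extensions from the extraspecial groups of Proposition~\ref{prop:extra} and then use the multiplicativity of $r_1$ under direct products (Lemma~\ref{lem:basic}) to reach every kernel order $a$, not only primes.

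\emph{Prime kernel order.} First suppose $a=p$ is prime. Let $P_m$ be an extraspecial $p$-group of order $p^{2m+1}$, with central kernel $A=Z(P_m)\cong C_p$ and quotient $Q_m=P_m/Z(P_m)\cong\Fp^{2m}$.
\begin{itemize}
 \item Since $Q_m$ is Abelian, every pair commutes and conjugation is trivial, so $r_1(Q_m)=|Q_m|^2=p^{4m}$.
 \item Proposition~\ref{prop:extra} gives $r_1(P_m)=p^{2m-1}(p^{2m}+p^3-1)$.
\end{itemize}
Therefore
\begin{equation*}
 \frac{r_1(Q_m)}{r_1(P_m)}=\frac{p^{2m+1}}{p^{2m}+p^3-1}\longrightarrow p \quad (m\to\infty).
\end{equation*}
This settles the prime case. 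The limit is approached from below, consistent with the bound $r_1(Q)\le a\,r_1(G)$ of Theorem~\ref{thm:quotient}.

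\emph{General kernel order.} For arbitrary $a\ge 2$, factor $a=p_1p_2\cdots p_s$ into primes, listed with multiplicity. Set
\begin{equation*}
 G_m=\prod_{i=1}^s P^{(i)}_m,\qquad A_m=\prod_{i=1}^s Z(P^{(i)}_m),\qquad Q_m=\prod_{i=1}^s \mathbb F_{p_i}^{2m},
\end{equation*}
where $P^{(i)}_m$ is extraspecial of order $p_i^{2m+1}$.
\begin{itemize}
 \item The product of the centers is central in the product group, so $1\to A_m\to G_m\to Q_m\to1$ is a central extension with $|A_m|=\prod_i p_i=a$.
 \item By Lemma~\ref{lem:basic}, $r_1$ is multiplicative under direct products, so the ratio $r_1(Q_m)/r_1(G_m)$ is the product of the prime-case ratios.
 \item That product converges to $\prod_i p_i=a$.
\end{itemize}
Combined with the upper bound $r_1(Q)\le a\,r_1(G)$ of Theorem~\ref{thm:quotient}, this shows the prefactor $a$ cannot be improved.

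\emph{Main obstacle.} The only real subtlety is that a single extraspecial group has a kernel of prime order only. The direct-product step removes this restriction, so with Proposition~\ref{prop:extra} available I expect no substantial difficulty. The one point to state explicitly is that $A_m$ is central in $G_m$, which is needed to invoke the central-extension setting.
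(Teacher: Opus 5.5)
Your proposal is correct and matches the paper's proof: extraspecial groups give the ratio $p^{2m+1}/(p^{2m}+p^3-1)\to p$ for prime kernel order. For general $a$, taking direct products of extraspecial groups over the prime factorization of $a$ (with multiplicity), with $A_m$ the product of their centers, gives the limit $a$ by multiplicativity of $r_1$.
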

\begin{proof}
For a prime $p$, $P/Z(P) \cong \mathbb{F}_p^{2m}$, so
\begin{equation}
 \frac{r_1(P/Z(P))}{r_1(P)} = \frac{p^{4m}}{p^{2m-1}(p^{2m} + p^3 - 1)} \longrightarrow p \quad \text{as } m \to \infty.
\end{equation}
For general $a = \prod_{j=1}^s p_j$, taking $G_m$ to be the direct product of extraspecial groups
of orders $p_j^{2m+1}$ and $A_m$ the product of their centers yields the limit $a$ by rank multiplicativity.
\end{proof}

While extraspecial groups saturate the local quotient loss ratio $r_1(Q)/r_1(G) \to a$,
they are inefficient candidates for maximizing $\cD$ at bounded rank: for fixed $p$,
$\log r_1(P) \sim 4m\log p$ while $\log|P| \sim 2m\log p$, giving $2\log|P|/(\log r_1(P))^2 \to 0$.
Local sharpness of an extension bound and global optimality of a family envelope
are fundamentally distinct questions.

\subsection{Solvable and radical-free refinements}
\label{subsec:solvable}

Solvable gauge groups provide a clear demonstration of the difference between fixing
an algebraic structural parameter and allowing it to grow.
Recall that the derived series of $G$ is $G^{(0)} = G$ and $G^{(i+1)} = [G^{(i)}, G^{(i)}]$;
$G$ is solvable of derived length $d$ if $G^{(d)} = 1$.
Nilpotency class 2 is the stronger condition $G' \le Z(G)$, meaning that all commutators are central.

When derived length or nilpotency class is bounded, polynomial class-number bounds
\cite{Bertram} yield explicit power-law lower bounds on categorical rank.

\begin{proposition}[Rank bounds for bounded derived length]\label{prop:derived}
If $G$ is solvable of derived length at most $d \ge 1$, then
\begin{equation}
 r_1(G) \ge |G|^{\alpha_d}, \qquad
 \alpha_1 = 2, \qquad
 \alpha_d = (3 \cdot 2^{d-2} - 1)^{-1} \quad (d \ge 2).
\end{equation}
If $G$ has nilpotency class at most 2, then $r_1(G) \ge |G|$.
Consequently,
\begin{equation}
 F_{\mathrm{dl}\le d}^{(1)}(R) \le 2\alpha_d^{-1}\log R, \qquad
 F_{\mathrm{class}\le 2}^{(1)}(R) \le 2\log R.
\end{equation}
\end{proposition}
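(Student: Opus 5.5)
The plan is to reduce everything to class-number lower bounds via the centralizer formula \eqref{eq:centralizers} of Lemma~\ref{lem:basic}, and then cite Bertram's polynomial bounds \cite{Bertram}. The first step is a uniform reduction. Every class $[x]$ contributes $k(C_G(x))\ge 1$, and the identity class contributes $k(G)$. Hence
\[
 r_1(G)\ge k(G)+(k(G)-1)\ge k(G).
\]
So it suffices to show $k(G)\ge |G|^{\alpha_d}$, or to do better by using the centralizers.

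For $d=1$ the group is Abelian, so $r_1(G)=|G|^2$ directly and $\alpha_1=2$. For nilpotency class at most $2$, the plan is to count commuting triples, using $r_1(G)=h_3(G)/|G|$. Since $G'\le Z(G)$, the map $y\mapsto[x,y]$ is a homomorphism $G\to G'$ for fixed $x$, with kernel $C_G(x)$. Thus $|C_G(x)|\ge |G|/|G'|$ for every $x$, and in particular $C_G(x)\supseteq Z(G)\supseteq G'$.

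A cleaner route for class $2$ goes through the sum formula. Each centralizer $C_G(x)$ contains $Z(G)$, so $k(C_G(x))\ge |Z(G)|$. Summing over classes,
\[
 r_1(G)\ge k(G)\,|Z(G)|.
\]
Next I would use $k(G)\ge |G|/\max_x|[x]|$, which holds because $\sum_{[x]}|[x]|=|G|$. In class $2$ we have $|[x]|=|\{[x,y]\}|\le |G'|\le |Z(G)|$. Combining, $k(G)\ge |G|/|Z(G)|$, so $r_1(G)\ge |G|$. Then the envelope follows, since $r_1\le R$ forces $2\log|G|\le 2\log R$.

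For derived length $d\ge2$, I would invoke Bertram's bound in the form $k(G)\ge |G|^{\beta_d}$ and pair it with the centralizer sum. The target exponent $(3\cdot 2^{d-2}-1)^{-1}$ equals $1/2$ at $d=2$ and $1/5$ at $d=3$. This suggests an induction: apply a class bound to $G$ and to the centralizers, and descend through the Abelian normal subgroup $G^{(d-1)}$. Concretely, let $A=G^{(d-1)}$ be Abelian normal. For $x\in A$ the centralizer contains $A$, so $k(C_G(x))\ge k$ of something containing the Abelian group $A$. Together with Lemma~\ref{lem:classes} ($k(K)\ge k(H)/[K:H]$ in its dual form) and the quotient $G/A$ of derived length $d-1$, this would give a recursion of the form $\alpha_d^{-1}=2\alpha_{d-1}^{-1}+1$. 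That recursion matches the stated values: $2\cdot2+1=5$, and $2\cdot 5+1=11=3\cdot4-1$. For $d=2$ I would check directly that $r_1(G)\ge|G|^{1/2}$, using $r_1(G)\ge k(G)\ge |G:A|$-type arguments for metabelian groups with $A$ Abelian normal.

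The main obstacle is making this recursion rigorous, specifically controlling $k(G)$ against $|A|$ and $|G/A|$ simultaneously. The first approach I would try is the standard split. The $A$-part contributes at least $|A|/|G:A|$ classes, since classes of elements of $A$ have size at most $|G:A|$. The quotient part contributes $k(G/A)\ge |G/A|^{\alpha_{d-1}}$ by induction. I would then take the better of the two bounds and optimize over $|A|$ relative to $|G|$. If the exponents do not match exactly, I would fall back on quoting Bertram's explicit exponents and deriving $r_1\ge k$.
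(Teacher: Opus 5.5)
Your arguments for $d=1$ and for nilpotency class $2$ are correct and essentially the paper's. You get $r_1(G)\ge|Z(G)|\,k(G)$ from $Z(G)\le Z(C_G(x))$ in every centralizer rather than by summing over central fluxes, and $k(G)\ge|G|/|G'|$ from class sizes rather than linear characters; both are fine. The gap is in the case $d\ge2$, in two places.

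First, the reduction ``it suffices to show $k(G)\ge|G|^{\alpha_d}$'' and the fallback of quoting Bertram and using $r_1\ge k$ aim at a false inequality. The Frobenius group $C_{11}\rtimes C_5$ has derived length $2$ and $k=7<\sqrt{55}$. More generally, $C_p\rtimes C_m$ with $m\approx\sqrt p$ has $k=m+(p-1)/m\approx2|G|^{1/3}$. So no class-number bound can give $\alpha_2=1/2$; the non-identity flux sectors must be used.

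Second, the recursion you sketch misapplies the inductive hypothesis. That hypothesis is a bound on $r_1$, but you use it as ``$k(G/A)\ge|G/A|^{\alpha_{d-1}}$''. Already for $d=2$ this would say $k(G/A)\ge|G/A|^2$, whereas $k(G/A)=|G/A|$ for Abelian $G/A$. With correct inputs, your split $k(G)\ge\max(|A|/[G:A],\,k(G/A))$ only yields $k(G)\ge|G|^{1/(2^d-1)}$, i.e.\ exponent $1/3$ at $d=2$. Nor can the induction be run on $r_1$ through the quotient $G/A$. By Theorems~\ref{thm:quotient} and~\ref{thm:sharp} one only has $r_1(G)\ge r_1(G/A)/|A|$, and this loss is sharp.

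The idea you are missing is to descend into a \emph{subgroup} rather than a quotient. The paper takes $N=G'$, which has derived length at most $d-1$, and sets $b=[G:G']$. Subgroup monotonicity (Lemma~\ref{lem:basic}) gives $r_1(G)\ge r_1(N)/[G:N]\ge(|G|/b)^{\alpha_{d-1}}/b$. Together with $r_1(G)\ge k(G)\ge b$, balancing over $b$ gives $\alpha_d=\alpha_{d-1}/(\alpha_{d-1}+2)$, which is your recursion $\alpha_d^{-1}=2\alpha_{d-1}^{-1}+1$.

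Your bottom-up route can also be repaired, but it needs two changes. For the flux sectors in $A=G^{(d-1)}$, count commuting triples: Lemma~\ref{lem:abelian} gives $r_1(G)\ge|A|^3/|G|=|G|^2/b^3$ with $b=[G:A]$. Equivalently, take the bound you mention, $k(C_G(x))\ge k(A)/[C_G(x):A]=|A|^2/|C_G(x)|$, and weight it by class sizes instead of replacing $[C_G(x):A]$ by $[G:A]$. That bound is true via commuting pairs, though it is not a rearrangement of Lemma~\ref{lem:classes}. For the quotient, prove by your own split the weaker class-number bound $k(G/A)\ge|G/A|^{\beta_{d-1}}$ with $\beta_e=(2^e-1)^{-1}$. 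Then $r_1(G)\ge\max(|G|^2/b^3,\,b^{\beta_{d-1}})\ge|G|^{2\beta_{d-1}/(3+\beta_{d-1})}=|G|^{\alpha_d}$.
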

\begin{proof}
For Abelian groups ($d=1$), $r_1(G) = |G|^2$, so $\alpha_1 = 2$.
For $d \ge 2$, set $N = G'$ and $b = |G/N|$.
The identity-flux sector alone gives $r_1(G) \ge k(G) \ge |G/G'| = b$.
On the other hand, $N$ has derived length at most $d-1$ and order $|G|/b$.
By subgroup monotonicity (Lemma~\ref{lem:basic}) and induction,
\begin{equation}
 r_1(G) \ge \frac{r_1(N)}{b} \ge \frac{(|G|/b)^{\alpha_{d-1}}}{b} = \frac{|G|^{\alpha_{d-1}}}{b^{\alpha_{d-1}+1}}.
\end{equation}
Balancing these two bounds across all possible values of $b > 0$ reveals that the worst case
occurs when $b = |G|^{\alpha_{d-1}/(\alpha_{d-1}+2)}$, giving $r_1(G) \ge |G|^{\alpha_d}$ with
$\alpha_d = \alpha_{d-1}/(\alpha_{d-1}+2)$, which solves to $\alpha_d = (3 \cdot 2^{d-2} - 1)^{-1}$.
For class 2 groups ($G' \le Z(G)$), the central fluxes alone yield
$r_1(G) \ge \sum_{z \in Z(G)} k(C_G(z)) = |Z(G)|k(G) \ge |Z(G)|\,|G/G'| \ge |G|$.
Inverting $r_1 \le R$ proves the envelope bounds.
\end{proof}

When the derived length is allowed to grow arbitrarily, the exponent $\alpha_d$ tends to zero.
To obtain uniform bounds across all solvable groups, we leverage Keller's class-number theorem \cite{Keller}.

\begin{theorem}[Uniform solvable envelopes]\label{thm:solvable}
There exists an absolute constant $C > 0$ such that for all $R \ge 1$,
\begin{align}
 F_{\mathrm{solv}}(R) &\le C R\log(2+R) \quad (\text{arbitrary twists}),
 \label{eq:solvable-twisted}\\
 F_{\mathrm{solv}}^{(1)}(R) &\le C R\log(2 + \log(2+R)) \quad (\text{untwisted}).
 \label{eq:solvable-untwisted}
\end{align}
\end{theorem}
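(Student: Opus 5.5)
The plan is to reduce both bounds to a lower bound on class numbers of solvable groups. I would invoke Keller's theorem \cite{Keller} in the form: there is an absolute $\beta>0$ such that every solvable $G$ with $|G|\ge 3$ satisfies
\[
 k(G)\ \ge\ \beta\,\frac{\log|G|}{\log\log|G|}.
\]
Throughout, write $\ell=\log|G|$, and put $f(t)=t/\log t$, which is increasing for $t\ge e$.

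\emph{Twisted bound \eqref{eq:solvable-twisted}.} Corollary~\ref{cor:classes} gives, for every twist,
\[
 r_\omega(G)\ \ge\ 3k(G)-2\ \ge\ k(G).
\]
So $r_\omega(G)\le R$ forces $\beta f(\ell)\le R$ once $\ell$ exceeds an absolute constant. Inverting $f$ yields $\ell\le C_1 R\log(2+R)$, using $f^{-1}(y)\sim y\log y$. Hence $F_{\mathrm{solv}}(R)=2\ell\le C R\log(2+R)$. Small groups and small $R$ are absorbed into $C$, since only finitely many groups have bounded order.

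\emph{Untwisted bound \eqref{eq:solvable-untwisted}.} Inverting the target, it suffices to prove
\[
 r_1(G)\ \ge\ c\,\frac{\ell}{\log\log \ell}
\]
for large $\ell$, because $y\mapsto y\log\log y$ inverts $t/\log\log t$ up to constants. Keller alone only gives $\ell/\log\ell$, so the extra saving must come from the centralizer decomposition \eqref{eq:centralizers}. Every centralizer $C_G(x)$ is again solvable, so
\[
 r_1(G)\ \ge\ \beta\sum_{[x]} f\bigl(\log|C_G(x)|\bigr).
\]
Two inputs feed this sum. First, the identity flux contributes $\log|C_G(1)|=\ell$. Second, the AM--GM estimate over class sizes gives
\[
 \sum_{[x]}\log|C_G(x)|\ \ge\ k(G)\log k(G).
\]
I would then split into cases on $k(G)$. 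If $k(G)\ge \ell/\log\log\ell$, we are done, since $r_1\ge k$. Otherwise $G$ has few classes, and the aim is to show that many classes have large centralizers. Central elements help here: each $z\in Z(G)$ contributes $k(G)$, so $r_1(G)\ge |Z(G)|\,k(G)$. More generally, $r_1(G)\ge k(C_G(x))$ for each $x$ in a large abelian or nilpotent section. Combined with Proposition~\ref{prop:derived} applied to a Fitting or derived-series layer, this should produce the $\log\log$ saving.

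The main obstacle is exactly this untwisted improvement from $\ell/\log\ell$ to $\ell/\log\log\ell$. The straightforward Jensen-type summation of Keller's bound over centralizers recovers only $\ell/\log\ell$, because $\log k(G)\approx\log\ell$. My first attempt would iterate the class-number bound down the derived (or Fitting) series. At each layer I would use Lemma~\ref{lem:basic} (subgroup monotonicity of $r_1$), Theorem~\ref{thm:quotient} (monotonicity of $h_3$ under quotients), and Lemma~\ref{lem:classes}, so that one logarithm is effectively traded for the Fitting-height-type quantity $\log\log\ell$.

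If that iteration fails, the fallback is to check whether Keller's own proof yields the sharper bound directly for the commuting-pair count $h_2(G)/|G|$ or for $r_1$, since $r_1$ counts pairs rather than classes. I expect the twisted part to be routine and essentially all the difficulty to sit in the untwisted case.
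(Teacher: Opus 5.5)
Your twisted argument is correct and is the paper's: $r_\omega(G)\ge k(G)$ combined with Keller's bound, then inverting $t/\log t$. The untwisted part, however, has a genuine gap. You correctly identify the target $r_1(G)\gtrsim \ell/\log\log\ell$, but you then set aside the centralizer sum and substitute an unspecified iteration down the derived or Fitting series. As sketched, that cannot work. Proposition~\ref{prop:derived} gives exponents $\alpha_d\to 0$, so it carries no uniform information once derived length is unbounded. Nothing in the sketch actually trades $\log\ell$ for $\log\log\ell$.

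The missing idea is a \emph{pointwise} bound on centralizer orders, not the aggregate one. You are right that $\sum_{[x]}\log|C_G(x)|\ge k\log k$ alone is useless, since $f$ is concave and the mass could sit in a single class. Use the class equation instead. We have $\sum_{i=1}^{k}|C_G(x_i)|^{-1}=1$, and each index with $|C_G(x_i)|<k/2$ contributes more than $2/k$. Hence fewer than $k/2$ indices can be small, so at least $k/2$ centralizers have order at least $k/2$. These centralizers are solvable, so Keller applies to each of them, and since $f$ is increasing,
\[
 r_1(G)\ \ge\ \frac{k}{2}\,\beta\,\frac{\log(k/2)}{\log\log(k/2)} .
\]
The map $k\mapsto k\log k/\log\log k$ is increasing, so you may substitute Keller's $k\ge\beta\ell/\log\ell$. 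Because $\log(\beta\ell/\log\ell)\sim\log\ell$, this gives
\[
 r_1(G)\ \gtrsim\ \frac{\ell}{\log\ell}\cdot\frac{\log\ell}{\log\log\ell}=\frac{\ell}{\log\log\ell} .
\]

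Your remark that direct summation ``recovers only $\ell/\log\ell$ because $\log k\approx\log\ell$'' is therefore backwards. The fact that $\log k\approx\log\ell$ is exactly what makes each of the $\gtrsim k$ large centralizers contribute $\gtrsim\log\ell/\log\log\ell$, which is the saving you need. This Erd\H{o}s--Straus-type step is the paper's argument. With it, your inversion $y\mapsto y\log\log y$ finishes the untwisted bound, with small groups absorbed into the constant.
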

\begin{proof}
Keller \cite[Theorem A]{Keller} proved that every solvable group $H$ of
sufficiently large order satisfies
\begin{equation}
 k(H) \ge c\,\frac{\log|H|}{\log\log|H|},
 \label{eq:keller}
\end{equation}
for an absolute constant $c > 0$. We use the logarithmic estimates below
only in this large-order regime; smaller groups are absorbed into the
constants in the regularized final bounds.
In any twisted theory, the identity flux sector has trivial cocycle $\alpha_1 = 1$,
giving $r_\omega(G) \ge k(G)$.
Setting $L = \log|G|$, Keller's theorem implies $r_\omega(G) \ge c L/\log L$.
Inverting this relation yields $L \le C r_\omega(G)\log(2 + r_\omega(G))$, which establishes
\eqref{eq:solvable-twisted} after taking the supremum over $r_\omega(G) \le R$.

In the untwisted case, we apply Keller's bound inside the centralizers of class representatives.
Let $k = k(G)$ and let $m_i$ denote the orders of the centralizers $C_G(x_i)$.
The class equation $\sum_{i=1}^k m_i^{-1} = 1$ implies that at least $k/2$ centralizers
satisfy $m_i \ge k/2$.
Because subgroups of solvable groups are solvable, applying \eqref{eq:keller} to these
large centralizers gives
\begin{equation}
 r_1(G) = \sum_{i=1}^k k(C_G(x_i)) \ge \frac{k}{2} \cdot c\,\frac{\log(k/2)}{\log\log(k/2)}
 \ge c' k\,\frac{\log k}{\log\log k}.
\end{equation}
If $k \ge L$, then $r_1(G) \ge L$ trivially.
If $k < L$, combining with $k \ge c L/\log L$ yields $\log k = \log L + O(\log\log L)$
and $\log\log k = \log\log L + o(1)$.
Substituting gives $r_1(G) \ge c'' L/\log\log L$.
Inverting this relation yields $L \le C r_1(G)\log(2 + \log(2 + r_1(G)))$, proving
\eqref{eq:solvable-untwisted}.
\end{proof}

Turning to the opposite structural regime, consider \emph{radical-free} groups ($\mathrm{Rad}(G) = 1$).
These groups possess no non-trivial solvable normal subgroups, and their socles are direct
products of non-Abelian simple groups.
Using deep class-number bounds of Baumeister, Mar\'oti, and Tong-Viet \cite{BMTV},
we establish a polylogarithmic envelope for radical-free gauge groups under arbitrary twists.

\begin{corollary}[Radical-free polylogarithmic envelope]\label{cor:radicalfree}
For every $\varepsilon > 0$, there exists a constant $C_\varepsilon > 0$ such that,
allowing arbitrary 3-cocycle twists,
\begin{equation}
 F_{\mathrm{Rad}=1}(R) \le C_\varepsilon (\log R)^{3+\varepsilon}.
 \label{eq:radicalfree}
\end{equation}
Furthermore, the envelope satisfies the explicit non-asymptotic bound
\begin{equation}
 F_{\mathrm{Rad}=1}(R) \le 2\log\left(3^{\lfloor(R+2)/3\rfloor} - 1\right)
 < \frac{2}{3}(R+2)\log 3.
 \label{eq:radicalfree-explicit}
\end{equation}
\end{corollary}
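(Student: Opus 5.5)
The plan is to reduce everything to the identity-flux sector, where the twist is invisible, and then feed in class-number lower bounds for radical-free groups. By the remark after \eqref{eq:alpha}, $\alpha_1=1$ for every normalized $\omega$, so the flux-$[1]$ summand in \eqref{eq:rank} is $k(G)$. Hence $r_\omega(G)\ge k(G)$ for every $[\omega]\in H^3(G,\U)$. Corollary~\ref{cor:classes} gives the sharper inequality
\[
 r_\omega(G)\ge \Xi(G)\ge 3k(G)-2 .
\]
It therefore suffices to bound $\log|G|$ in terms of $k(G)$ for radical-free $G$, and then substitute $k(G)\le (r_\omega(G)+2)/3\le (R+2)/3$.

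For the polylogarithmic bound \eqref{eq:radicalfree}, I would invoke the Baumeister--Mar\'oti--Tong-Viet estimate \cite{BMTV} for groups with $\Rad(G)=1$. In the form I would use, it says that for every $\varepsilon>0$ there is $c_\varepsilon>0$ with $\log k(G)\ge c_\varepsilon(\log|G|)^{1/(3+\varepsilon)}$ once $|G|$ is large enough. Equivalently,
\[
 \log|G|\le C'_\varepsilon(\log k(G))^{3+\varepsilon}.
\]
Since $k(G)\le r_\omega(G)\le R$, this gives $2\log|G|\le 2C'_\varepsilon(\log R)^{3+\varepsilon}$. The finitely many small groups below the threshold of the cited theorem are absorbed by enlarging the constant. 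Taking the supremum over radical-free $G$ and all twists yields \eqref{eq:radicalfree}.

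For the explicit bound \eqref{eq:radicalfree-explicit}, I would combine $k(G)\le\lfloor (R+2)/3\rfloor$, which holds because $k(G)$ is an integer, with an explicit inequality $|G|<3^{k(G)}$ valid for radical-free groups, as proved in \cite{BMTV}. Since $|G|$ is an integer, this strict inequality upgrades to $|G|\le 3^{k(G)}-1$. Monotonicity in $k$ then gives $|G|\le 3^{\lfloor (R+2)/3\rfloor}-1$, and doubling the logarithm gives the first inequality in \eqref{eq:radicalfree-explicit}. The second inequality is immediate from $\lfloor x\rfloor\le x$ together with $3^m-1<3^m$.

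The main obstacle is matching the precise statements in \cite{BMTV}: the exact exponent in their polylogarithmic class-number bound, and whether the explicit $|G|<3^{k(G)}$ inequality is stated for all radical-free groups or needs a small-case exception. If the explicit inequality has exceptions, I would check those finitely many groups directly against $\Xi(G)\ge 3k(G)-2$. Everything on the physics side reduces cleanly to $r_\omega\ge \Xi\ge 3k-2$, which is where twist-independence enters.
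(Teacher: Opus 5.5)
Your proposal is correct and follows essentially the same route as the paper: $r_\omega(G)\ge k(G)$ from the untwisted identity-flux sector, fed into the Baumeister--Mar\'oti--Tong-Viet bound $\log k(G)>\delta_\varepsilon(\log|G|)^{1/(3+\varepsilon)}$, gives the polylogarithmic estimate, and $r_\omega(G)\ge 3k(G)-2$ from Corollary~\ref{cor:classes}, combined with $|G|<3^{k(G)}$ and integrality of $|G|$, gives the explicit one. The only cosmetic difference is that the paper quotes the first estimate for every non-trivial radical-free group, so it needs no absorption of small cases. Your absorption step is also valid, because any non-trivial admissible $G$ forces $R\ge 4$.
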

\begin{proof}
Ref.~\cite[Theorem 2.1]{BMTV} establishes that for any $\varepsilon > 0$, there exists
$\delta_\varepsilon > 0$ such that every non-trivial radical-free finite group satisfies
\begin{equation}
 \log k(G) > \delta_\varepsilon (\log|G|)^{1/(3+\varepsilon)}.
\end{equation}
Because $k(G) \le r_\omega(G) \le R$ for every twist $\omega$, raising this inequality
to the power $3+\varepsilon$ immediately yields
$2\log|G| \le C_\varepsilon (\log R)^{3+\varepsilon}$, establishing \eqref{eq:radicalfree}.

For the explicit bound, Ref.~\cite[Theorem 1.2]{BMTV} proves that every radical-free
group satisfies $|G| < 3^{k(G)}$.
By Corollary~\ref{cor:classes}, $r_\omega(G) \ge 3k(G) - 2$, which implies
$k(G) \le \lfloor(R+2)/3\rfloor$ whenever $r_\omega(G) \le R$.
Because $|G|$ is an integer, $|G| \le 3^{\lfloor(R+2)/3\rfloor} - 1$.
Taking the logarithm proves \eqref{eq:radicalfree-explicit}.
\end{proof}

\subsection{Global finiteness and the remaining gap}
\label{subsec:global}

Finally, we address the global finiteness of Dijkgraaf--Witten envelopes across
\emph{all} finite groups.
By combining the large-centralizer iteration method of Erd\H{o}s and Straus \cite{ES}
with general class-number bounds \cite{BMTV}, we establish that the search space
of DW theories at any rank cutoff is rigorously finite.

\begin{theorem}[Global finiteness envelopes]\label{thm:global}
For every $\varepsilon > 0$, there exists a constant $C_\varepsilon > 0$ such that
every finite group $G$ satisfies
\begin{equation}
 \log|G| \le C_\varepsilon r_1(G)\,[\log(2 + r_1(G))]^{2+\varepsilon}.
 \label{eq:global}
\end{equation}
Consequently, the untwisted and twisted global envelopes satisfy
\begin{align}
 F_{\DW}^{(1)}(R) &\le C_\varepsilon R[\log(2+R)]^{2+\varepsilon},
 \label{eq:global-untwisted}\\
 F_{\DW}(R) &\le C_\varepsilon R[\log(2+R)]^{3+\varepsilon}.
 \label{eq:global-twisted}
\end{align}
\end{theorem}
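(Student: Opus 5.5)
The plan is to copy the two-layer argument of Theorem~\ref{thm:solvable}, replacing Keller's solvable class-number bound with the general class-number bound of Baumeister, Mar\'oti and Tong-Viet \cite{BMTV}. For every $\varepsilon>0$ I will use their estimate in the form
\[
 k(H)\ \ge\ c_\varepsilon\,\frac{\log|H|}{(\log\log|H|)^{3+\varepsilon}},
\]
valid for every finite group $H$ of sufficiently large order. I have not checked this against the reference. The form of the target exponents (an exponent $3+\varepsilon$ appears in the twisted row) strongly suggests this is the available input, but confirming its exact statement is the first thing to do. Below I write $L=\log|G|$. Small groups, say those with $L$ below a threshold depending on $\varepsilon$, are absorbed into $C_\varepsilon$ at every stage.

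\emph{Twisted envelope.} As in the proof of Theorem~\ref{thm:solvable}, the identity flux has $\alpha_1=1$, so $r_\omega(G)\ge k(G)$. Hence $r_\omega(G)\ge c_\varepsilon L/(\log L)^{3+\varepsilon}$. Inverting this gives $L\le C_\varepsilon\, r_\omega(G)\,[\log(2+r_\omega(G))]^{3+\varepsilon}$. To see this, note $\log L=O(\log r_\omega)$ once $L$ is large, because $r_\omega\ge L^{1/2}$ eventually. Taking the supremum over $r_\omega(G)\le R$ and doubling the constant yields \eqref{eq:global-twisted}.

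\emph{Untwisted bound \eqref{eq:global}.} I follow the Erd\H{o}s--Straus large-centralizer iteration.
\begin{itemize}
 \item Let $k=k(G)$ and let $m_i=|C_G(x_i)|$ over class representatives. The class equation $\sum_i m_i^{-1}=1$ forces at least $k/2$ indices to satisfy $m_i\ge k/2$.
 \item Applying the displayed bound to each of these centralizers gives
 \[
  r_1(G)\ \ge\ c'_\varepsilon\,k\,\frac{\log k}{(\log\log k)^{3+\varepsilon}}.
 \]
 \item If $k\ge L$, then $r_1(G)\ge k\ge L$, which is stronger than needed.
 \item If $k<L$, the lower bound $k\ge c_\varepsilon L/(\log L)^{3+\varepsilon}$ gives $\log k=\log L-O(\log\log L)$, so $\log k\sim\log L$ and $\log\log k\sim\log\log L$.
\end{itemize}
Substituting into the centralizer bound gives
\[
 r_1(G)\ \ge\ c''_\varepsilon\,\frac{L}{(\log L)^{2+\varepsilon}\,(\log\log L)^{3+\varepsilon}}.
\]
The $\log\log$ factor is swallowed by enlarging $\varepsilon$: apply the argument with $\varepsilon/2$ in place of $\varepsilon$. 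Inverting, as in the twisted case, gives \eqref{eq:global}. Then \eqref{eq:global-untwisted} follows by restricting to $r_1(G)\le R$ and recalling $F=2\log|G|$.

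\emph{Main obstacle.} The delicate step is the bookkeeping in the case $k<L$. One must check that the two-sided comparison $\log k\asymp\log L$ holds uniformly, so that the gain of one full factor $\log L$ from the centralizer iteration survives. This gain is exactly what converts $3+\varepsilon$ into $2+\varepsilon$. One must also check that the thresholds needed for the asymptotic class-number bound (large $|H|$, including the condition $k/2$ large) are compatible. I would handle this by fixing a threshold $L_0(\varepsilon)$ beyond which all the $o(1)$ and $O(\log\log L)$ terms are at most half the main terms. Below $L_0(\varepsilon)$ only finitely many group orders occur, and these are absorbed into $C_\varepsilon$.
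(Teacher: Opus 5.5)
Your proposal is correct and follows the paper's proof essentially step for step. It uses the same input \cite[Theorem 1.1]{BMTV}, which the paper cites in exactly the form you guessed; the identity-flux bound $r_\omega(G)\ge k(G)$ for the twisted envelope; and the Erd\H{o}s--Straus large-centralizer step with the case split $k\ge L$ versus $k<L$ for the untwisted bound, where the paper absorbs the iterated logarithm using $\eta=\varepsilon/4$ instead of your $\varepsilon/2$.
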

\begin{proof}
By Ref.~\cite[Theorem 1.1]{BMTV}, for every $\eta > 0$ there exists $c_\eta > 0$ such that
every finite group $K$ of sufficiently large order satisfies
\begin{equation}
 k(K) \ge c_\eta \frac{\log|K|}{(\log\log|K|)^{3+\eta}}.
\end{equation}
All uses of this estimate are in the large-order regime; the finitely many
smaller groups are absorbed by enlarging the final constants.
Applying this to the large centralizers of $G$: let $k = k(G)$ and let $m_i = |C_G(x_i)|$.
By the class equation $\sum_{i=1}^k m_i^{-1} = 1$, at least $k/2$ centralizers satisfy
$m_i \ge k/2$. Summing their class numbers yields
\begin{equation}
 r_1(G) = \sum_{i=1}^k k(C_G(x_i)) \ge c'_\eta k \frac{\log k}{(\log\log k)^{3+\eta}}.
 \label{eq:iteration}
\end{equation}
Letting $L = \log|G|$, if $k \ge L$ then $r_1(G) \ge L$ suffices.
Otherwise, $k \ge c_\eta L/(\log L)^{3+\eta}$, giving $\log k = \log L + O(\log\log L)$.
Substituting into \eqref{eq:iteration} gives
\begin{equation}
 r_1(G) \ge c''_\eta \frac{L}{(\log L)^{2+\eta}(\log\log L)^{3+\eta}}.
\end{equation}
Setting $\eta = \varepsilon/4$ and absorbing the iterated logarithm yields
$r_1(G) \ge c L / (\log L)^{2 + 3\varepsilon/4}$.
Inverting this relation gives $L \le C_\varepsilon r_1(G)[\log(2+r_1(G))]^{2+\varepsilon}$,
which proves \eqref{eq:global} and \eqref{eq:global-untwisted}.

For twisted theories, we use $r_\omega(G) \ge k(G)$ directly with the general bound
$k(G) \ge c_\varepsilon L/(\log L)^{3+\varepsilon}$.
Inverting this relation yields $L \le C_\varepsilon r_\omega(G)[\log(2+r_\omega(G))]^{3+\varepsilon}$,
establishing \eqref{eq:global-twisted}.
\end{proof}

Combining the lower bound from permutation gauge groups (Corollary~\ref{cor:symmetric})
with the global upper bound (Theorem~\ref{thm:global}) frames the known boundaries of the
untwisted Dijkgraaf--Witten envelope:
\begin{equation}
 \left(\frac{2}{\sqrt{3\zeta(2)\zeta(3)}} + o(1)\right)(\log R)^{3/2}\log\log R
 \le F_{\DW}^{(1)}(R)
 \le C_\varepsilon R[\log(2+R)]^{2+\varepsilon}.
 \label{eq:main-envelope-bounds}
\end{equation}
Closing this substantial gap—and determining whether $F_{\DW}^{(1)}(R)$ scales as
$(\log R)^2$—remains one of the most intriguing open problems in the classification
of discrete gauge theories.

\section{Conclusion and Outlook}
\label{sec:outlook}

In this work, we have investigated the fundamental trade-off between the total
quantum dimension $\cD$ and categorical rank $R$ in $(2+1)$-dimensional
Dijkgraaf--Witten topological field theories.
Physically, this optimization governs the maximal state-optimized torus topological
entanglement entropy $\max_\psi \Gamma_{T^2}(\psi) = 2\log\cD$ achievable at a
prescribed torus ground-state degeneracy bound $R$, while quantitatively constraining
the search space required for the classification of topological orders.

Our results demonstrate how topological twists (3-cocycles) and gauge-group
architectures independently dictate this trade-off:
\begin{enumerate}
 \item \emph{Topological twist compression.}
 For a fixed gauge group $G$, $\cD = |G|$ is topologically rigid, but non-trivial
 3-cocycles $\omega$ can compress the anyon spectrum by projecting out non-regular
 holonomies and inducing projective electric representations.
 We identified the geometric mechanism of \emph{cyclic protection}: commuting
 holonomies generating a cyclic subgroup possess trivial 3-holonomy phases
 $\Omega_\omega \equiv 1$ under every twist, establishing a universal floor
 $\Xi(G) \le r_\omega(G)$.
 For elementary-Abelian gauge groups $(\mathbb{Z}_p)^n$ at fixed prime $p$,
 an exact first-moment analysis over the space of alternating trilinear forms
 revealed that optimal twists compress the spectrum to linear scaling $m_p(n) = \Theta_p(p^n)$,
 elevating the envelope from $F_p^{(1)}(R) = \log R + O_p(1)$ to $F_p(R) = 2\log R + O_p(1)$
 and doubling the growth rate of $\cD_{\max}(R)$ from square-root to linear.
 \item \emph{Gauge-group structural envelopes.}
 In the untwisted setting, the envelope is shaped by group-theoretic structure.
 For permutation gauge groups $S_n$ and $A_n$, the classification of $\mathbb{Z}^2$-sets
 leads to the sharp asymptotic envelope $F_S^{(1)}(R) \sim F_A^{(1)}(R) \sim A_S (\log R)^{3/2}\log\log R$.
 For almost-simple groups and their finite direct products, we proved an absolute
 quadratic-logarithmic envelope upper bound $F_{\mathrm{a.s.}}^{(1)}(R) \le C(\log R)^2$.
 For solvable groups, the general bounds are $O(R\log R)$ with twisting and
 $O(R\log\log R)$ without twisting; the radical-free bound is
 $O_\varepsilon((\log R)^{3+\varepsilon})$ for arbitrary twists.
 Quotient-loss theorems and extraspecial enumerations quantify the rank cost
 of central extensions.
\end{enumerate}

It is illuminating to contrast these Dijkgraaf--Witten scales with chiral conformal
field theories. In WZW modular tensor categories, the envelope in binary logarithmic
units $q_R = \log_2 R$ exhibits quadratic growth \cite{ShenWZW}:
\begin{equation}
 \lim_{R \to \infty} \frac{F_{\WZW}(R)}{q_R^2} = \frac{7\zeta(3)}{4\pi^2} \approx 0.2131.
 \label{eq:WZW-comparison}
\end{equation}
In contrast, permutation gauge theories yield $F_S^{(1)}(R) = O(q_R^{3/2}\log q_R) = o(q_R^2)$,
and elementary-Abelian twisted theories scale linearly as $O(q_R)$.
Both explicit families therefore lie strictly below the WZW quadratic scale.

The upper-bound direction of a possible quadratic-logarithmic law is
formulated in the following conjecture. Attainment of that scale is a
separate open problem.

\begin{conjecture}[Universal quadratic-logarithmic envelope]\label{conj:upper}
There exists an absolute constant $C > 0$ such that
\begin{equation}
 \log|G| \le C(\log r_1(G))^2
\end{equation}
for every finite group $G$.
Equivalently, the untwisted Dijkgraaf--Witten envelope satisfies $F_{\DW}^{(1)}(R) = O((\log R)^2)$.
\end{conjecture}

Establishing Conjecture~\ref{conj:upper} would confirm that Dijkgraaf--Witten theories
obey the same universal quadratic-logarithmic ceiling observed in chiral WZW models.
To achieve a matching lower envelope $F_{\DW}^{(1)}(R) = \Omega((\log R)^2)$, one would
require a family of groups $\{G_n\}$ exhibiting positive quadratic efficiency:
\begin{equation}
 C(G_\bullet) := \limsup_{n \to \infty} \frac{2\log|G_n|}{(\log r_1(G_n))^2} > 0.
\end{equation}
In the WZW setting, this balance is achieved because the sector count is exponential
in a linear parameter $t$, while $\log\cD$ grows quadratically as $t^2$ \cite{ShenWZW}.
For a finite gauge group, this would require $\log r_1(G_t) \asymp t$ while
$\log|G_t| \asymp t^2$.
As shown in Section~\ref{sec:structure} and Appendix~\ref{app:technical}, the
families analyzed here—bounded derived length, nilpotency class at most two,
unitriangular and extraspecial groups, symmetric and alternating groups,
fixed-base symmetric wreath products, and the specified prime-cyclic wreath
tower—have vanishing quadratic efficiency along sequences of growing order.
This does not rule out general varying-base wreath products, affine towers,
branch-like groups, or mixed constructions.

Several promising avenues emerge from this work:
\begin{itemize}
 \item \emph{Deterministic cocycle constructions.}
 While Theorem~\ref{thm:mean} proves the existence of alternating trilinear forms
 with linear rank $O_p(p^n)$ via the probabilistic method, identifying explicit,
 efficiently specified deterministic families of forms achieving this optimal scaling remains an open challenge
 in finite geometry.
 \item \emph{Higher-holonomy protected configurations.}
 The cyclic protection index $\Xi(G)$ provides a rigorous lower bound by considering
 pairs of holonomies generating cyclic subgroups.
 Investigating whether non-cyclic commuting configurations can be protected by
 higher-cohomological obstructions could significantly sharpen the lower bound on
 twisted rank.
 \item \emph{Extremal non-solvable architectures.}
 Exploring sophisticated finite group architectures—such as varying-base wreath
 products, automorphism towers of simple groups, and non-split extensions—may provide
 candidate families with positive quadratic efficiency, potentially closing the gap
 between the permutation lower bound and the global upper bound.
\end{itemize}

Ultimately, determining the maximal total quantum dimension at bounded rank provides
a rigorous bridge between finite group theory, category theory, and topological quantum
matter, shedding light on the universal entanglement capacity of quantum phases.

\acknowledgments
The author is supported by BIMSA and the NSFC under Grant No. 12505090.
OpenAI GPT6-astra was used to assist with literature searches, earlier drafting,
mathematical derivations and proof checks, and preparation of computational
verification.

\appendix
\section{Technical Details and Further Gauge Families}
\label{app:technical}
\label{app:architectures}

\subsection{Alternation of the three-holonomy phase}
\label{app:alternation}

Here we provide the detailed algebraic verification of Lemma~\ref{lem:alternation},
confirming that the 3-holonomy phase $\Omega_\omega$ defines an alternating
tricharacter on any Abelian subgroup $A \le G$, independent of field characteristic.

\begin{proof}
The normalized cocycle identity is
\[
 \omega(b,c,d)\omega(a,bc,d)\omega(a,b,c)
 =\omega(ab,c,d)\omega(a,b,cd).
\]
For commuting $a,b,c,d$, let $R(a,b,c,d)$ be the left-hand side divided
by the right-hand side. Put
\[
 P_+=\{abdc,acbd,adcb,cadb,dabc,dcab\},\qquad
 P_-=\{abcd,acdb,adbc,cabd,cdab,dacb\},
\]
where a word specifies the ordered arguments of $R$. Expansion and
cancellation give
\[
 \frac{\Omega_\omega(ab,c,d)}
 {\Omega_\omega(a,c,d)\Omega_\omega(b,c,d)}
 =\frac{\prod_{w\in P_+}R(w)}{\prod_{w\in P_-}R(w)}=1.
\]
Cyclic symmetry gives multiplicativity in the other slots:
$\Omega_\omega(g,h,x) = \Omega_\omega(h,x,g) = \Omega_\omega(x,g,h)$.
A transposition inverts \eqref{eq:omega}, and repeated arguments cancel.
Substituting $\delta\beta(a,b,c)=\beta(b,c)\beta(a,bc)/[\beta(ab,c)\beta(a,b)]$
in \eqref{eq:omega} cancels all factors of $\beta$, ensuring that $\Omega_\omega$
depends only on the cohomology class $[\omega]|_A \in H^3(A,\U)$.
\end{proof}

\subsection{Finite-rank census and verification}
\label{app:census}
To validate the contraction formula \eqref{eq:contraction} and singular-line
dictionary \eqref{eq:lines}, Table~\ref{tab:census} reports the exact rank
distribution obtained from an exhaustive enumeration of alternating trilinear
forms over finite fields $\mathbb{F}_p^n$.
Forms are parameterized by their $\binom{n}{3}$ coefficients in a fixed basis
(for orbit classifications under $\mathrm{GL}_n(\mathbb{F}_p)$, see Ref.~\cite{HoraPudlak}).
The parameter ranges are $3\le n\le5$ for $p=2$, $3\le n\le4$ for $p=3$,
and $n=3$ for $p=5$, covering characteristic two and two small odd primes.
For each pair $(p,n)$, all $p^{\binom n3}$ forms are evaluated at all $p^n$
fluxes. The largest listed coefficient space contains $2^{10}=1024$ forms;
already the next binary dimension has $2^{20}$ forms and $2^6$ fluxes per form.
The table is exhaustive within each listed parameter pair, rather than across
all elementary-Abelian groups below a common order cutoff. The dimension-6
field-trace construction is checked separately and is not a full census.

\begin{table}[htbp]
\centering
\caption{Exhaustive rank census of alternating trilinear forms on $\mathbb{F}_p^n$.
The notation $r:N$ denotes that exactly $N$ distinct forms achieve categorical rank $r$.}
\label{tab:census}
\begin{tabular}{ccl}
\toprule
$p$&$n$&Rank distribution $r:N$\\
\midrule
2&3&$22:1,\quad64:1$\\
2&4&$88:15,\quad256:1$\\
2&5&$184:868,\quad352:155,\quad1024:1$\\
3&3&$105:2,\quad729:1$\\
3&4&$945:80,\quad6561:1$\\
5&3&$745:4,\quad15625:1$\\
\bottomrule
\end{tabular}
\end{table}
These exact evaluations corroborate the first-moment averages \eqref{eq:mean}
and the parity obstructions (Proposition~\ref{prop:parity}), and determine
the minimal ranks in the listed dimensions.

\subsection{Abelian subgroups and unitriangular groups}

A large Abelian subgroup already supplies many commuting triples.
This simple observation can rule out a candidate family before its
full conjugacy-class structure is computed.
This and the next two subsections test distinct ways of increasing group
complexity: growing nilpotency class in unitriangular groups, an Abelian
normal subgroup acted on by a complement, and repeated cyclic wreath
extensions. The first and third yield exclusions of specific candidate
families; the affine calculation isolates a contribution to rank that can
be used to test a proposed semidirect product.

\begin{lemma}\label{lem:abelian}
For any Abelian subgroup $B\leq G$,
$r_1(G)\geq|B|^3/|G|$.
\end{lemma}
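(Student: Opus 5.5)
The plan is to use the Burnside form of the untwisted rank from \eqref{eq:definitions}, namely $r_1(G)=h_3(G)/|G|$, where $h_3(G)=|\Hom(\mathbb{Z}^3,G)|$ counts ordered commuting triples in $G$. The bound then reduces to a lower bound $h_3(G)\ge |B|^3$.

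\emph{Step 1: commuting triples from $B$.} Because $B$ is Abelian, every ordered triple $(g,h,x)\in B^3$ is pairwise commuting. Each such triple is therefore an element of $\Hom(\mathbb{Z}^3,G)$, so
\[
 h_3(G)\ge h_3(B)=|B|^3.
\]
This is the same subgroup monotonicity $h_3(H)\le h_3(G)$ used in the proof of Lemma~\ref{lem:basic}, now applied with $H=B$.

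\emph{Step 2: divide by $|G|$.} Dividing by $|G|$ gives $r_1(G)\ge |B|^3/|G|$. Alternatively, one can apply \eqref{eq:subgroup} directly, using $r_1(B)=|B|^2$ for Abelian $B$:
\[
 r_1(G)\ge \frac{r_1(B)}{[G:B]}=\frac{|B|^2\,|B|}{|G|}.
\]

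There is no real obstacle here. The only point to watch is that the count uses ordered triples, not orbits of triples, so that no division by $|B|$ enters before the final division by $|G|$.
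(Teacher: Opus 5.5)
Your proposal is correct and follows the paper's own proof: count the $|B|^3$ ordered commuting triples of $B$ inside $h_3(G)$, then divide by $|G|$ using $r_1(G)=h_3(G)/|G|$. Your alternative via \eqref{eq:subgroup} with $r_1(B)=|B|^2$ is the same count written in a different form.
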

\begin{proof}
All triples in $B^3$ commute. Count them inside $h_3(G)$ and divide by
$|G|$.
\end{proof}
The unitriangular group $\mathrm{UT}_n(q)$ consists of upper-triangular
$n\times n$ matrices over $\mathbb{F}_q$ with ones on the diagonal.
Its order is $|\mathrm{UT}_n(q)| = q^{n(n-1)/2}$.
While its nilpotency class $n-1$ grows with matrix size, the presence of large
Abelian subgroups forces the rank to grow with the same quadratic scaling.

\begin{corollary}\label{cor:UT}
For $n\ge 2$ and prime power $q$,
\[
 |\mathrm{UT}_n(q)|=q^{n(n-1)/2},\qquad
 r_1(\mathrm{UT}_n(q))\geq
 q^{3\lfloor n^2/4\rfloor-n(n-1)/2}.
\]
In particular, $\log r_1(\mathrm{UT}_n(q))=\Theta(n^2\log q)$
as $n\to\infty$, uniformly for $q\geq2$.
\end{corollary}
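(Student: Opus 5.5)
The plan is to apply Lemma~\ref{lem:abelian} to a large Abelian subgroup of $\mathrm{UT}_n(q)$, and to pair it with the trivial upper bound $r_1(G)\le |G|^2$ for the matching order of magnitude. The order formula is immediate: each of the $n(n-1)/2$ entries strictly above the diagonal is an arbitrary element of $\mathbb{F}_q$. For the Abelian subgroup, take $k=\lfloor n/2\rfloor$ and let $B$ be the set of matrices $I+X$, where $X$ is supported in the block of rows $1,\dots,k$ and columns $k+1,\dots,n$. The block has $k(n-k)=\lfloor n^2/4\rfloor$ entries, so $|B|=q^{\lfloor n^2/4\rfloor}$. For two such matrices we have $XY=0$, since the column indices of $X$ exceed $k$ while the row indices of $Y$ are at most $k$. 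Hence $(I+X)(I+Y)=I+X+Y$, so $B$ is an Abelian subgroup isomorphic to $\mathbb{F}_q^{\lfloor n^2/4\rfloor}$.

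Lemma~\ref{lem:abelian} then gives
\[
 r_1(\mathrm{UT}_n(q))\ge |B|^3/|G| = q^{3\lfloor n^2/4\rfloor - n(n-1)/2},
\]
which is the stated bound. The exponent equals $3n^2/4-n^2/2+O(n) = n^2/4+O(n)$. It is positive for $n\ge2$: the cases $n=2$ and $n=3$ give exponents $2$ and $3$, and the expression grows afterwards. This yields $\log r_1\ge (n^2/4-O(n))\log q$.

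For the upper bound, the number of commuting triples is $h_3(G)\le|G|^3$, so $r_1(G)\le|G|^2$ and $\log r_1\le n(n-1)\log q$. Together, the two estimates give $\log r_1(\mathrm{UT}_n(q))=\Theta(n^2\log q)$. Both constants are independent of $q$, so the estimate is uniform in $q\ge2$.

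The only step needing care is the exact count $k(n-k)=\lfloor n^2/4\rfloor$ for $k=\lfloor n/2\rfloor$, which should be checked separately for even and odd $n$. The remaining steps are routine.
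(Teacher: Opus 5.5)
Your proposal is correct and follows the paper's own argument: you apply Lemma~\ref{lem:abelian} to the Abelian subgroup of matrices $I+X$ with $X$ supported in rows $1,\dots,\lfloor n/2\rfloor$ and columns $\lfloor n/2\rfloor+1,\dots,n$, and you pair this with the trivial bound $r_1(G)\le|G|^2$. Your exponent estimate $n^2/4+O(n)$ is a slightly coarser form of the paper's $n^2/4+n/2+O(1)$, and it is enough for the uniform $\Theta(n^2\log q)$ conclusion.
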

\begin{proof}
For $b=\lfloor n/2\rfloor$, matrices $I+X$ with $X$ supported only
in rows $1,\ldots,b$ and columns $b+1,\ldots,n$ form an Abelian
subgroup of order $q^{b(n-b)}$: every product of two such $X$ is zero.
Apply Lemma~\ref{lem:abelian}. The exponent is
$n^2/4+n/2+O(1)$. The opposite order bound is the trivial
$r_1(G)\leq|G|^2$.
\end{proof}
Thus, increasing nilpotency class alone through unitriangular groups does not yield
sub-quadratic rank growth.

\subsection{A split-affine sector diagnostic}
For a semidirect product with Abelian normal subgroup, the question is
whether the complement can combine enough flux and charge labels into
orbits to keep the rank small. Counting just the fluxes in the normal
subgroup gives a necessary constraint before the other sectors are analyzed.
Let $H$ act by automorphisms on a finite Abelian group $V$ (written additively),
forming the semidirect product $G = V \rtimes H$ with multiplication
$(v,h)(w,k) = (v + h\cdot w, hk)$.
Let $\widehat{V} = \Hom(V,\U)$ denote the Pontryagin dual group.
For any subgroup $K \le H$, let $V^K$ and $\widehat{V}^K$ denote the fixed points
in $V$ and $\widehat{V}$, respectively.
We define the contribution of flux sectors originating in $V$ by
\[
 T_V(G)=\sum_{[v]_G\subset V}k(C_G(v))\leq r_1(G).
\]
\begin{proposition}\label{prop:affine}
The flux contribution $T_V(G)$ evaluates to
\begin{equation}\label{eq:affine}
 T_V(G)=\frac1{|H|}\sum_{\substack{x,y\in H\\xy=yx}}
 |V^{\langle x,y\rangle}|\,|\widehat V^{\langle x,y\rangle}|.
\end{equation}
Consequently $T_V(G)\geq |V|^2/|H|$, $T_V(G)\geq k(H)$, and
$T_V(G)\geq n_H(V)^2$, where $n_H(V)$ is the number of $H$-orbits on $V$.
\end{proposition}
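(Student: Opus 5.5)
We compute $T_V(G)$ by counting commuting pairs in centralizers and then averaging over $H$.

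\emph{Reduction to commuting triples.} By the Burnside argument of Lemma~\ref{lem:basic}, $\sum_{[v]_G\subset V}k(C_G(v))=|G|^{-1}\#\{(v,g_1,g_2): v\in V,\ g_1,g_2\in C_G(v),\ g_1g_2=g_2g_1\}$. Summing over the $G$-orbit of $v$ turns the class sum into a sum over all $v\in V$ weighted by $|C_G(v)|/|G|$. Hence $|G|\,T_V(G)$ equals the number of commuting triples of $G$ whose first entry lies in $V$.

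\emph{Evaluating the triple count.} Write $g_1=(a,x)$ and $g_2=(b,y)$. The conditions are:
\begin{itemize}
 \item $(v,1)$ commutes with $(a,x)$ if and only if $x\cdot v=v$;
 \item similarly $y\cdot v=v$;
 \item $(a,x)(b,y)=(b,y)(a,x)$ if and only if $xy=yx$ and $a+x\cdot b=b+y\cdot a$, that is, $(1-y)a=(1-x)b$.
\end{itemize}
Fix commuting $x,y\in H$ and put $K=\langle x,y\rangle$. The $v$-count is $|V^K|$. The pairs $(a,b)\in V^2$ with $(1-y)a-(1-x)b=0$ form the kernel of the map $\Phi:V^2\to V$, $(a,b)\mapsto(1-y)a-(1-x)b$. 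This kernel has size $|V|^2/|\operatorname{im}\Phi|$.

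By Pontryagin duality, $|V/\operatorname{im}\Phi|$ equals the number of characters $\chi\in\widehat V$ killing $\operatorname{im}\Phi$, that is, with $\chi\circ(1-y)=\chi\circ(1-x)=1$. These are exactly the characters in $\widehat V^{K}$. So $|\ker\Phi|=|V|\,|\widehat V^K|$.

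The total is then $\sum_{xy=yx}|V^K|\,|V|\,|\widehat V^K|$, and dividing by $|G|=|V||H|$ gives \eqref{eq:affine}. The only delicate point is this duality step, which identifies the cokernel with the fixed dual characters; it is standard for finite Abelian groups.

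\emph{Consequences.} All three bounds follow by choosing which terms of \eqref{eq:affine} to keep, since every term is a positive integer.
\begin{itemize}
 \item Keeping only $x=y=1$ gives $|V|^2/|H|$.
 \item Using each summand $\ge1$ gives $T_V(G)\ge h_2(H)/|H|=k(H)$.
 \item For the third bound, group the pairs by $x$ and use Burnside's lemma: $\frac1{|H|}\sum_{x}|V^x|=n_H(V)$, and likewise for $\widehat V$. Then apply Cauchy--Schwarz to the averages of $|V^K|$ and $|\widehat V^K|$.
\end{itemize}

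The main obstacle is this last bound, $n_H(V)^2$. My first attempt is more direct than Cauchy--Schwarz: note that $T_V(G)\ge\sum_{[v]}k(C_G(v))$ and that $C_G(v)\supseteq V\rtimes C_H(v)$. Since $V$ is an Abelian normal subgroup of $C_G(v)$, $k(C_G(v))$ is at least the number of $C_H(v)$-orbits on $\widehat V$. Summing over the $n_H(V)$ orbit representatives $v$ then gives at least $n_H(V)\cdot n_H(\widehat V)$. Finally, $n_H(\widehat V)=n_H(V)$, because Brauer's permutation lemma applied to the permutation characters gives $|\widehat V^x|=|V^x|$ for each $x\in H$.
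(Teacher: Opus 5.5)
Your proof is correct, but it reaches \eqref{eq:affine} by a different route from the paper.

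\emph{The identity.} The paper works with characters. It uses $C_G(v)=V\rtimes H_v$ and the little-group form of Clifford theory, noting that $\lambda$ extends to its stabilizer because the extension splits, so no projective multiplier arises. This rewrites $T_V(G)=\sum_{[(v,\lambda)]\in(V\times\widehat V)/H}k(H_{v,\lambda})$. It then applies a Burnside lemma for $H$-orbits of pairs $(z,x)$ with $xz=z$ to $X=V\times\widehat V$. You instead count commuting triples of $G$ with first entry in $V$ and solve $(1-y)a=(1-x)b$ as $\ker\Phi$. The group $\widehat V^{K}$ then appears only as the dual of the cokernel $V/\bigl((1-x)V+(1-y)V\bigr)$. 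Your route is more elementary, since it needs no representation theory for the identity. The paper's route gives the intermediate flux--charge orbit decomposition. That decomposition labels the sectors and makes $T_V\ge k(H)$ simply the contribution of the orbit of $(0,1)$.

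\emph{The third bound.} The paper keeps the terms $(x,1)$ and uses $|\widehat V^x|=|V^x|$ to reach $|H|^{-1}\sum_x|V^x|^2$, then applies Cauchy--Schwarz and Burnside. Your centralizer argument is also valid: $k(V\rtimes H_v)\ge n_{H_v}(\widehat V)\ge n_H(\widehat V)=n_H(V)$, summed over the $n_H(V)$ flux classes. It simplifies further: the $C_G(v)$-classes inside $V$ are the $H_v$-orbits on $V$, so $k(C_G(v))\ge n_H(V)$ without $\widehat V$ or Brauer's lemma. Brauer is unnecessary anyway, because your duality step with $y=1$ already gives $|\widehat V^x|=|V/(1-x)V|=|V^x|$.

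\emph{A caveat on your bullet-point sketch.} As phrased, the Cauchy--Schwarz sketch does not stand on its own. Cauchy--Schwarz bounds $\sum_x|V^x|\,|\widehat V^x|$ from above. The needed lower bound by the product of the two averages is not a general inequality. It holds here only after you restrict to $y=1$ and use $|\widehat V^x|=|V^x|$ to make the sum $\sum_x|V^x|^2$, which is exactly the paper's step.
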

\begin{proof}
For $v\in V$, $C_G(v)=V\rtimes H_v$. The little-group form of
Clifford theory gives
\[
 T_V(G)=\sum_{[(v,\lambda)]\in(V\times\widehat V)/H}k(H_{v,\lambda}).
\]
Here $\lambda$ extends to $V\rtimes(H_v)_\lambda$ by setting its value
on the stabilizer to $1$, so no projective multiplier occurs. For any
finite $H$-set $X$, Burnside's lemma gives
\[
 \sum_{[z]\in X/H}k(H_z)
 =\frac1{|H|}\sum_{xy=yx}|X^{\langle x,y\rangle}|.
\]
Both sides count $H$-orbits of pairs $(z,x)$ with $xz=z$.
Apply this to $X=V\times\widehat V$. The identity pair gives
$|V|^2/|H|$; the orbit of $(0,1)$ gives $k(H)$. Retaining only
$(x,1)$ in the Burnside sum gives $|H|^{-1}\sum_x|V^x|^2$, because
an automorphism of a finite Abelian group has equally many fixed vectors
and fixed characters. Cauchy--Schwarz and Burnside's lemma give the
last bound.
\end{proof}
For a cyclic complement $H=C_p$ acting on $V=\mathbb F_p^m$, write
$f=\dim V^{C_p}$. Every non-identity element generates the complement, so
\eqref{eq:affine} specializes to $T_V=p^{2m-1}+(p^2-1)p^{2f-1}$.
This specialization displays the rank cost of the fixed subspace: its
contribution grows as $p^{2f}$ at fixed $p$, even before other fluxes are counted.
The full rank still includes fluxes outside $V$.

\subsection{Repeated prime-cyclic wreath products}

Consider the iterated wreath product $W_{d+1} = W_d \wr C_p = W_d^p \rtimes C_p$,
where $C_p$ cyclically permutes $p$ copies of $W_d$, initialized by $W_1 = C_p$.
These groups arise naturally as Sylow $p$-subgroups of symmetric groups $S_{p^d}$.
This tower tests whether repeated cyclic extensions, beyond a fixed number
of extension steps, can yield positive quadratic efficiency. The exact
recursion below shows why this particular tower fails that test.
The local-system counting method is related to the general wreath-product
enumeration of Ref.~\cite{FS}. For related commuting-pair probabilities with
Abelian base and acting groups, see Ref.~\cite{ErovenkoSury}; the recursion
below instead concerns the double rank, or commuting-triple normalization.

\begin{proposition}[Wreath product recursion]\label{prop:cyclicwreath}
For every finite $H$ and prime $p$, with $C_p$ acting transitively on $p$ coordinates,
\begin{equation}\label{eq:cyclicwreath}
 r_1(H\wr C_p)=\frac{r_1(H)^p+(p^3-1)r_1(H)}{p}.
\end{equation}
\end{proposition}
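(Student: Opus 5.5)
We will use $r_1(G)=h_3(G)/|G|$ from \eqref{eq:definitions} and count commuting triples in $G=H\wr C_p=H^p\rtimes C_p$, where $|G|=p|H|^p$. Write a general element as $(\mathbf h;s)$ with $\mathbf h=(h_1,\dots,h_p)\in H^p$ and $s\in C_p$. Each triple then has an image $(s_1,s_2,s_3)\in C_p^3$, and we split $h_3(G)$ according to whether this image is trivial.

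\emph{Trivial image.} If all $s_i=1$, the triple lies in $H^p$. It commutes if and only if it commutes coordinatewise, which gives $h_3(H)^p$ triples.

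\emph{Nontrivial image.} Fix a triple $(s_1,s_2,s_3)\neq(1,1,1)$; there are $p^3-1$ of them. Since $C_p$ has prime order, the nonidentity $s_i$ generate $C_p$, so the induced $\mathbb Z^3$-action on the $p$ coordinates is transitive. We use the local-system picture already invoked for Proposition~\ref{prop:symwreath}. Commuting lifts of a transitive $\mathbb Z^3$-set of size $p$ with stabilizer lattice $L\cong\mathbb Z^3$ correspond to homomorphisms $L\to H$, together with a choice of ``gauge'' on the coordinates other than a basepoint.

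Concretely, the plan is to show that the number of commuting lifts of a fixed $(s_1,s_2,s_3)$ equals $|H|^{p-1}\,|\Hom(\mathbb Z^3,H)|=|H|^{p-1}h_3(H)$. We argue this directly. Choose $j$ with $s_j$ a generator, so that $g_j=(\mathbf h;s_j)$. The element $g_j$ has $p$-th power whose first coordinate is a product $\eta$ of the $h_i$ (a "cycle product"). Conjugating by an element of $H^p$ normalizes $\mathbf h$ for $g_j$ to $(\eta,1,\dots,1)$; there are $|H|^{p-1}$ normalizations, and exactly $|H|^{p-1}$ of the tuples $\mathbf h$ give a prescribed $\eta$. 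Once $g_j$ is fixed, its centralizer in the coset fibre over $s_i$, for each other index $i$, is determined by a single element of $C_H(\eta)$. The remaining commutation condition then says that these elements and $\eta$ pairwise commute in $H$. This yields a bijection with commuting triples of $H$ in which $\eta$ is one entry.

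The step we expect to be the main obstacle is checking this bookkeeping uniformly over all $p^3-1$ images, including images where some $s_i=1$. In that case the corresponding element lies in $H^p$ and commutes with $g_j$ exactly when it is "diagonal along the cycle", which is again a single element of $C_H(\eta)$. We will verify that each case gives exactly $|H|^{p-1}h_3(H)$, for example via the Burnside/orbit formulation $h_3=\sum|\mathrm{Stab}|$ if the direct normalization gets messy.

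Summing the two contributions,
\[
h_3(G)=h_3(H)^p+(p^3-1)|H|^{p-1}h_3(H).
\]
Dividing by $|G|=p|H|^p$ and substituting $h_3(H)=|H|\,r_1(H)$ gives
\[
r_1(G)=\frac{|H|^p r_1(H)^p+(p^3-1)|H|^p r_1(H)}{p|H|^p}=\frac{r_1(H)^p+(p^3-1)r_1(H)}{p},
\]
which is \eqref{eq:cyclicwreath}.
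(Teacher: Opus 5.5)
Your proposal is correct, and it takes a different route from the paper.

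\textbf{The paper's route.} The paper computes $r_1(H\wr C_p)$ directly as the number of conjugacy orbits on $\Hom(\mathbb Z^2,G)$, split by the projected map $\rho:\mathbb Z^2\to C_p$. For trivial $\rho$, it applies Burnside's lemma to the rotation action on $p$-tuples of commuting-pair orbits of $H$, giving $(r^p+(p-1)r)/p$. For each of the $p^2-1$ nontrivial $\rho$, it uses the induced local-system description to get exactly $r$ orbits. This also requires checking that moving the base point acts only by conjugation on the return data. The constant $p^3-1$ then arises as $(p-1)+p(p^2-1)$.

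\textbf{Your route.} You count $\Hom(\mathbb Z^3,G)$ with no quotient at all. The trivial-image part is simply $h_3(H)^p$. Each of the $p^3-1$ nontrivial images contributes $|H|^{p-1}h_3(H)$. All orbit bookkeeping is absorbed in the single division by $|G|=p|H|^p$.

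\textbf{The flagged obstacle.} The case split you expect to be hard is actually uniform. Normalize $g_j=((\eta,1,\dots,1);s_j)$ and write $\Delta:H\to H^p$ for the diagonal. Then $g_j^p=\Delta(\eta)$, and $C_G(g_j)=\Delta(C_H(\eta))\langle g_j\rangle$ with $g_j$ central in this subgroup. Hence every fibre of $C_G(g_j)\to C_p$, over $1$ or not, is a coset $g_j^m\Delta(C_H(\eta))$. Two elements $\Delta(a)g_j^m$ and $\Delta(b)g_j^{m'}$ commute iff $ab=ba$. So no separate treatment of $s_i=1$ is needed.

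\textbf{Comparison.} Your route is more elementary and makes $p^3-1=|\mathbb F_p^3\setminus\{0\}|$ transparent. It also extends verbatim to $|\Hom(\mathbb Z^k,H\wr C_p)|=|\Hom(\mathbb Z^k,H)|^p+(p^k-1)|H|^{p-1}|\Hom(\mathbb Z^k,H)|$. The paper's orbit count keeps integrality manifest. It also resolves the rank by sector, showing how many anyons sit over trivial versus each nontrivial projected holonomy; your triple count does not display that information.
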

\begin{proof}
Write $r=r_1(H)$ and classify homomorphisms from $\Gamma=\mathbb Z^2$
to $H^p\rtimes C_p$ up to conjugacy. If the projected map to $C_p$
is trivial, conjugation by $H^p$ leaves a $p$-tuple of commuting-pair
orbits of $H$. Rotation by $C_p$ has $(r^p+(p-1)r)/p$ orbits on these
tuples, by Burnside's lemma.

There are $p^2-1$ nontrivial projected maps $\rho:\Gamma\to C_p$;
each is transitive on the coordinates. Fixing a base point identifies
its stabilizer with $L=\ker\rho$, an index-$p$ subgroup of $\mathbb Z^2$.
Equivariant $H$-labels on this transitive orbit, modulo changes of labels
at the $p$ points, are homomorphisms $L\to H$ up to conjugacy.
This can be seen by choosing coset representatives of $L$ in $\Gamma$:
transport along them fixes labels away from the base point, leaving
exactly the return transports given by the homomorphism on $L$.
Since $L\cong\mathbb Z^2$, there are $r$ choices. A rotation changes
the base point but induces only conjugation on this return data;
conjugation of $L$ by $\Gamma$ is trivial because $\Gamma$ is Abelian.
Thus each nontrivial $\rho$ contributes $r$ orbits. Adding the two
contributions proves \eqref{eq:cyclicwreath}.
\end{proof}

Applying Proposition~\ref{prop:cyclicwreath} to the Sylow tower $W_1 = C_p$, $W_{d+1} = W_d \wr C_p$:
\begin{align*}
 \log|W_d|&=\frac{p^d-1}{p-1}\log p,\\
 \log r_1(W_d)&\geq
 \frac{(2p-3)p^{d-1}+1}{p-1}\log p.
\end{align*}
Together with $r_1(W_d)\leq|W_d|^2$, this proves $\log r_1(W_d)=\Theta_p(\log|W_d|)$,
so this tower has quadratic efficiency $C(W_\bullet)=0$.

\subsection{Envelope constants and variational formulation}
\label{app:envelope-constants}

Here we establish the formal variational identity connecting the asymptotic envelope
to sequence efficiencies.

\begin{proposition}[Variational limsup identity]\label{prop:variational}
Allowing infinite values,
\begin{equation}\label{eq:variational}
 \limsup_{R\to\infty}\frac{F_{\DW}^{(1)}(R)}{(\log R)^2}
 =\sup_{\{G_n\}} C(G_\bullet),
\end{equation}
where the supremum on the right-hand side is taken over all sequences of finite groups
$\{G_n\}$ with $|G_n| \to \infty$.
\end{proposition}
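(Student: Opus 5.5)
We prove the two inequalities separately, writing $L := \limsup_{R\to\infty} F_{\DW}^{(1)}(R)/(\log R)^2$ and $S := \sup_{\{G_n\}} C(G_\bullet)$. Recall that $F_{\DW}^{(1)}(R)=\max\{2\log|G| : r_1(G)\le R\}$. This maximum is finite and attained by Theorem~\ref{thm:global}, and it is nondecreasing in $R$.

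\emph{Step 1: $S\le L$.} Fix a sequence $\{G_n\}$ with $|G_n|\to\infty$. Choose the cutoff $R_n := r_1(G_n)$. Since $G_n$ is admissible at rank $R_n$, we have $F_{\DW}^{(1)}(R_n)\ge 2\log|G_n|$, so
\[
 \frac{2\log|G_n|}{(\log r_1(G_n))^2}\le \frac{F_{\DW}^{(1)}(R_n)}{(\log R_n)^2}.
\]
To pass to the $\limsup$ we need $R_n\to\infty$. This holds because only finitely many groups have rank below any fixed bound: by Theorem~\ref{thm:global}, $r_1(G)\le R$ bounds $|G|$, so $|G_n|\to\infty$ forces $r_1(G_n)\to\infty$. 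Taking $\limsup_n$ then gives $C(G_\bullet)\le L$. Taking the supremum over sequences gives $S\le L$. One small issue is that $r_1(G_n)=1$ would make $\log r_1(G_n)=0$, but this only happens for the trivial group and so occurs for at most finitely many $n$.

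\emph{Step 2: $L\le S$.} Take $R_j\to\infty$ realizing the $\limsup$. For each $j$, choose a maximizer $G_j$ with $r_1(G_j)\le R_j$ and $2\log|G_j|=F_{\DW}^{(1)}(R_j)$. Because $r_1(G_j)\le R_j$ and $R_j>1$ eventually, whenever $r_1(G_j)\ge 2$ we get
\[
 \frac{F_{\DW}^{(1)}(R_j)}{(\log R_j)^2}\le\frac{2\log|G_j|}{(\log r_1(G_j))^2}.
\]
To show this sequence is admissible, we need $|G_j|\to\infty$. This follows from $F_{\DW}^{(1)}(R)\ge F_p^{(1)}(R)\to\infty$ by Corollary~\ref{cor:envelope} (or from the permutation lower bound, Corollary~\ref{cor:symmetric}). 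Hence $C(G_\bullet)\ge L$, and therefore $S\ge L$.

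Both steps remain valid when the value is $+\infty$: in Step 2 we pass along a subsequence on which the ratio diverges. The only delicate point is the two finiteness facts, namely $r_1(G_n)\to\infty$ and $|G_j|\to\infty$. These are supplied by Theorem~\ref{thm:global} and the explicit elementary-Abelian lower envelope, so the main obstacle reduces to invoking those results correctly. Everything else is monotonicity of $\log$ and the fact that the denominator on the group side is smaller.
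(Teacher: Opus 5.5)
Your proposal is correct and follows essentially the same two-step argument as the paper. First, you test each sequence at the cutoff $R_n=r_1(G_n)$, using Theorem~\ref{thm:global} to guarantee $r_1(G_n)\to\infty$. Conversely, you take maximizing groups along a sequence $R_j$ realizing the $\limsup$, with $|G_j|\to\infty$ forced by a divergent lower envelope. The differences from the paper are cosmetic: you invoke the elementary-Abelian envelope $F_p^{(1)}$ rather than the symmetric-group bound to get that divergence, and you explicitly dispose of the trivial-group case where $\log r_1=0$.
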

\begin{proof}
Theorem~\ref{thm:global} implies that growing order forces growing rank.
For any sequence, $F_{\DW}^{(1)}(r_1(G_n))\geq2\log|G_n|$. Taking limsups
shows that the left-hand side of \eqref{eq:variational} is at least
$C(G_\bullet)$ for every family, hence at least their supremum.
Conversely, there are finitely many isomorphism classes
of groups of bounded order, and the same theorem bounds order at each
rank budget. Hence choose an actual maximizing group $G_R$ for $F_{\DW}^{(1)}(R)$.
Along any sequence $R_j\to\infty$ realizing the left-hand limsup,
$|G_{R_j}|\to\infty$ because $F_{\DW}^{(1)}(R)\to\infty$ by the symmetric lower
bound. Since $r_1(G_{R_j})\leq R_j$,
\[
 \frac{2\log|G_{R_j}|}{(\log r_1(G_{R_j}))^2}
 \geq\frac{F_{\DW}^{(1)}(R_j)}{(\log R_j)^2}.
\]
This proves the reverse inequality.
\end{proof}

A family giving a positive efficiency only along a sparse sequence
does not automatically yield a lower bound at every large budget.
For example, a sufficient interpolation condition is a sequence of
increasing ranks $r_n$ with bounded ratios $\log r_{n+1}/\log r_n$ and
$2\log|G_n|\geq c(\log r_n)^2$. Choosing the last admissible term then
gives $F_{\DW}^{(1)}(R)=\Omega((\log R)^2)$. If the ratios tend to $1$ and the
efficiencies converge to $C>0$, this argument instead yields
$\liminf F_{\DW}^{(1)}(R)/(\log R)^2\geq C$.

\subsection{Additional computational validations}
\label{app:checks}

The following finite cases and symbolic identities were checked. These tests
support the derivations but do not replace proofs of the asymptotic statements:
\begin{itemize}
 \item \emph{Group extensions and central lifting.}
 The quotient loss bound (Theorem~\ref{thm:quotient}), central lifting formula, and
 weighted correction (Theorem~\ref{thm:central}) were systematically verified for all
 144 non-isomorphic groups of order $|G| \le 32$, testing all 2,531 normal subgroups
 and all 1,395 central subgroups.
 \item \emph{Extraspecial and wreath families.}
 The extraspecial double-rank formula \eqref{eq:extra} was checked for
 $p\in\{2,3,5\}$ and $m\in\{1,2\}$; its intermediate triple-count
 identity was also checked for $p\in\{2,3,5,7\}$ and $1\le m\le3$.
 The cyclic wreath recursion \eqref{eq:cyclicwreath} was verified against direct
 conjugacy-class calculations in GAP.
 \item \emph{Alternating forms and singular lines.}
 The rank distributions in Table~\ref{tab:census}, the 6-dimensional line-spread
 construction (Proposition~\ref{prop:six}), and the first-moment formula \eqref{eq:mean}
 were evaluated via exhaustive finite-field linear algebra.
 \item \emph{Permutation generating functions.}
 The Euler product coefficients $a_n = r_1(S_n)$ were computed as exact integers through
 degree $n = 1000$ using recurrence relations for $\sigma_1(m)$.
\end{itemize}

\bibliographystyle{JHEP}
\bibliography{references}

\end{document}